\pgfplotsset{compat=1.18}
\renewcommand*{\backref}[1]{}
\renewcommand*{\backrefalt}[4]{%
 \ifcase #1%
 \or [Page~#2.]%
 \else [Pages~#2.]%
 \fi%
}
\theoremstyle{plain}
\newtheorem{lemma}{Lemma}
\newtheorem{proposition}[lemma]{Proposition}
\theoremstyle{definition}
\newcommand{\g}{\mathfrak{g}}
\newcommand{\fk}{\mathfrak{k}}
\renewcommand{\d}{\partial}
\newcommand{\so}{\mathfrak{so}}
\newcommand{\su}{\mathfrak{su}}
\newcommand{\fsl}{\mathfrak{sl}}
\renewcommand{\t}{\mathfrak{t}}
\let\squigto\rightsquigarrow
\newcommand{\q}{\boldsymbol{\pi}}
\newcommand{\x}{\boldsymbol{x}}
\newcommand{\bj}{\boldsymbol{j}}
\newcommand{\p}{\boldsymbol{p}}
\newcommand{\bk}{\boldsymbol{k}}
\newcommand{\ba}{\boldsymbol{a}}
\newcommand{\J}{\boldsymbol{J}}
\newcommand{\bzero}{\boldsymbol{0}}
\newcommand{\bB}{\boldsymbol{B}}
\newcommand{\bD}{\boldsymbol{D}}
\newcommand{\bP}{\boldsymbol{P}}
\newcommand{\bom}{\boldsymbol{m}}
\newcommand{\eH}{\mathscr{H}}
\newcommand{\eO}{\mathcal{O}}
\newcommand{\Cl}{C\ell}
\renewcommand{\Re}{\operatorname{Re}}
\renewcommand{\Im}{\operatorname{Im}}
\newcommand{\ad}{\operatorname{ad}}
\newcommand{\Tr}{\operatorname{Tr}}
\newcommand{\1}{\mathbb{1}}
\renewcommand{\AA}{\mathbb{A}}
\newcommand{\CP}{\mathbb{CP}}
\newcommand{\RR}{\mathbb{R}}
\newcommand{\NN}{\mathbb{N}}
\newcommand{\ZZ}{\mathbb{Z}}
\newcommand{\CC}{\mathbb{C}}
\newcommand{\Aut}{\operatorname{Aut}}
\newcommand{\GL}{\operatorname{GL}}
\newcommand{\SO}{\operatorname{SO}}
\newcommand{\SU}{\operatorname{SU}}
\newcommand{\U}{\operatorname{U}}
\newcommand{\Spin}{\operatorname{Spin}}
\newcommand{\End}{\operatorname{End}}
\newcommand{\zbar}{\overline{z}}
\providecommand*{\pd}{\partial}
\renewcommand*{\pd}{\partial}
\providecommand*{\jb}{{\bm{j}}}
\renewcommand*{\jb}{{\bm{j}}}
\newcommand{\bd}{{\bm{d}}}
\newcommand{\bu}{\boldsymbol{u}}
\newcommand{\bv}{\boldsymbol{v}}
\providecommand*{\pd}{\mathop{}\!\partial}
\renewcommand*{\pd}{\mathop{}\!\partial}
\newcolumntype{L}{>{$}l<{$}}
\title{\boldmath Quantum Carroll/fracton particles}
 \author[a,1]{José Figueroa-O'Farrill,\note{ORCID: \href{https://orcid.org/0000-0002-9308-9360}{0000-0002-9308-9360}}}
 \author[b,c,2]{Alfredo Pérez\note{ORCID: \href{https://orcid.org/0000-0003-0989-9959}{0000-0003-0989-9959}}}
 \author[a,3]{and Stefan Prohazka\note{ORCID: \href{https://orcid.org/0000-0002-3925-3983}{0000-0002-3925-3983}}}
\affiliation[a]{Maxwell Institute and School of Mathematics, The University of Edinburgh, James Clerk Maxwell Building, Peter Guthrie Tait Road,
  Edinburgh EH9 3FD, Scotland, United Kingdom}
\affiliation[b]{Centro de Estudios Científicos (CECs), Avenida Arturo Prat 514, Valdivia, Chile}
\affiliation[c]{Facultad de Ingeniería, Arquitectura y Diseño, Universidad San Sebastián, sede Valdivia, General Lagos 1163, Valdivia 5110693, Chile}
 \emailAdd{j.m.figueroa@ed.ac.uk}
 \emailAdd{alfredo.perez@uss.cl}
 \emailAdd{stefan.prohazka@ed.ac.uk}
 \abstract{We classify and relate unitary irreducible representations
   (UIRs) of the Carroll and dipole groups, i.e., we define elementary
   quantum Carroll and fracton particles and establish a
   correspondence between them.  Whenever possible, we express the
   UIRs in terms of fields on Carroll/aristotelian spacetime subject
   to their free field equations.

   We emphasise that free massive (or ``electric'') Carroll and
   fracton quantum field theories are ultralocal field theories and
   highlight their peculiar and puzzling thermodynamic features. We
   also comment on subtle differences between massless and
   ``magnetic'' Carroll field theories and discuss the importance of
   Carroll and fractons symmetries for flat space holography.}
\begin{document}
\maketitle

\section{Introduction}
\label{sec:introduction}

In this work we classify and relate quantum Carroll and fracton
particles, i.e., we classify unitary irreducible representations of
the Carroll~\cite{Levy1965,SenGupta1966OnAA} and
dipole~\cite{Gromov:2018nbv} groups and describe them, whenever
possible, as fields in Carroll spacetime or the aristotelian spacetime
underlying the fracton system. This is a continuation of our earlier
paper \cite{Figueroa-OFarrill:2023vbj}, henceforth referred to as
Part~I, where we studied the classical elementary systems with Carroll
and dipole (for us, ``fracton''\footnote{We will not be able to give full
  justice to the broad field of ``fractons'' for which we refer the
  reader to~\cite{Nandkishore:2018sel,Pretko:2020cko,Grosvenor:2021hkn} for
  reviews. \label{fn:1}}) symmetries. In particular, in this paper we lift the
Carroll/fracton correspondence proposed in Part~I from the classical
to the quantum realm.

A Lie group $G$ is said to be a symmetry of a quantum mechanical model
if the underlying Hilbert space of states admits a unitary
representation\footnote{This might be relaxed to only projective
  representations of $G$, but we can always restrict to honest
  representations by passing to a central extension of $G$.  We shall
  assume that we have done this from now on.  For the case of the
  Carroll and dipole groups there is no need to do this, since there
  are no nontrivial central extensions in $3+1$ dimensions and above.}
of $G$.  In this sense it is typical to think of unitary
representations as describing the symmetries of a quantum mechanical
model, what we will somewhat loosely refer to as quantum symmetries in
this paper.  Indeed, some of these unitary representations may be
constructed by geometrically quantising coadjoint orbits and, for some
(but famously not all) Lie groups, all unitary representations arise
in this way.  In particular, if a group acts as symmetries of a given
spacetime, we expect it to be realised as quantum symmetries of any
natural quantum system defined on that spacetime.  Conversely, it is
often the case that unitary representations of a Lie group $G$ can
actually be realised on classical fields defined on a spacetime on
which $G$ acts by symmetries.

In any unitary representation of a Lie group, the elements of the Lie
algebra give rise to hermitian operators which may be thought of as
physical observables of the corresponding quantum system and in whose
spectrum we might be interested. For example, the generator of time
translations gives rise to the hamiltonian, which governs the energy
spectrum of the theory.

Unitary representations need not be irreducible, but they typically
decompose into irreducible components, which we may think of as the
building blocks of the quantum symmetries of the given
group.  We call them the \emph{elementary} quantum systems and they
are the quantum counterpart of the coadjoint orbits describing the
elementary classical systems.  One of the earliest descriptions of
elementary quantum systems is the Wigner
classification~\cite{Wigner:1939cj} of unitary irreducible
representations (UIRs, for short) of the Poincaré group, which
describe the (free) particles we observe in nature, and forms a
cornerstone of relativistic quantum field theory on Minkowski
spacetime.

Thus if we wish to study quantum mechanical models with a certain
symmetry group, it is useful to first understand the elementary
quantum systems of that group.  This motivates the study of UIRs
of the Carroll and dipole groups and allows us to propose a definition
of what we mean by elementary quantum Carroll and fracton particles.

We are able to treat carrollions and fractons simultaneously for the
most part since free complex Carroll and fracton scalar theories and
their symmetries essentially\footnote{The dipole group is a trivial
  central extension of the Carroll group.} coincide~\cite{Bidussi:2021nmp} (see
also~\cite{Marsot:2022imf}). This led us in Part~I to propose a
correspondence, summarised in Table~\ref{tab:carrvsfrac}, between all
elementary carrollions and fractons, which we show in this work to
persist at the quantum level.

\begin{table}
  \centering
  \caption{Conserved quantities in carrollian and fractonic theories~\cite{Figueroa-OFarrill:2023vbj}}
     \begin{tabular}{l  l }
       \toprule
       Carroll particles                        & Fractonic particles    \\ \midrule \rowcolor{blue!7}
       \multicolumn{2}{c}{angular momentum $\bm{J}$}                         \\ 
       \multicolumn{2}{c}{momentum $\bm{P}$}                                \\ \rowcolor{blue!7}
       center-of-mass $\bm{B}$\quad\quad\quad\quad & dipole moment $\bm{D}$    \\ 
       energy  $H$                              & charge      $Q$        \\ \rowcolor{blue!7}
       ---                                      & energy       $H_{F}$   \\ 
       \bottomrule
  \end{tabular}
  \label{tab:carrvsfrac}
\end{table}

Let us emphasise that out of these elementary ingredients one can of
course build composite objects (which are then reducible), to which
the above correspondence extends.  The elementary dipoles should be
contrasted with composite dipoles built out of two elementary
monopoles, both of which have mobility, in contradistinction to
isolated monopoles. They correspond to two Carroll particles of
opposite energy~\cite{Figueroa-OFarrill:2023vbj}.  Seen from this
perspective it is even less surprising that composite Carroll
particles with opposite mass can move~\cite{Bergshoeff:2014jla} (see
also~\cite{deBoer:2021jej,Zhang:2023jbi}). This is a
manifestation of the fact that the
dipole moment for nonzero charge depends on the choice of origin (see,
e.g., \cite{griffiths1999introduction}). Hence a monopole, which has
nonzero charge ($q \neq 0$), cannot move without changing the dipole
moment. As soon as the charge is zero, e.g., by adding another
monopole with opposite charge, mobility in ways that do not change
the total dipole moment is restored. This shows that it can be useful
to think about Carroll and fracton systems from complementary
perspectives.

Since carrollian physics is relevant for flat space holography at
null~\cite{Duval:2014lpa}, timelike~\cite{Figueroa-OFarrill:2021sxz}
or spacelike~\cite{Gibbons:2019zfs} infinity it might be interesting
to understand them from a fracton perspective. We will provide further
comments concerning these interesting topics in
Section~\ref{sec:discussion}, but let us highlight that some of the
structure of flat space holography can already be seen by using only
Carroll symmetries, e.g., there is a radiative and non-radiative
branch, related to the option of having vanishing or non-vanishing
mass. In addition, the equations of the non-radiative branch share
similarities with the massless or magnetic carrollian field theory.

Another motivation comes from the successful and wide-ranging
applications of tools that fall under the banner of ``scattering
amplitudes''.  See, e.g.,~\cite{Travaglini:2022uwo} for a review. To
apply this remarkable toolkit to carrollian or fractonic theories, one
needs first of all an understanding of the possible quantum particles,
i.e., the ``in'' and ``out'' states of scattering amplitudes. The UIRs
classified in this paper provide the complete answer to this question
and it could be interesting to employ this technology.

The construction of UIRs of the Carroll and dipole groups employed in
this paper is essentially that pioneered by Wigner.  The upshot of
this method is that UIRs are carried by fields in momentum space, more
precisely fields defined over orbits on momentum space of the
``homogeneous'' subgroup: e.g., Lorentz in the case of Poincaré.
These fields transform under (unitary, irreducible) representations of
the little group associated to these orbits.  In the Poincaré case,
from which we derive most of our intuition, the orbits are the mass
shells and the little groups are the subgroups of the Lorentz group
which preserve a given massive, massless or tachyonic momentum.  There
is however one crucial difference between the Poincaré and Carroll
groups: in the latter, boosts commute and hence there are other
choices of abelian subgroups from which we can induce.  This is
reflected in the existence of automorphisms of the Carroll
group~\cite{Figueroa-OFarrill:2023vbj} which mix momenta and boosts.
Although we will induce from characters of the translation group in
this work, one could equally induce from characters of the group
generated by boosts and time translations and in some cases, such as
the centrons (in Carroll language) or elementary dipoles (in fracton
language), it would perhaps be more natural to do that and hence
express the corresponding UIRs as fields on centre-of-mass (or
dipole-moment) space.

It is natural to wish to describe these representations in terms of
classical fields defined over the relevant spacetime: Minkowski in the
case of Poincaré, the eponymous spacetime in the case of Carroll or
the aristotelian spacetime in the case of the
fractons~\cite{Bidussi:2021nmp,Jain:2021ibh}. Such fields transform
according to representations of the homogeneous subgroup, which is the
stabiliser of a chosen origin in the spacetime and hence in passing
from the momentum space description to the spacetime description,
there is always a choice to be made: we need to embed the
representation of the little group (the so-called ``inducing
representation'') into some representation of the homogeneous
subgroup, a process known as ``covariantisation'' in the Physics
literature. The embedding representation need not be unitary; although
a typical consideration is that it should be as small as possible and,
in any case, finite-dimensional, in order to arrive at spacetime
fields with a finite number of components. It typically happens that
the embedding representation is of larger dimension than the inducing
representation and hence that the spacetime field has more degrees of
freedom than the momentum space field. One way to cut down to the
required number of degrees of freedom is by imposing field equations
on the spacetime fields, so that the sought-after irreducible
representation is carried not by all spacetime fields, but only by
those which obey their field equations. There is a systematic way to
arrive at the field equations, once the inducing representation has
been covariantised, in terms of a group-theoretical generalisation of
the Fourier transform. It is well known from the case of the Poincaré
group, that this procedure is the origin of many of the familiar
relativistic free field equations: Klein--Gordon, Dirac, Maxwell,
Proca, linearised Einstein,... In this paper we will give similar
descriptions for some of the UIRs of the Carroll and dipole groups. In
particular, we will see that some of the massless low-helicity UIRs of
the Carroll group are given by solutions of the three-dimensional
euclidean Helmholtz, Dirac and topologically massive Maxwell
equations. This is perhaps not surprising in that the massless
helicity UIRs of the Carroll group are actually UIRs of the
three-dimensional euclidean group, but what is perhaps novel is the
interpretation of these well-known partial differential equations as
irreducibility conditions for three-dimensional euclidean fields and,
by extension, for massless carrollian fields (or neutral fractons).

It is worth highlighting the fact that the field equations we find do
not necessarily agree with the Carroll-invariant field equations in
the literature (see, e.g.,
\cite{Duval:2014uoa,Henneaux:2021yzg,deBoer:2021jej}). The fundamental
reason for this discrepancy is our different points of departure. Our
principal aim in this paper is the classification of UIRs of the
Carroll and dipole groups. These representations are given in terms of
fields on momentum space and that suffices for the classification.
Those fields have the precise number of degrees of freedom (roughly
the dimension of the inducing representation) required to describe the
UIRs. To express the UIRs in terms of spacetime fields, we must make a
choice of how to covariantise the inducing representation and our
approach has been to choose the simplest covariantisation: roughly,
the one which adds the smallest number of extra degrees of freedom.
For example, for massive Carroll UIRs (equivalently, charged fracton
UIRs), the inducing representation is already covariant if we demand
that the boosts (equivalently, the dipole generators) act trivially.
This results in no additional field equations beyond the one coming
from having fixed the energy. This may result in unfamiliar/surprising
field transformation laws, since such an economical choice of
covariantisation is not available for the Poincaré group. Indeed, what
makes this possible is that the boosts commute in the homogeneous
Carroll group, but do not in the Lorentz group, where the commutator
of two boosts is a rotation.

By contrast, in the approach where one departs from building invariant
actions for spacetime fields, it is perhaps not obvious (and indeed
would have to be checked) that the resulting field equations project
onto an irreducible subrepresentation of the representation carried by
the off-shell spacetime fields.

This paper is organised as follows. In the remainder of this
Introduction we provide a self-contained summary of the UIRs of the
Carroll (Section~\ref{sec:quant-carr-part}) and dipole
(Section~\ref{sec:quant-fract-part}) groups and whenever possible we
summarise their description in terms of spacetime fields. Readers who
are happy to skip some of the details could then continue with
Section~\ref{sec:carr-fract-quant} where we discuss Carroll and
fracton quantum field theories and finish with the discussion in
Section~\ref{sec:discussion}, where we highlight, e.g., the relevance
for flat space holography.

The more detailed treatment starts in
Section~\ref{sec:review-coadjoint-orbits}, where we review the basic
results of Part~I on the coadjoint orbits of the Carroll group, their
structure and the action of automorphisms.  The construction of the
UIRs starts in Section~\ref{sec:unit-irred-repr}, based on the method
of induced representations described in some detail in
Appendix~\ref{sec:meth-induc-repr}. As this topic is somewhat
technical, we distill from the appendix a sort of algorithm to
construct the UIRs and this is briefly recapped in
Section~\ref{sec:brief-recap-method}.
Section~\ref{sec:induc-repr-homog} outlines the method and checks that
the momentum space orbits admit invariant measures.
Section~\ref{sec:induc-repr} works out the inducing representations:
the UIRs of the little groups of the momentum orbits. Two of the
little groups are themselves semidirect products and require iterating
the method of induced representations. In Section~\ref{sec:summary} we
put everything together and list the UIRs of the Carroll group:
divided into those with nonzero energy (termed ``massive'' in Part~I
and treated in Section~\ref{sec:uirreps-eneq0}) and those with zero
energy (termed ``massless'' in Part~I and treated in
Section~\ref{sec:uirreps-e=0}). In Section~\ref{sec:simpl-descr-massl}
we comment on a more unified description of the massless UIRs as
induced representations from a larger subgroup of the Carroll group.
We end the section with a conjectural correspondence between the UIRs
and the (quantisable) coadjoint orbits. In
Section~\ref{sec:carrollian-fields} we describe some of the UIRs found
in the previous section in terms of classical fields on Carroll
spacetime. This requires a continuation of the brief recap of the
method of induced representations, describing the covariantisation
procedure and the group-theoretical Fourier transform, and contained
in Section~\ref{sec:brief-recap-method-contd}. We then do an example
of a massive carrollian field (in
Section~\ref{sec:an-explicit-example}) and of a massless carrollian
field with helicity (in Section~\ref{sec:another-example}), which are
the only two classes of UIRs which seem to admit a description in
terms of finite-component carrollian fields. We work out the resulting
field equations for the cases of helicities $0$, $\tfrac12$ and $1$
and recover the three-dimensional Helmholtz, Dirac equation and
topologically massive Maxwell equations, respectively.
Section~\ref{sec:fractonic-fields} is devoted to fractonic particles
and fields. In Section~\ref{sec:unit-irred-repr-3} we classify the
UIRs of the dipole group by observing that there is a bijective
correspondence between UIRs of the Carroll group and classes of UIRs
of the dipole group distinguished solely by the fracton energy. We
then describe some of these UIRs in terms of fields on aristotelian
spacetime. We do the example of a charged monopole in
Section~\ref{sec:charg-arist-fields} and of what could be termed a
neutral aristotelion in Section~\ref{sec:neutr-arist-fields}. In
Section~\ref{sec:carr-fract-quant} we discuss Carroll and fracton
quantum field theories in a second-quantisation language, some of
their similarities, and highlight the relation to ultralocal field
theories~\cite{Klauder:1970cs,Klauder:1971zz}. Additionally, we comment on the intricate thermodynamic
properties of these theories and show that there is a subtle
difference between free massless Carroll theories and the magnetic
Carroll theory. In Section~\ref{sec:discussion} we provide a
discussion where we emphasise interesting connections to flat space
holography and other intriguing topics for further exploration. The
paper ends with two appendices: Appendix~\ref{sec:meth-induc-repr}
contains a short review of the method of induced representations,
whereas Appendix~\ref{sec:hopf-charts-su2} contains some formulae in
special coordinates for the $3$-sphere adapted to the Hopf fibration,
which we make use of in our discussion of massless UIRs of the Carroll
group.

\subsection{Summary}
\label{sec:summary-1}

In this section we summarise the quantum Carroll and fracton particles
and their field theories.  In addition to the vacuum sector, UIRs fall
into two broader classes outlined in Table~\ref{tab:introtable}:
\begin{itemize}
\item $\Romanbar{II}$: massive Carroll particles $\hat H= E_{0}$ and
  charged monopoles $\hat Q= q$
\item $\Romanbar{III}-\Romanbar{V}$: massless Carroll particle
  $\hat H=0$ and neutral fractons $\hat Q = 0$
\end{itemize}
In each case the properties of the particles are quite distinct and
many of the curious physical features of carrollian and fracton
theories can be traced back to this fact.

We will present hermitian operators corresponding to our symmetries.
They are related to the skew-hermitian Lie algebra generators via
multiplication by $i$.

\begin{table}[h]
  \centering
    \caption{Unitary irreducible representations of the Carroll and
    monopole/dipole group}
    \label{tab:introtable}
  \begin{adjustbox}{max width=\textwidth}
    \begin{tabular}{L  L | L l | L l}\toprule
                        &                         & \multicolumn{2}{c}{\text{Carroll}} & \multicolumn{2}{c}{Fractons, $E \in \RR$}                                  \\ \midrule \rowcolor{blue!7}
      \Romanbar{I}      & 2s \in \NN_0            & \multicolumn{4}{c}{vacuum sector}                                                                               \\
      \Romanbar{II}     & 2s \in \NN_0            & E_{0} \neq 0                       & massive spin $s$                       & q \neq 0 & monopole spin $s$      \\ \rowcolor{blue!7}
      \Romanbar{III}    & n \in \ZZ, p>0          & \multicolumn{4}{c}{massless helicity $\tfrac{n}{2}\simeq$ aristotelions}                                        \\ 
      \Romanbar{III}'   & n \in \ZZ               & k>0                                & centrons                               & d>0      & elementary dipoles     \\ \rowcolor{blue!7}
      \Romanbar{IV}_\pm & n \in \ZZ, p>0,\pm      & k>0                                & (anti)parallel helicity $\tfrac{n}{2}$ & d>0      & (anti)parallel dipoles \\
      \Romanbar{V}_\pm  & \theta \in (0,\pi), p>0 & k>0                                & generic massless                       & d>0      & generic dipole         \\ \rowcolor{blue!7}
      \bottomrule
    \end{tabular}
  \end{adjustbox}
  \caption*{By $\Romanbar{I}$ to $\Romanbar{V}$ we enumerate
    inequivalent UIRs of the Carroll and dipole groups. They
    broadly fall into two classes with different physical properties:
    $\Romanbar{II}$ massive carrollions and charged monopoles versus
    $\Romanbar{III}-\Romanbar{V}$ massless carrollions and neutral
    fractons.                                                                                                                                                       \\
    In this table $\NN_0$ denotes the non-negative natural numbers
    (i.e., including zero) and $\ZZ$ the integers. This shows that
    spin $s$ and helicity $\tfrac{n}{2}$ are quantised.  All other
    quantities are real. It is implicit that there is an additional,
    but mostly irrelevant, phase for the fractons. Further
    explanations are given in the summary
    Section~\ref{sec:summary-1}.}
\end{table}

\subsubsection{Quantum Carroll particles and fields}
\label{sec:quant-carr-part}

We are not the first to study the UIRs of the Carroll group, the massive
UIRs were already constructed by Lévy-Leblond~\cite{Levy1965} and the
massless sector was highlighted in~\cite{deBoer:2021jej} (see also
Appendix A in~\cite{Duval:2014uoa}), but this work provides the first
classification. A similar feat has already been accomplished for the
Poincaré~\cite{Wigner:1939cj} and Galilei/Bargmann groups (we refer to
the review~\cite{levygalgr} and references therein) and this work
closes the final gap for quantum symmetries based on the maximally symmetric
affine spacetimes~\cite{Figueroa-OFarrill:2018ilb}.

We will now provide a summary of quantum Carroll particles, i.e., UIRs
of the Carroll group, and discuss some of their properties. Further
details are presented in Section~\ref{sec:unit-irred-repr}. Broadly
speaking they fall into two classes, massive carrollions
($\hat H=E_{0}$) and massless carrollions ($\hat H=0$), which have
very distinctive features.

We parametrise the Carroll group as
\begin{equation}
  g=g(R,\bv,\ba,s) = e^{s H} e^{\ba \cdot \bP} e^{\bv \cdot \bB} R
\end{equation}
where $R$ is a rotation, $\bB$ denote the Carroll boost generators,
$\bP$ the generators of spatial translations and $H$ the generator of
time translations. The Carroll Lie algebra is given by ($i,j,k=1,2,3$)
\begin{align}
  \label{eq:3-carroll-algebra-sum}
  [J_i, J_j] &= \epsilon_{ijk} J_k  &
  [J_i, B_j] &= \epsilon_{ijk} B_k &
  [J_i, P_j] &= \epsilon_{ijk} P_k &
  [B_i,P_j]&=\delta_{ij} H \, .
\end{align}

The quantum Carroll particles, by which we mean UIRs of the
(simply-connected) Carroll group, fall into several different classes
listed below.  The notation $\NN_0$ denotes the non-negative integers
and $V_s$ stands for the complex spin-$s$ irreducible representation
of $\Spin(3) \cong \SU(2)$, of dimension $2s+1$.

\begin{description}
\item[$\Romanbar{I}(s)$ vacuum sector] with $2s \in \NN_0$ with
  underlying Hilbert space $V_s$. When $s=0$ this
  represents the vacuum, whereas for $s>0$ these are spinning 
  vacua. In this representation only the rotations act nontrivially
  and they do so via the spin-$s$ irreducible representation.

\item[$\Romanbar{II}(s,E_0)$ massive spin $s$] with $2s \in \NN_0$ and
  $E_0 \in \RR \setminus\{0\}$~\cite{Levy1965}. This shows that the
  spin of massive quantum Carroll particles is indeed quantised. The
  underlying Hilbert space is given by square-integrable functions
  $\psi \in L^2(\AA^3,V_s)$ and $\p \in \AA^3$ parametrises the
  hyperplane in momentum space with $E =E_0$ (see
  Figure~\ref{fig:mom_lim}).

  The unitary action of $G$ is given by\footnote{We could have
     added an additional label to our wavefunctions such that
    the specific $E=E_{0}$ hyperplane is explicit, e.g., we could have
    written $\psi_{E_{0}}(\p)$. To reduce clutter we will leave the
    energy implicit.}
  \begin{equation}
    (g \cdot \psi)(\p) = e^{i ( E_0 s + \p \cdot \ba)} \rho(R) \psi(R^{-1}(\p + E_0\bv)) 
  \end{equation}
  where $R \mapsto \rho(R)$ denotes the spin-$s$ representation of
  $\Spin(3)$ and the inner product
  \begin{equation}
    \label{eq:massive-inner-product-sum}
    (\psi_1,\psi_2) = \int_{\AA^3} d^3p \left<\psi_1(\p),\psi_2(\p)\right>_{V_s},
  \end{equation}
  where $\left<-,-\right>_{V_s}$ is an $\SU(2)$-invariant hermitian inner
  product on $V_s$.  The hermitian operators that correspond to our
  conserved charges are in this basis given by
  \begin{align}
    \label{eq:op-p-basis-sum}
    \hat \J &= - i \p \times \frac{\pd}{\pd \p} + \hat{\bm{S}}  & \hat{\bm{B}} &= -i E_{0} \frac{\pd}{\pd \p} &  \hat H &= E_{0} & \hat{\bm{P}}&= \p   \, ,
  \end{align}
  where $\hat{\bm{S}}$ are the infinitesimal generators of the
  spin-$s$ representation $\rho(R)$. Massive spin-$s$ carrollions can
  then be labeled by
  \begin{equation}
    \hat H = E_{0} \qquad\text{and}\qquad \hat{\bm{S}}^{2} = s (s+1) \, ,
  \end{equation}
  which are multiples of the identity. For massive carrollions we can
  also define a position operator $\hat{\bm{X}}$
  \begin{equation}
    \hat{ \bm{X}} = \frac{1}{E_{0}} \hat{\bm{B}}
  \end{equation}
  which agrees with the intuition that the centre of mass of a massive
  Carroll particle is the energy multiplied by the position and
  satisfies the canonical commutation relations
  \begin{equation}
    [\hat X_{i},\hat P_{j}] = - i \delta_{ij} \, .
  \end{equation}

  We may alternatively diagonalise with respect to $\hat{\bB}$, which
  is related to the above via a Fourier transform
  (see~\eqref{eq:psi-fourier}) and express the representation in the
  ``boost basis'' as
  \begin{equation}
  \label{eq:g-rep-massive-boost}
  (g \cdot \tilde\psi)(\bk) = e^{i ( - E_0 s + \bk \cdot \bv)} \rho(R) \tilde\psi(R^{-1}(\bk - E_0\ba))  \, ,
  \end{equation}
  where the inner product is now given by
  \begin{equation}
    \label{eq:k-massive-inner-product-sum}
    (\tilde\psi_1,\tilde\psi_2) = \int_{\AA^3} d^3k \left<\tilde\psi_1(\bk),\tilde\psi_2(\bk)\right>_{V_s} \, .
  \end{equation}
  We provide further details for this UIR in
  Section~\ref{sec:uirreps-eneq0}.

  This representation can also be described using fields on Carroll
  spacetime, i.e., as massive Carroll field theories. These
  $V_s$-valued fields $\phi(t,\x)$ are obtained from the $V_s$-valued
  momentum space fields $\psi(\p)$ via a group-theoretical Fourier
  transform which, in this case, agrees with the classical Fourier
  transform:
  \begin{equation}
    \phi(t,\x) = e^{-i E_0 t} \int_{\AA^3} d^3p e^{-i\p\cdot\x} \psi(\p) \, .
  \end{equation}
  The field $\phi$ satisfies the obvious (and only) field equation
  \begin{equation}
    \frac{\d\phi}{\d t} = -i E_0 \phi.
  \end{equation}
  The action of the Carroll group on the spacetime fields is given by
  \begin{equation}
    (g \cdot \phi)(t,\x) = \rho(R) \phi(t-s-\bv\cdot(\x -\ba),
    R^{-1}(\x -\ba)) \, 
  \end{equation}
  where we want to emphasise that the fields are scalars under boosts,
  since these act nontrivially only on the coordinates.
  
  When we do not restrict to just one orbit and allow both energies
  $E=\pm E_{0}$ we are led to ultralocal (quantum) field
  theories~\cite{Klauder:1970cs,Klauder:1971zz} or ``electric Carroll
  field'' theories~\cite{Henneaux:2021yzg,deBoer:2021jej}, as
  discussed in more detail in Section~\ref{sec:mass-carr-monop}.
  
\item[$\Romanbar{III}(n,p)$ massless helicity $\frac{n}{2}$] with real
  $p>0$ and $n \in \ZZ$, so the helicity is now quantised. The
  underlying Hilbert space consists of complex-valued
  functions\footnote{More precisely, square-integrable (relative to an
    $\SU(2)$-invariant measure) sections of the line bundle
    $\mathscr{O}(-n)$ over the complex projective line, but the
    description in this summary suffices.} on the
  complex plane, where the action of $G$ is given by
    \begin{equation}
    (g \cdot \psi)(z) = e^{i \ba \cdot \q(z)} \left( \tfrac{\eta +
          \overline{\xi} z}{| \eta + \overline{\xi} z|} \right)^{-n}
      \psi\left( \tfrac{\overline{\eta}z -\xi}{\eta + \overline{\xi} z} \right).
  \end{equation}
  In this case $z$ is a stereographic coordinate on the sphere
  $\|\p \| =p$, the action of time translations and boosts is trivial
  and $\q(z)$, given in equation~\eqref{eq:q-vector}, satisfies
  $\|\q(z) \|^{2}=p^{2}$. The rotation group $\SU(2)$ acts on $z$ via
  linear fractional transformations:
  \begin{align}
    \label{eq:rot}
    R =
    \begin{pmatrix}
      \eta & \xi \\ - \overline \xi & \overline \eta
    \end{pmatrix} \in SU(2) \quad\text{acts as}\quad z \mapsto \tfrac{\eta z + \xi}{\overline\eta - \overline{\xi} z} \, .
  \end{align}
  Consequently, $\hat{\bm{B}} = \hat{H}=0$, however
  $\hat{\bm{P}} = \q(z)$ so that these representations are indeed
  specified by the helicity $\tfrac{n}{2}$ and
  $\hat{\bm{P}}^{2} = p^{2}$. The inner product on the Hilbert space
  is given by
  \begin{equation}
    \label{eq:inner-product-sphere-sum}
    \left<\psi_1, \psi_2\right> := \int_\CC \frac{2i dz \wedge
      d\zbar}{(1+|z|^2)^2} \overline{\psi_1(z)}\psi_2(z) \, .
  \end{equation}

  This UIR can also be described using spacetime fields. One
  possibility is to covariantise the inducing representation of
  $\U(1)$ of weight $n$ with boosts acting trivially into the
  spin-$|n/2|$, representation $V_{|n/2|}$ of $\SU(2)$ as the highest
  (if $n\geq 0$) or lowest (if $n\leq 0$) weight vectors in
  $V_{|n/2|}$. The $V_{|n/2|}$-valued spacetime fields $\phi(t,\x)$
  are given in terms of the momentum space fields $\psi(z)$ by
  \begin{equation}
    \phi(t,\x) = \int_\CC \frac{2i dz \wedge d\zbar}{(1 + |z|^2)^2}
    e^{-i\x \cdot \q(z)} \rho(\sigma(z))\psi(z),
  \end{equation}
  where $\sigma(z) \in \SU(2)$ is defined by
  \begin{equation}
    \sigma(z) = \frac1{\sqrt{1+|z|^2}}
    \begin{pmatrix}
      z & -1 \\ 1 & \zbar
    \end{pmatrix}.
  \end{equation}
  Notice that the spacetime fields do not depend on $t$, all massless
  Carroll fields fulfil
  \begin{align}
    \label{eq:notime}
      \frac{\d\phi}{\d t} = 0 \, ,
  \end{align}
  so they are essentially euclidean three-dimensional fields. The
  action of $G$ factors through the action of the three-dimensional
  euclidean group:
  \begin{equation}
    (g\cdot\phi)(\x) = \rho(R) \phi(R^{-1}(\x - \ba)) \, ,
  \end{equation}
  where we see that boosts act trivially. The additional field
  equations which project to the irreducible subrepresentation can be
  worked out for the lowest values of the helicity. For helicity 0 we
  obtain the Helmholtz equation
  \begin{equation}
    (\bigtriangleup + p^2) \phi(\x) = 0,
  \end{equation}
  for a scalar field, where $\bigtriangleup$ is the laplacian acting
  on functions in three-dimensional euclidean space.  For helicity
  $1/2$, we obtain the three-dimensional euclidean Dirac equation
  \begin{equation}
    \left( \slashed{\d} + i p \right) \phi(\x)= 0,
  \end{equation}
  where now $\phi(\x)$ is a 2-component field taking values in the
  spin-$1/2$ representation of $\SU(2)$.  Finally, for helicity $1$ we
  find the topologically massive Maxwell equation of
  \cite{Deser:1981wh,Deser:1982vy}:
  \begin{equation}
    \nabla \times \phi = p \phi,
  \end{equation}
  where $\phi$ is now a three-dimensional vector field, which is
  metrically dual to the Hodge dual of the Maxwell field-strength.
  
\item[$\Romanbar{III}'(n,k)$ centrons] with real $k>0$ and
  $n \in \ZZ$. They are in many ways analogous to the massless
  carrollions just described, so we will be brief. The underlying
  Hilbert space consists again of complex-valued functions on the
  complex plane and the action of $G$ is given by
  \begin{equation}
    (g \cdot \psi)(z) = e^{i \bv \cdot \q(z)} \left( \tfrac{\eta +
        \overline{\xi} z}{| \eta + \overline{\xi} z|} \right)^{-n}
    \psi\left( \tfrac{\overline{\eta}z -\xi}{\eta + \overline{\xi} z}
    \right),
  \end{equation}
  where $z$ is now a stereographic coordinate on the sphere
  $\|\bk \|=k$ and is $\q(z)$ given in equation~\eqref{eq:q-vector}
  with $p \mapsto k$. In this case the action of the time and spatial
  translations is trivial and the representations are uniquely
  specified by $\hat{\bm{P}} = \hat{H}=0$,
  $\hat{\bm{B}}^{2} = k^{2}$ and $n\in\ZZ$.  The inner product on
  the Hilbert space is again given by~\eqref{eq:inner-product-sphere-sum}.

  We can also write down field theories for the centrons and they are
  mutatis mutandis the same as for the massless carrollions, with the
  interesting twist that they live naturally in ``centre-of-mass
  space''. In this sense they are more reminiscent of internal degrees
  of freedom, such as spin.

\item[$\Romanbar{IV}_\pm(n,p,k)$ (anti)parallel massless helicity
  $\frac{n}{2}$] with $n \in \ZZ$ and real $ p,k >0$.  The underlying
  Hilbert space   is again given by complex-valued functions on the
  complex plane and the action of $G$ is given by
  \begin{equation}
    (g \cdot \psi)(z) = e^{i (\ba \pm \tfrac{k}{p} \bv)\cdot \q(z)} \left( \tfrac{\eta +
        \overline{\xi} z}{| \eta + \overline{\xi} z|} \right)^{-n}
    \psi\left( \tfrac{\overline{\eta}z -\xi}{\eta + \overline{\xi} z} \right).
  \end{equation}
  In this case $z$ is a stereographic coordinate on the sphere
  $\|\p \| =p$. By inspection we see that $\Romanbar{III}(n,p)$ is the
  limit of $\Romanbar{IV}_\pm(n,p,k)$ as $k\to 0$, which results from
  formally putting $k=0$ in the above expression for $g\cdot\psi$.
  The action of time translations is trivial, consequently
  $\hat{H}=0$. For the momentum and centre-of-mass operators we obtain
  $\hat{\bm{P}} = \q(z)$ and $\hat{\bm{B}} = \pm \tfrac{k}{p} \q(z)$,
  respectively. Since $\bk = \pm \tfrac{k}{p} \p$, the sign tells us
  whether   we are in the parallel $(+)$ or antiparallel $(-)$ cases.
  In summary, we can characterise these UIRs by
  $\hat{\bm{P}}^{2} = p^{2}$ and $\hat{\bm{B}}^2 = k^2$ and the sign
  of $\hat{\bm{P}} \cdot \hat{\bm{B}}$.  The inner product on the
  Hilbert space is again given by~\eqref{eq:inner-product-sphere-sum}.
  
\item[$\Romanbar{V}_\pm(p,k,\theta)$ generic massless] with real
  $p, k>0$ and $\theta \in (0,\pi)$. It is interesting to note that
  there are no discrete quantum numbers for the generic massless
  particles. The underlying Hilbert space is $L^2(S^3, \CC)$, which
  are the square-integrable functions on the round $3$-sphere with
  values in a one-dimensional unitary representation of the nilpotent
  subgroup of $G$ generated by boosts and translations. The unitary
  character of this representation is such that
  \begin{equation}
    \chi\left( e^{s H + \ba \cdot \bP} e^{\bv \cdot \bB}  \right) =
    e^{i \left( \ba \cdot \p + \bv \cdot \bk \right)},
  \end{equation}
  where $\p = (0,0,p)$ and $\bk = (k\sin\theta,0,k\cos\theta)$.
  We identify $S^3$ with the $\SU(2)$ subgroup of $G$ and we write the
  action of $g = g(R,\ba,\bv,s) \in G$ on $L^2(S^3,\CC)$ as
  \begin{equation}
    (g \cdot \Psi)(S) = e^{i (\ba \cdot S \p + \bv \cdot S\bk)} \Psi(R^{-1}S),
  \end{equation}
  where $S \in \SU(2)$.  The inner product is given by
  \begin{equation}
    \left<\Psi_1,\Psi_2\right> = \int_{S^3} d\mu(S)
    \overline{\Psi_1(S)}\Psi_2(S),
  \end{equation}
  with $d\mu(S)$ the volume form of a round metric on $S^3$, or
  equivalently a bi-invariant Haar measure on $\SU(2)$. This
  representation breaks up as the orthogonal direct sum of two UIRs:
  $L^2_\pm(S^3,\CC)$, where $\Psi \in L^2_\pm(S^3,\CC)$ if and only if
  $\Psi(-S) = \pm \Psi(S)$ for all $S \in \SU(2)$.  The sign labels
  two inequivalent quantisations of the same coadjoint orbit, a
  phenomenon typically associated to a disconnected stabiliser, which
  is indeed the underlying reason here too as discussed in
  Section~\ref{sec:simpl-descr-massl}.

  In summary, apart from that sign, the representation is uniquely
  specified by
  \begin{align}
    \label{eq:generic-spec}
    \hat H &= 0  & \hat{\bm{P}}^{2}&= p^{2} & \hat{\bm{B}}^{2} &= k^{2} & \hat{\bm{P}} \cdot \hat{\bm{B}} = p k \cos \theta \, ,
  \end{align}
  where $\hat{\bm{P}} = S\p$ and $\hat{\bm{B}}=S\bk$.
  
  From the point of view of path integral quantisation the subtleties
  in the quantisation of these orbits derives from the intricate
  constraint structure of the orbits, which are basically the
  classical analog of~\eqref{eq:generic-spec}, as in Section~3.5 of
  Part~I.
\end{description}

\subsubsection{Quantum fracton particles and fields}
\label{sec:quant-fract-part}

In this section we summarise the UIRs of the dipole group and some of
their field-theoretic realisations on aristotelian spacetime.  To the
best of our knowledge there have been no attempts towards a
classification of unitary irreducible representations of the dipole
group. The dipole Lie algebra\footnote{A better name might be
  monopole-dipole algebra, since the particles described by this
  algebra include monopoles as well as dipoles.} is given by
\begin{align}
  \label{eq:fracton-algebra-intro}
  [J_i, J_j] &= \epsilon_{ijk} J_k  &
  [J_i, P_j] &= \epsilon_{ijk} P_k &
  [J_i, D_j] &= \epsilon_{ijk} D_k &
  [D_i,P_j]&=\delta_{ij} Q \, ,
\end{align}
with an additional generator $H_F$ which is central and most notably
the exchange of center-of-mass $B_{i}$ with dipole moment $D_{i}$ and
Carroll energy $H$ with charge $Q$, as shown in
Table~\ref{tab:carrvsfrac}.

Let us now discuss the quantum generalisation of the correspondence
between Carroll and fracton
particles~\cite{Figueroa-OFarrill:2023vbj}. As explained in
Section~\ref{sec:fractonic-fields}, UIRs of the dipole group are in
bijective correspondence with the UIRs of the Carroll group, except
that we extend them to a UIR of the dipole group by declaring that
$e^{s H_F}$ should act via the unitary character
$\chi(e^{s H_F}) = e^{i s E}$ for some $E\in \RR$ which is to be
interpreted as the fracton energy. We therefore use the same notation,
but replacing the Carroll energy $E_0$ with the monopole charge $q$
and the magnitude of the centre-of-mass $k$ with the magnitude of the
dipole moment $d$ and adding a label $E$: hence the UIRs of the dipole
group are $\Romanbar{I}(s,E)$, $\Romanbar{II}(s,q,E)$,
$\Romanbar{III}(n,p,E)$, $\Romanbar{III}'(n,d,E)$,
$\Romanbar{IV}_\pm(n,p,d,E)$ and $\Romanbar{V}_\pm(p,d,\theta,E)$, as
can be seen in Table~\ref{tab:introtable}.

\paragraph{Monopoles with charge $q$ and spin $s$.}
\label{sec:monopoles}

For example, monopoles of charge $q\neq 0$ and spin $s$ (where $2s$ is
a non-negative integer) are given by the UIR $\Romanbar{II}(s,q,E)$.
The underlying Hilbert space is given by square-integrable functions
$\psi \in L^2(\AA^3,V_s)$ which means that they are functions on the
hyperplane in momentum space $\p \in \AA^3$ with fixed charge $q$ and
energy $E$ (see Figure~\ref{fig:mom_lim} with $E \mapsto q$). They are
valued in $V_s$, i.e., in the complex spin-$s$ UIR of $\SU(2)$. The
action of the dipole group is given as follows. If we let
\begin{equation}
  \label{eq:g-dipole-group}
  g=g(R,\bom,\ba,\theta,s) = e^{s H_F + \theta Q + \ba \cdot \bP}
  e^{\bom \cdot \bD} R
\end{equation}
denote the generic element of the dipole group, we have, for $\psi$ a
$V_s$-valued field, that
\begin{equation}
  \label{eq:dipole-action}
  (g \cdot \psi)(\p) = e^{i ( q \theta +  E s + \p \cdot \ba)} \rho(R) \psi(R^{-1}(\p + q\bom)) \, .
\end{equation}
Let us emphasise that the dipole transformation acts as expected
$\p \mapsto \p + q\bom$ and $\rho(R)$ is a manifestation of the fact
that these are spin $s$ monopoles. We could have written
$\psi_{q,E}(\p)$ to emphasise that our functions are restricted to
these specific charge $q$ and energy $E$. The infinitesimal action
of~\eqref{eq:dipole-action} is given by
\begin{align}
  \label{eq:fract-op}
  \hat \J &= - i \p \times \frac{\pd}{\pd \p} + \hat{\bm{S}} & \hat{Q} &= q   & \hat{\bm{D}} &= -i q \frac{\pd}{\pd \p} &  \hat H &= E & \hat{\bm{P}}&= \p   \, ,
\end{align}
which are hermitian operators with respect to the inner product
\begin{equation}
  \label{eq:inner-prod-p}
  (\psi_1,\psi_2) = \int_{\AA^3} d^3p \left<\psi_1(\p),\psi_2(\p)\right>_{V_s} \, .
\end{equation}
The UIRs can then be uniquely labeled by
\begin{align}
  \hat{Q} &= q  & \hat H &= E & \hat{\bm{S}}^{2} &= s (s+1) \, ,
\end{align}
which are multiples of the identity. We can
also define a position operator $\hat{\bm{X}}$
\begin{align}
  \hat{\bm{X}} = \frac{1}{q} \hat{\bm{D}}
\end{align}
which agrees with the intuition that the dipole moment is given by the
charge times the position. The position operator satisfies the
canonical commutation relation
\begin{align}
  [\hat X_{i},\hat P_{j}] = - i \delta_{ij} \, .
\end{align}

We may alternatively diagonalise with respect to
$\hat{\bD}$, which is related to the above
via a Fourier transform (see~\eqref{eq:psi-fourier}) and express the
representation in the ``dipole basis''
\begin{equation}
  \label{eq:g-rep-massive-boost-dipole}
  (g \cdot \tilde\psi)(\bd) = e^{i ( - E s + q\theta +\bd \cdot \bom)} \rho(R) \tilde\psi(R^{-1}(\bd - q\ba))  \, ,
\end{equation}
where the dipole moment is, as expected, shifted by the
translations.\footnote{This choice of basis was also employed in
  Appendix D in~\cite{Jensen:2022iww}.}  The inner product is then
given by
\begin{equation}
  \label{eq:k-massive-inner-product-sum-dipole}
  (\tilde\psi_1,\tilde\psi_2) = \int_{\AA^3} d^3d \left<\tilde\psi_1(\bd),\tilde\psi_2(\bd)\right>_{V_s} \, .
\end{equation}

We may describe these UIRs in terms of fields on the aristotelian
spacetime~\cite{Bidussi:2021nmp,Jain:2021ibh} with coordinates
$(t,\x)$. The field $\phi(t,\x)$ is obtained from $\psi(\p)$ via a
Fourier transform
\begin{equation}
  \phi(t,\x) = e^{-iE t} \int_{\AA^3} d^3p e^{-i\p\cdot\x} \psi(\p)~,
\end{equation}
and the action of the generic element $g$ of the dipole group in
equation~\eqref{eq:g-dipole-group} is given by
\begin{equation}
  (g \cdot \phi)(t,\x) = e^{i q(\theta + \bom\cdot (\x - \ba))} \rho(R)
  \phi(t-s, R^{-1}(\x - \ba)) \, ,
\end{equation}
with $\rho$ the spin-$s$ representation of $\SU(2)$. In particular,
pure charge and dipole transformations act as expected via a phase
\begin{align}
  \label{eq:dipolrot}
  \phi(t,\x) \mapsto   e^{i q(\theta + \bom\cdot\x)} 
  \phi(t, \x) \, .
\end{align}
The only field equation is
\begin{equation}
  \frac{\d \phi}{\d t} = -iE \phi.
\end{equation}

Readers who are happy to skip the details of the classification of the
UIRs could continue with our discussion of Carroll and fracton quantum
field theories in Section~\ref{sec:carr-fract-quant}.

\begin{figure}
\centering
\begin{tikzpicture}
  \node at (7,0) {\includegraphics[width=0.5 \textwidth]{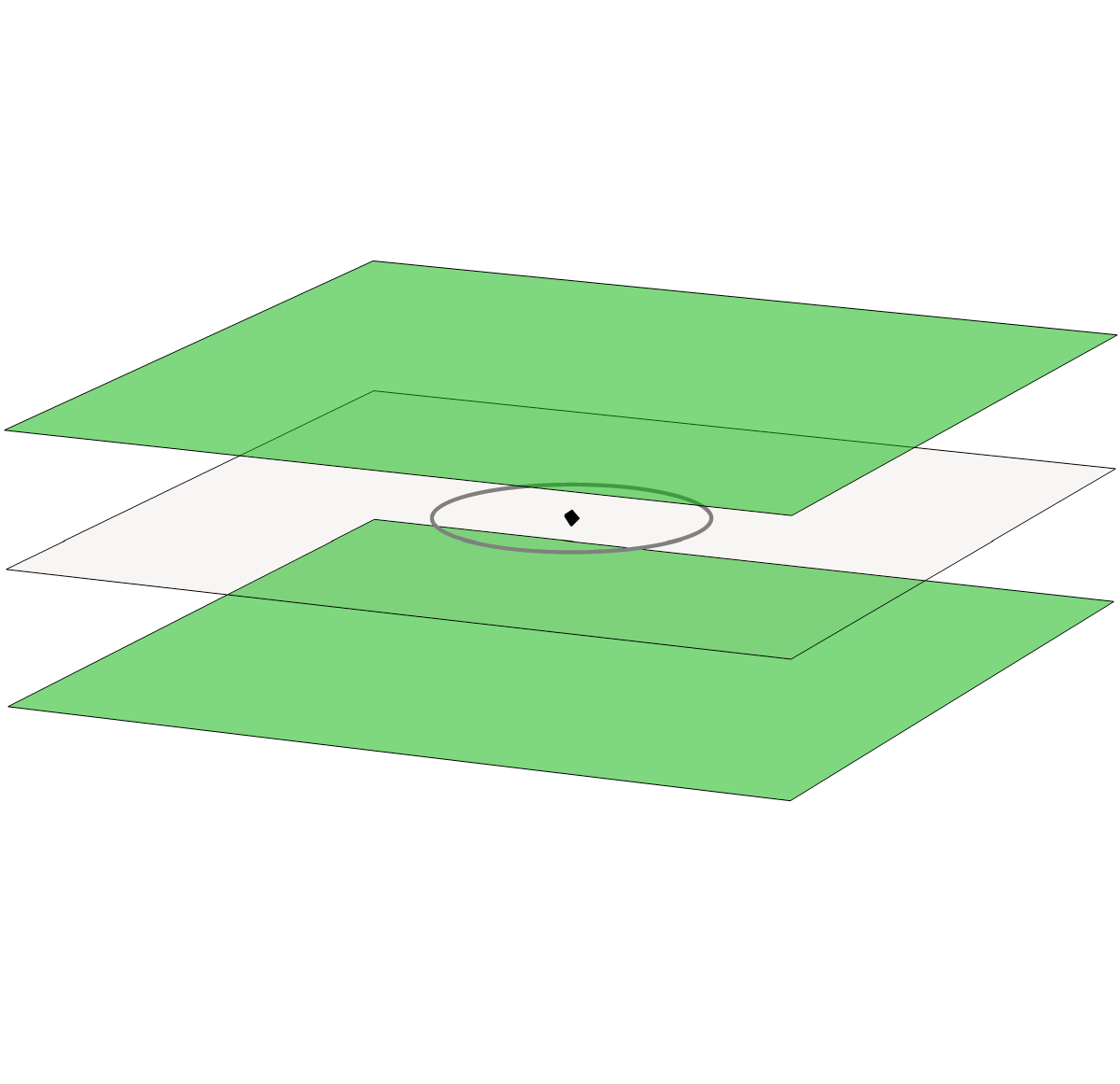}};
  \draw[-Stealth,thick] (2.7,-3) -- (2.7,3);
  \node[right] at (2.7,2.9) {$E$};
\end{tikzpicture}
\caption{This figure shows the coadjoint orbits of Carroll group in
  momentum space $(E,\p)$. Broadly they fall into two classes depending on vanishing or nonvanishing $E$.\\
  When $E=E_{0}\neq 0$, and since the energy is a Casimir, the orbits
  are given by three dimensional planes (depicted in green). When
  $E=0$ the orbits are given by $\|\bm{p}\| = const.$ two spheres, one
  of which we have represented as a black circle. The whole $E=0$
  plane is foliated by such spheres, while the origin $\|\bm{p}\| = 0$
  is the dot in the middle.
  \\
  Let us emphasise that this figure only represents the $(E,\p)$ part
  of the full dual space $(\jb,\bv,\p,E)$ and the complete structure
  of the orbits is more intricate and involves spin degrees of
  freedom.}
    \label{fig:mom_lim}
\end{figure}

\section{Review of coadjoint orbits of the Carroll group}
\label{sec:review-coadjoint-orbits}

In Part~I we classified the coadjoint orbits of the
($3+1$)-dimensional Carroll group.  The Carroll Lie algebra $\g$ is
the ten-dimensional real Lie algebra spanned by $J_i,B_i,P_i,H$ where
$i=1,2,3$ subject to the following non-zero brackets:
\begin{align}
  \label{eq:3-carroll-algebra}
  [J_i, J_j] &= \epsilon_{ijk} J_k  &
  [J_i, B_j] &= \epsilon_{ijk} B_k &
  [J_i, P_j] &= \epsilon_{ijk} P_k &
  [B_i,P_j]&=\delta_{ij} H \, ,
\end{align}
where the Levi-Civita symbol $\epsilon_{ijk}$ is normalised so that
$\epsilon_{123}=1$.  The connected Carroll group $G$ is isomorphic to
a semidirect product $G \cong K \ltimes T$, where $K \cong \SO(3)
\ltimes \RR^3$ and $T \cong \RR^4$.  The Lie algebra $\t$ of $T$ is
spanned by $P_i, H$ and the Lie algebra $\fk$ of $K$ by $J_i, B_i$.
The group $K$ is isomorphic to the (connected) three-dimensional
euclidean group.

Elements $\alpha \in \g^*$ in the dual of the Carroll Lie algebra are
parametrised by the ``momenta'' of classical particles; that is,
$\alpha = (\bj, \bk, \p, E)$ where
$\bj = \left<\alpha,\boldsymbol{J}\right>$ is the angular momentum,
$\bk = \left<\alpha, \boldsymbol{B}\right>$ is the centre of mass,
$\p = \left<\alpha, \boldsymbol{P}\right>$ is the linear momentum and
$E = \left<\alpha, H\right>$ is the energy.  Coadjoint orbits belong
to several classes distinguished in the first instance by the value of
the Casimir elements $H$ and $W^2$, which is the euclidean norm of
\begin{align}
  \label{eq:W}
  W_i := H J_i + \epsilon_{ijk}P_j B_k \, .
\end{align}
On
$\alpha = (\bj, \bk, \p, E)$,
\begin{equation}
  H(\alpha) = E \qquad\text{and}\qquad W^2(\alpha) = \| E \bj + \p
  \times \bk\|^2.
\end{equation}
Notice that since the energy is constant on each orbit, it is
trivially bounded below, that being a typical physical requirement.
See also Figure~\ref{fig:mom_lim}.

In Part~I, we arrived at the classification of coadjoint
orbits displayed in Table~\ref{tab:coadjoint-orbits}.

\begin{table}
  \centering
  \caption{Coadjoint orbits of the Carroll group}
  \setlength{\extrarowheight}{3pt}
  \resizebox{\linewidth}{!}{
    \begin{tabular}{>{$}l<{$}>{$}l<{$}>{$}c<{$}>{$}l<{$}}
      \toprule
      \multicolumn{1}{l}{\#} & \multicolumn{1}{c}{Orbit representative} & \dim\mathcal{O}_\alpha & \multicolumn{1}{c}{Equations for orbits}\\
                             & \multicolumn{1}{c}{$\alpha=(\bj, \bk, \p,E)$} & & \\ \midrule \rowcolor{blue!7}
      1  & (\bzero, \bzero, \bzero,E_0) & 6 & E =E_0 \neq 0, E_0 \bj + \p \times \bk = \bzero\\ 
      2  & (S\bu, \bzero,  \bzero , E_0)  & 8 & E =E_0\neq 0,\|\bj+E_0^{-1}\p \times\bk\| = S >0 \\ \midrule \rowcolor{blue!7}
      3  & (\bzero,\bzero,\bzero,0)  & 0  & E=0,\p=\bzero,\bk=\bzero,\bj=\bzero)\\     
      4  & (j\bu,\bzero,\bzero,0) & 2 & E=0,\p=\bzero,\bk=\bzero,\|\bj\|=j>0 \\ \rowcolor{blue!7}
      5  & ( h \bu, k \bu, \bzero, 0) & 4 & E=0,\p=\bzero,\|\bk\|=k>0, \bj\cdot\bk = h\|\bk\| \in \RR \\
      6  & (h \bu, \bzero, p \bu, 0) & 4 & E=0,\bk=\bzero,\|\p\|=p>0, \bj \cdot \p = h \|\p\|\in \RR \\     \rowcolor{blue!7}
      7_\pm & (h\bu,\pm k\bu, p\bu, 0) & 4 & E=0,\|\p\|=p>0, \|\bk\|=k>0, \p\cdot\bk =\pm p k, \bj \cdot \p = h \|\p\|\in \RR \\
      \rowcolor{blue!7}      
      8  &  (\bzero, k \cos\theta \bu + k\sin\theta \bu_\perp, p \bu,0)  & 6 & E=0,\|\p\|=p>0, \|\bk\|=k>0, \p\cdot\bk = pk\cos\theta, \theta \in  (0,\pi) \\
      \bottomrule
    \end{tabular}
  }
  \caption*{This table provides an overview of the coadjoint orbits of
    the Carroll group. As indicated by the horizontal
    line they are separated into orbits with $E\neq 0$ and orbits with
    $E=0$.  The second column displays an orbit representative: the
    notation is such that $\bu \in \RR^3$ represents a fixed unit-norm
    vector and in the last row $\bu_\perp\in \RR^3$ is a second unit-norm
    vector perpendicular to $\bu$.  The third column is the dimension
    of the orbit and the last column provides the equations
    which define the orbits.}
  \label{tab:coadjoint-orbits}
\end{table}

We also determined the structure of the coadjoint orbits as
homogeneous fibre bundles over the $K$-orbits in $\t^*$ and that plays
an important rôle in the construction of induced representations.
Briefly, we write $\alpha \in \g^*$ as
$(\kappa,\tau) \in \fk^* \oplus \t^*$.  Since $K$ acts on $T$, it acts
on $\t$ and hence on $\t^*$.  We let $\eO_\tau = K \cdot \tau$ denote
the $K$-orbit of $\tau$ in $\t^*$. Let $K_\tau$ denote the stabiliser
of $\tau$ in $K$ and let $\fk_\tau$ be its Lie algebra.  We let
$\kappa_\tau \in \fk_\tau^*$ denote the restriction of $\kappa$ to
$\fk_\tau$ and let $\eO_{\kappa_\tau}$ denote its $K_\tau$-coadjoint
orbit.  Then as explained, for example in \cite{MR387499} (see also
\cite{Oblak:2016eij}) the $G$-coadjoint orbit of $\alpha =
(\kappa,\tau)$ is the fibred product
\begin{equation}
  \begin{tikzcd}
    \eO_\alpha \arrow[r] \arrow[d] & T^*\eO_\tau \arrow[d] \\
    K\times_{K_\tau} \eO_{\kappa_\tau} \arrow[r] & \eO_\tau\\
  \end{tikzcd}
\end{equation}
over $\eO_\tau$ of the cotangent bundle $T^*\eO_\tau$ and the
homogeneous fibre bundle over $\eO_\tau$ whose fibre is the
$K_\tau$-coadjoint orbit of $\kappa_\tau$.  A more standard notation
for that fibred product would be
\begin{equation}
  \eO_\alpha = T^*\eO_\tau\times_{\eO_\tau} (K \times_{K_\tau} \eO_{\kappa_\tau}).
\end{equation}
It is a symplectic manifold of dimension
$2 \dim \eO_\tau + \dim \eO_{\kappa_\tau}$ and carries a
Carroll-invariant symplectic structure.  For every coadjoint orbit of
the Carroll group, one can determine $\tau$, $\eO_\tau$, $K_\tau$ and
the structure of the orbit.  This was done in Part~I, from where we
borrow Table~\ref{tab:structure-orbits}.  
\begin{table}[h]
  \centering
    \caption{Deconstructing the coadjoint orbits}
    \label{tab:structure-orbits}
  \begin{adjustbox}{max width=\textwidth}
    \begin{tabular}{>{$}l<{$}|*{8}{>{$}c<{$}}}
      \# & \alpha \in \g^* & \tau \in \t^* & \eO_\tau & K_\tau & \kappa \in \fk^* & \kappa_\tau \in \fk_\tau^* & \eO_{\kappa_\tau} & \eO_\alpha\\\toprule \rowcolor{blue!7}
      1& (\bzero,\bzero,\bzero,E_0\neq0) & (\bzero,E_0) & \AA^3_{E=E_0} & \SO(3) & (\bzero,\bzero) & \bzero &  \{\bzero\} & T^*\AA^3 \\
      2& (\bj\neq\bzero,\bzero,\bzero,E_0\neq0) & (\bzero,E_0) & \AA^3_{E=E_0} & \SO(3) & (\bj,\bzero) & \bj &  S^2_{\|j\|} & T^*\AA^3 \times_{\AA^3} (K \times_{K_\tau} S^2) \\ \rowcolor{blue!7}
      3& (\bzero,\bzero,\bzero,0) & (\bzero,0) & \{(\bzero,0)\} & K & (\bzero,\bzero) & (\bzero,\bzero) &  \{(\bzero,\bzero)\} & \{0\}\\
      4& (\bj\neq\bzero,\bzero,\bzero,0) & (\bzero,0) & \{(\bzero,0)\} & K & (\bj,\bzero) & (\bj,\bzero) &  S^2_{\|\bj\|} & S^2 \\ \rowcolor{blue!7}
      5& (\bj,\bk\neq\bzero,\bzero,0)_{\bj\times\bk=\bzero} & (\bzero,0) & \{(\bzero,0)\} & K & (\bj,\bk) & (\bj,\bk) & T^*S^2_{\|\bk\|} & T^*S^2\\
      6& (\bj,\bzero,\p\neq 0,0)_{\bj\times \p = \bzero} & (\p,0) & S^2_{\|\p\|} & \SO(2) \ltimes \RR^3 & (\bj,\bzero) & (\bj,\bzero) & \{(\bj,\bzero)\} & T^*S^2\\ \rowcolor{blue!7}
      7_\pm& (\bj,\bk\neq\bzero,\p\neq\bzero,0)_{\bk \times \p = \bj \times \bk = \bzero} & (\p,0) & S^2_{\|\p\|} & \SO(2) \ltimes \RR^3 & (\bj,\bk) & (\bj,\bk) & \{(\bj,\bk)\} & T^*S^2\\
      8& (\bzero,\bk,\p,0)_{\bk\times\p \neq \bzero} & (\p,0) & S^2_{\|\p\|} & \SO(2) \ltimes \RR^3 & (\bzero,\bk) & (\bzero,\bk) & T^*S^1_{\|\bk\|} & T^*S^2 \times_{S^2} (K \times_{K_\tau} T^*S^1)\\
      \bottomrule
    \end{tabular}
  \end{adjustbox}
  \vspace{1em}
\end{table}

In Part~I we showed that automorphisms of $G$ induce
symplectomorphisms between coadjoint orbits (provided with their
natural $G$-invariant Kirillov--Kostant--Souriau symplectic
structure). Inner automorphisms preserve the coadjoint orbit, whereas
outer automorphisms relate different coadjoint orbits.  For instance,
all the four-dimensional coadjoint orbits of the Carroll group (cases
$5,6,7_\pm$) with the same value of $\|\bj\|$ are related by
automorphisms.  In the same way, automorphisms also relate different
representations of $G$.  If $\rho: G \to U(\eH)$ is a unitary
representation of $G$ on a Hilbert space $\eH$ and
$\varphi \in \Aut(G)$ is an automorphism, we may twist $\rho$ by
$\varphi$ to arrive another representation $\rho^\varphi$ defined
simply by pre-composition: $\rho^\varphi(g) = \rho(\varphi(g))$ for
all $g \in G$.  Notice that by construction, $\rho^\varphi$ is a
representation on the same underlying Hilbert space.  Again if
$\varphi$ is an inner automorphism, so $\varphi(g) = h g h^{-1}$ for
some $h \in G$, then
$\rho^\varphi(g) = \rho(h) \circ \varphi(g) \circ \rho(h)^{-1}$, so
that the two representations are unitarily equivalent.  However if
$\varphi$ is outer, then $\rho$ and $\rho^\varphi$ need not be
equivalent and, indeed, often they are not.

The outer automorphisms of the Carroll group $G$ were determined in
Part~I.  They are given by $\begin{pmatrix} \alpha & \beta \\ \gamma &
  \delta \end{pmatrix} \in \GL(2,\RR)$ acting on $\g$ as follows:
\begin{equation}
  \label{eq:autos-on-g}
  J_i \mapsto J_i,\qquad B_i \mapsto \alpha B_i + \beta P_i, \qquad
  P_i \mapsto \gamma B_i + \delta P_i \qquad\text{and}\qquad H \mapsto
  \Delta H,
\end{equation}
where $\Delta = \alpha\delta - \beta \gamma \neq 0$ the determinant of the
matrix.  The dual action on $\g^*$ is given as follows:  $(\bj,\bk,\p,E) \mapsto
(\bj', \bk', \p', E')$ with
\begin{equation}
  \label{eq:autos-on-g-star}
  \bj' = \bj,\qquad \bk' = \frac{\delta \bk - \beta
    \p}{\Delta},\qquad \p' = \frac{\alpha \p - \gamma \bk}{\Delta}
  \qquad\text{and}\qquad E' = \frac{E}{\Delta}.
\end{equation}
We will find it convenient to also work out the action of
automorphisms on the group $G$ in our choice of parametrisation:
\begin{equation}
  \label{eq:g-param}
  g(R,\bv,\ba,s) = e^{s H} e^{\ba \cdot \bP} e^{\bv \cdot \bB} R.
\end{equation}
One finds after a short calculation (using the
Baker--Campbell--Hausdorff formula) that
\begin{equation}
  \label{eq:autos-on-G}
  g(R,\bv,\ba,s) \mapsto g(R,\gamma \ba + \alpha \bv, \delta \ba +
  \beta \bv, s \Delta + \tfrac12 (\gamma\delta \|\ba\|^2 + \alpha\beta
  \|\bv\|^2) + \beta\gamma \ba\cdot\bv).
\end{equation}

\section{UIRs of the Carroll group}
\label{sec:unit-irred-repr}

We now discuss UIRs of the Carroll group.  Since the Carroll group is
a (regular) semidirect product $K \ltimes T$, with $T$ abelian, it
follows from Mackey's Imprimitivity Theorem (see, e.g.,
\cite[Ch.~17]{MR0495836}) that all such representations are obtained
via the method of induced representations, departing from a unitary
one-dimensional representation of $T$ and a unitary irreducible
representation of its ``little group''.  Furthermore, as shown by
Rawnsley \cite{MR387499}, these are precisely the representations
arising via the geometric quantisation of the coadjoint orbits.  Our
approach is via induced representations, rather than the geometric
quantisation of the coadjoint orbits, but the correspondence with
coadjoint orbits provides a useful guide.  Coadjoint orbits were
described in Section~\ref{sec:review-coadjoint-orbits} and the method
of induced representations is described in
Appendix~\ref{sec:meth-induc-repr}.

It is convenient to consider the universal cover of the
Carroll group, which shares the coadjoint orbits with the Carroll
group.  From here onwards, we shall let $G$ denote the universal cover
of the the Carroll group, whose maximal compact subgroup is
$\Spin(3)\cong \SU(2)$, the universal cover of $\SO(3)$.  Just as the
Carroll group, its universal cover is a semi-direct product
\begin{equation}
  G \cong (\Spin(3) \ltimes \RR^3) \ltimes (\RR^3 \oplus
  \RR) = K \ltimes T,
\end{equation}
where $K$ now denotes the universal cover of the homogeneous Carroll
group (isomorphic to the universal cover of the three-dimensional
euclidean group) and $T \cong \RR^4$ is the abelian normal subgroup of
translations.  We recall that the Casimir elements of the Carroll
group are $H$ (linear) and $\|H \boldsymbol{J} + \boldsymbol{P} \times
\boldsymbol{B}\|^2$ (quartic).

\subsection{Brief recap of the method of induced representations}
\label{sec:brief-recap-method}

Although more details are given in Appendix~\ref{sec:meth-induc-repr},
we briefly recap the method of induced representations for a
semidirect product $K\ltimes T$ with $T$ abelian, emphasising the
procedure, which we list as a sort of algorithm.

\begin{enumerate}[label=(\arabic*)]
\item Pick a complex one-dimensional unitary representation of $T$ or,
  equivalently, an element $\tau \in \t^*$ in the dual of its Lie
  algebra.  Since $T$ is abelian, all complex irreducible
  representations are one-dimensional and the unitary ones are given
  by characters
  \begin{equation}\label{eq:T-char-summary}
    \chi_\tau(\exp X) = e^{i \left<\tau,X\right>}
  \end{equation}
  for all $X \in \t$ and where $\tau \in \t^*$.  Therefore picking a
  one-dimensional unitary representation of $T$ is equivalent to
  picking $\tau \in \t^*$.
  
\item Pick a complex unitary irreducible representation $W$ of the
  stabiliser $K_\tau \subset K$ of $\tau \in \t^*$.  Let
  $\left<-,-\right>_W$ denote a $K$-invariant hermitian inner
  product on $W$.  The $K$-orbit $\eO_\tau$ of $\tau$ is thus
  diffeomorphic to $K/K_\tau$, but since $T$ acts trivially on $\t^*$,
  also diffeomorphic to $G/H$, with $H=K_\tau \ltimes T$.  We will
  assume (and will check) that $\eO_\tau$ admits a $K$-invariant
  measure.  Although $W$ is initially a representation of $K_\tau$, it
  can be seen as a representation of $H$ where $t \in T$ acts via the
  character $\chi_\tau$ defined by $\tau$.

\item Pick a (possibly only locally defined) coset representative
  $\sigma: \eO_\tau \to G$ for the orbit $\eO_\tau \cong G/H$.  Then
  for all $p \in \eO_\tau$ (in the domain of $\sigma$) and all $g \in G$,
  \begin{equation}\label{eq:g-action-on-coset-rep}
    g^{-1} \sigma(p) = \sigma(g^{-1}\cdot p) h(g^{-1},p),
  \end{equation}
  which defines $h(g^{-1},p) \in H$.
\item Let $\psi : \eO_\tau \to W$ and for $g \in G$, define
  \begin{equation}\label{eq:g-action-on-irrep}
    (g \cdot \psi)(p) = h(g^{-1},p) \cdot \psi(g^{-1}\cdot p).
  \end{equation}
  This defines a UIR of $G$ on the Hilbert space of square-integrable
  functions $\eO_\tau \to W$ relative to the inner product
  \begin{equation}\label{eq:hilb-space-inner-product}
    \left(\psi_1,\psi_2\right) = \int_{\eO_\tau} d\mu(p) \left<\psi_1(p),\psi_2(p)\right>_W,
  \end{equation}
  where $d\mu(p)$ is the invariant measure on $\eO_\tau$.
\end{enumerate}

We should remark that the above ``algorithm'' is an
over-simplification and the reader is urged to read
Appendix~\ref{sec:meth-induc-repr} for a more detailed exposition,
from where the above four points have been distilled.  In particular,
the ``functions'' $\psi: \eO_\tau \to W$ are actually sections of a
vector bundle $E_W = K \times_{K_\tau} W$ over $\eO_\tau$ associated to
the representation $W$ of $K_\tau$.  We can also describe this vector
bundle as $G \times_{K_\tau \ltimes T} W$ having extended the action
of $K_\tau$ on $W$ to the action of $K_\tau \ltimes T$ as discussed in
point (2) above.  In Appendix~\ref{sec:meth-induc-repr} we also
remind the reader that sections of $E_W$ can be equivalently described
as (Mackey) functions $K \to W$ which are equivariant under $K_\tau$ or,
even, functions $G \to W$ which are equivariant under $K_\tau \ltimes
T$.  The representation of $G$ carried by the sections of $E_W$ is
much more transparent when recast in the language of Mackey
functions.  It is in this language that the
formula~\eqref{eq:g-action-on-irrep} (which is
equation~\eqref{eq:G-action-sections-final}) is arrived at, departing
from equation~\eqref{eq:G-action-on-sections}, where $F : G \to W$ is
the corresponding Mackey function.

\subsection{Induced representations}
\label{sec:induc-repr-homog}

We shall construct UIRs of $G$ using the method of induced
representations familiar from the case of the Poincaré
group~\cite{Wigner:1939cj} and recalled above in
Section~\ref{sec:brief-recap-method} and in more detail in
Appendix~\ref{sec:meth-induc-repr}.

Let $\tau = (\p,E) \in \t^*$.  This defines a unitary character 
$\chi_\tau$ by
\begin{equation}\label{eq:translation-character}
  \chi_\tau(\ba,s) = e^{i(\p \cdot \ba + E s)}.
\end{equation}
Let $K_\tau \subset K$ denote the stabiliser of $\tau$.  Even though
$K$ is the universal cover of the euclidean group, its action on
$\t^*$ factors through the action of the euclidean group and hence the
$K$-orbit of $\tau$ is again the same $\eO_\tau$ introduced in
Section~\ref{sec:review-coadjoint-orbits}.  It is nevertheless 
$K$-equivariantly diffeomorphic to $K/K_\tau$, even when the
$K$-action is only locally effective.

We now choose a UIR $W$ of $K_\tau$ and construct the homogeneous
vector bundle
\begin{equation}
 E_W := K \times_{K_\tau} W \to \eO_\tau.
\end{equation}
Sections of $E_W$ are locally functions $\psi: \eO_\tau \to W$ and they
carry an action of $G$ as in equation~\eqref{eq:g-action-on-irrep}
which defines a UIR of $G$ on the Hilbert space of square-integrable
sections.

\subsubsection{Invariant measures}
\label{sec:invariant-measures}

The above of course depends on the existence of the invariant measure.
In Table~\ref{tab:structure-orbits} we see that there are three types
of orbits $\eO_\tau$: point-like orbits $\{(\bzero, 0)\}$, 2-spheres
$S^2_{\|\p\|}$ and three-dimensional affine hyperplanes
$\AA^3_{E=E_0}$. Invariant measures are nowhere-vanishing top-rank
forms on $\eO_\tau$ which are $K$-invariant and, by Frobenius
reciprocity, they are in bijective correspondence with
$K_\tau$-invariant elements in
$\wedge^{\mathrm{top}} (\fk/\fk_\tau)^* \cong
\wedge^{\mathrm{top}}\fk_\tau^0$, where $\fk_\tau^0 \subset \fk^*$ is
the annihilator of the Lie algebra $\fk_\tau$ of $K_\tau$ in the dual
of the Lie algebra $\fk^*$ of $K$. We will use this to deduce that the
orbits $\eO_\tau$ in Table~\ref{tab:structure-orbits} admit invariant
measures.

Let $J_i, B_j$ denote a basis for $\fk$ and $\lambda^i, \beta^i$ the
canonical dual basis for $\fk^*$.  Then the coadjoint action is given
by
\begin{equation}
  \begin{aligned}
    \ad^*_{J_i} \beta^j &= \epsilon_{ijk} \beta^k\\
    \ad^*_{J_i} \lambda^j &= \epsilon_{ijk} \lambda^k
  \end{aligned}
  \qquad\qquad
  \begin{aligned}
    \ad^*_{B_i} \beta^j &= -\epsilon_{ijk} \lambda^k\\
    \ad^*_{B_i} \lambda^j &= 0.
  \end{aligned}
\end{equation}
Ignoring the point-like orbit, we see that for the $2$-sphere orbits
$\fk_\tau$ is the span of $J_3, B_i$, whereas $\fk_\tau^0$ is the span
of $\lambda^1,\lambda^2$ and one can check that
$\lambda^1 \wedge \lambda^2 \in \wedge^2\fk_\tau^0$ is
$\fk_\tau$-invariant and hence, since $K_\tau$ is connected, also
$K_\tau$-invariant.  For the affine hyperplane orbits, $\fk_\tau$ is
spanned by $J_i$ and hence $\fk_\tau^0$ is spanned by $\beta^i$ and
it's not hard to see that $\beta^1 \wedge \beta^2 \wedge \beta^3 \in
\wedge^3\fk_\tau^0$ is $K_\tau$-invariant.  We conclude that all
orbits have invariant measures.

\subsection{Inducing representations}
\label{sec:induc-repr}

We must now determine the UIRs $W$ of $K_\tau$, the so-called inducing
representations.  Depending on the orbit, as seen in
Figure~\ref{fig:mom_lim}, we have three possible isomorphism classes
of stabilisers:
\begin{itemize}
\item $K_\tau$: $\Spin(3) \ltimes \RR^3$ for the point-like orbits,
\item  $(\Spin(2) \ltimes \RR^2) \times \RR$ for the 2-spheres and 
\item $\Spin(3)$ for the affine hyperplanes.
\end{itemize}
The simplest case, which has already been discussed
in~\cite{Levy1965}, is that of the affine hyperplanes, since all
irreducible representations of $\Spin(3)$ are well-known: they are
finite-dimensional, unitary and isomorphic to the spin-$s$
representation $V_s$ for some $2s$ a non-negative integer, which is of
dimension $2s + 1$.

\subsubsection{UIRs of $\Spin(3) \ltimes \RR^3$}
\label{sec:unit-irred-repr-2}

The three-dimensional euclidean group is again a semidirect product
and hence we use again the method of induced representations.  Now we let
$\bk \in \RR^3$ and $\chi_{\bk}$ be the unitary character defined by
\begin{equation}
  \chi_{\bk}(\bv) := e^{i \bk \cdot \bv}.
\end{equation}
The group $\Spin(3)$ acts on such characters as
$\chi_{\bk} \mapsto \chi_{R \bk}$, where in the expression $R\bk$ we
understand that $R \in \Spin(3)$ acts through its projection to
$\SO(3)$.  If $\bk = \bzero$ (so that $\chi_{\bk} \equiv 1$) the
induced representation is then simply a UIR of $\Spin(3)$, which as
mentioned above, is one of the spin-$s$ representations $V_s$.

If $\bk \neq \bzero$, the $\Spin(3)$-orbit of $\chi_{\bk}$ is a 2-sphere with
typical stabiliser $\U(1) \subset \SU(2) \cong \Spin(3)$.  The
UIRs of $\U(1)$ are indexed by the integers.  If $\lambda \in \U(1)$,
or equivalently $\lambda \in \CC$ with $|\lambda| = 1$, the
representation indexed by $n\in \ZZ$ is the one-dimensional complex
representation where $\lambda$ acts by multiplication by $\lambda^n$.
Let us call that representation $\CC_n$.

We now define complex line bundles over the 2-sphere associated to
such representations:
\begin{equation}
  \SU(2) \times_{\U(1)} \CC_n \to S^2.
\end{equation}
We can identify these bundles as follows.  First of all notice that
we can identify $\SU(2)$ with the unit sphere in $\CC^2$.  Indeed if
$(z_1,z_2) \in \CC^2$ with $|z_1|^2 + |z_2|^2 = 1$, we form the
special unitary matrix
\begin{equation}
  g(z_1,z_2) :=
  \begin{pmatrix}
    z_1 & \overline z_2\\ -z_2 & \overline z_1
  \end{pmatrix}
\end{equation}
and every special unitary matrix is of this form.  The 2-sphere is the
complex projective line, which is the quotient of
$\CC^2\setminus\{(0,0)\}$ by $\CC^\times = \CC\setminus\{0\}$.  We can
restrict to the 3-sphere in $\CC^2$ and quotient by the action of
$\U(1) \subset \CC$ given by right multiplication as follows:
\begin{equation}
  g(z_1,z_2)
  \begin{pmatrix}
    \lambda & 0 \\ 0 & \overline\lambda
  \end{pmatrix} = 
  \begin{pmatrix}
    z_1 & \overline z_2\\ -z_2 & \overline z_1
  \end{pmatrix}
  \begin{pmatrix}
    \lambda & 0 \\ 0 & \overline\lambda
  \end{pmatrix} = 
  \begin{pmatrix}
    \lambda z_1 & \overline\lambda \overline z_2\\ -\lambda z_2 &
    \overline\lambda \overline z_1
  \end{pmatrix}=  
  g(\lambda z_1, \lambda z_2),
\end{equation}
where $|\lambda|=1$.  Sections of the homogeneous line bundle
$\SU(2)\times_{\U(1)} \CC_n \to S^2$ are $\U(1)$-equivariant functions
$f : \SU(2) \to \CC$ such that $f(gh) = h^{-1}\cdot f(g)$, or
equivalently complex-valued functions of $z_1,z_2$ such that
$f(\lambda z_1, \lambda z_2) = \lambda^{-n} f(z_1,z_2)$.  These are
the sections of the line bundle $\mathscr{O}(-n)$ over $\CP^1$.  We
may define an inner product on the space of sections by integrating
the pointwise inner product on $\CC_n$ against the $\SU(2)$-invariant
measure given by the volume form of a round metric on the 2-sphere.
The resulting induced representation of the three-dimensional
euclidean group is then carried by the square-integrable sections of
$\mathscr{O}(-n)$ over $\CP^1$ for any $n\in \ZZ$.  We will discuss
them in more detail in Section~\ref{sec:uirreps-e=0}.

\subsubsection{UIRs of $(\Spin(2) \ltimes \RR^2)\times \RR$}
\label{sec:unit-irred-repr-1}

The stabiliser now is isomorphic to $(\Spin(2) \ltimes \RR^2)\times \RR$,
where $\Spin(2)$ can be identified with the $\U(1)$ subgroup of
$\SU(2)$ discussed in the previous section.  Indeed, the action of
$\SU(2)$ on $\RR^3$ is the adjoint representation, which is self-dual
and hence isomorphic to the coadjoint representation.  Choosing $\p
\in \RR^3$ to correspond to the Lie algebra element
\begin{equation}
  \begin{pmatrix}
    i p  & 0 \\ 0 & -i p
  \end{pmatrix}
\end{equation}
we see that the stabiliser of this element in $\SU(2)$ is
\begin{equation}
  \left\{
    \begin{pmatrix}
      \lambda & 0 \\ 0 & \overline \lambda
    \end{pmatrix} ~ \middle |~ |\lambda|=1 \right\} \cong \U(1).
\end{equation}

Irreducible representations of $\left( \Spin(2) \ltimes \RR^2\right)
\times \RR$ are tensor products of irreducible representations of
$\Spin(2)\ltimes \RR^2$ and of $\RR$. Complex irreducible
representations of an abelian Lie group are one-dimensional.  The
unitary irreducible representations of $\RR$ are complex
one-dimensional and given by unitary characters labelled by a real
number $w \in \RR$, where $\chi_w(s) = \exp(i w s)$ for all $s \in
\RR$.  It is however more convenient notationally to consider
$\Spin(2) \ltimes \RR^3$ even when $\Spin(2)$ leaves invariant the
third component of the vectors in $\RR^3$.

Let us then determine the unitary irreducible representations of
$\Spin(2) \ltimes \RR^3$.  Being also a semidirect product, we apply
again the method of induced representations. Let again $\bk \in \RR^3$
and $\chi_{\bk}$ be the unitary character:
\begin{equation}
  \chi_{\bk}(\bv) := e^{i \bk \cdot \bv}.
\end{equation}
The group $\Spin(2)$ acts on such characters by restricting the
adjoint action of $\SU(2)$.  If we take $\bk = (k_1, k_2,k_3)$, then
$\lambda \cdot \bk = \bk'$ where $\bk'= (k'_1,k'_2,k'_3)$ with
$k'_3 = k_3$ and
\begin{equation}\label{eq:weight-two}
  k'_1 + i k'_2 = \lambda^2 (k_1 + i k_2)
\end{equation}
as shown by the conjugation, where we have used that $\overline\lambda
= \lambda^{-1}$:
\begin{equation}
  \begin{pmatrix}
    \lambda & 0 \\ 0 & \lambda^{-1}
  \end{pmatrix}
  \begin{pmatrix}
    k_3 & k_1 + i k_2 \\ k_1 - i k_2 & - k_3
  \end{pmatrix}
  \begin{pmatrix}
    \lambda^{-1} & 0 \\ 0 & \lambda
  \end{pmatrix} =
  \begin{pmatrix}
    k_3 & \lambda^2 (k_1 + i k_2) \\ \lambda^{-2} (k_1 - i k_2) & -k_3
  \end{pmatrix}.
\end{equation}

Let us use the notation $\bk^\perp = (k_1, k_2, 0)$ to denote the
component of $\bk$ orthogonal to $\p$.  If $\bk^\perp = \bzero$, then
its $\Spin(2)$-orbit is a single point.  The stabiliser is $\Spin(2)$
itself and hence we choose a UIR $\CC_n$ of $\Spin(2)$, as already
discussed above.

If $\bk^\perp \neq \bzero$, the $\Spin(2)$-orbit of $\chi_{\bk}$ is a circle and
the stabiliser consists of all those $\lambda \in \Spin(2)$ with
$\lambda^2 = 1$; that is the subgroup corresponding to $\pm 1$, which
is of course isomorphic to $\ZZ_2$.  There are two irreducible
representations of $\ZZ_2$, both unitary and one-dimensional,
depending on whether $-1$ acts as $1$ or as $-1$: they are call,
respectively, the trivial and sign representations and we will denote
them by $\CC_\pm$, respectively.  The associated homogeneous line
bundles have an equivalent characterisation which may be more
familiar.   The group $\Spin(2)$ is the total space of the spin bundle
over the circle $\Spin(2)/\ZZ_2$ and the homogeneous line bundles
associated to the representations $\CC_\pm$ are the corresponding
spinor bundles $\Sigma_\pm \to S^1$.  Sections of $\Sigma_\pm$ are
typically known as Ramond spinors (for $+$) and Neveu--Schwarz spinors
(for $-$) on the circle.  The representation space of $\Spin(2)\ltimes
\RR^2$ is then the Hilbert space $L^2(S^1,\Sigma_\pm)$ of
square-integrable spinor fields on the circle.  But of course we are
interested in $\Spin(2) \ltimes \RR^3$.  The third component acts via
a character as explained above and hence we get a (trivial) line
bundle $L_w \to S^1$ by which we may twist the spinors.

In summary, the UIRs of $(\Spin(2)\ltimes \RR^2) \times \RR$ come in
several types:
\begin{itemize}
\item one-dimensional representations $\CC_n \otimes \CC_w$ where $n
  \in \ZZ$ and $w \in \RR$; and
\item infinite-dimensional representations $L^2(S^1, \Sigma_\pm
  \otimes L_w)$ for $w \in \RR$, where $L_w$ is a trivial line bundle
  over $S^1$.
\end{itemize}
We will discuss them in more detail in Section~\ref{sec:uirreps-e=0}.

\subsection{UIRs of the Carroll group}
\label{sec:summary}

We may summarise the above discussion by listing the UIRs of (the
universal cover of) the Carroll group and in so doing we shall give
them names.

\subsubsection{UIRs with $E_0 \neq 0$}
\label{sec:uirreps-eneq0}

For representations with the value $E_0$ of $H$ nonzero, the Hilbert
space consists of the square-integrable functions $\AA^3 \to V_s$,
with $\AA^3 \subset \t^*$ the affine hyperplane with $E =E_0$ and
$V_s$ the complex spin-$s$ representation of $\Spin(3)\cong \SU(2)$ of
dimension $2s + 1$, relative to the inner product
\begin{equation}\label{eq:massive-inner-product}
 (\psi_1,\psi_2) = \int_{\AA^3} d^3p \left<\psi_1(\p),\psi_2(\p)\right>_{V_s} \, ,
\end{equation}
where $\left<-,-\right>_{V_s}$ is an invariant hermitian inner product on
$V_s$.  These representations were already discussed in the original work
of Lévy-Leblond~\cite{Levy1965}. They admit field-theoretic
realisations on Carroll spacetime as we will review below in
Section~\ref{sec:an-explicit-example}.  We shall call denote these
UIRs by $\Romanbar{II}(s,E_0)$ with the understanding that $E_0 \neq 0$.

Let us write the explicit action of the Carroll group $G$ on these
representations.  The orbit $\eO_\tau \subset \t^*$ is the affine
hyperplane $\AA^3 \subset \t^*$ consisting of points $(E_0,\p)$ where
$\p \in \RR^3$. Let us choose the point $(E_0,\bzero)$ as the origin
and let us choose a coset representative $\sigma(\p) \in K$ so that
$\sigma(\p) \cdot (E_0,\bzero) = (E_0,\p)$. A quick calculation shows
that $\sigma(\p) = \exp(-\frac1{E_0}\p\cdot \bB)$ works.  In this
paper and in contrast with Part~I, we work with a different, more
standard, parametrisation of the Carroll group.  Let us
factorise $g = t k$, with $t \in T$ and $k \in K$ as follows:
\begin{equation}
  \label{eq:standard-param}
  g(R,\bv,\ba,s)= \underbrace{e^{sH + \ba \cdot \bP}}_{t \in T}
  \underbrace{e^{\bv\cdot \bB} R}_{k \in K}.
\end{equation}
We have that with such $g$,
\begin{equation}
  \begin{split}
    (g \cdot \psi)(\p) &= \chi_{(E_0,\p)}(t) (k \cdot \psi)(\p),
  \end{split}
\end{equation}
where $\chi_{(E_0,\p)}$ is the character given by
\begin{equation}
  \chi_{(\p,E_0)}(t) =e^{i (E_0 s + \ba \cdot \p)}
\end{equation}
and we calculate
\begin{equation}
  k^{-1} \sigma(\p) = \sigma(R^{-1}(\p + E_0 \bv)) R^{-1},
\end{equation}
so that from equation~\eqref{eq:g-action-on-irrep} we arrive at the
explicit expression for the action of the group element
$g(R,\bv,\ba,s) \in G$ on $\psi \in L^2(\AA^3,V_s)$:
\begin{equation}
  \label{eq:g-rep-massive}
  (g \cdot \psi)(\p) = e^{i ( E_0 s + \p \cdot \ba)} \rho(R) \psi(R^{-1}(\p + E_0\bv)) \, ,
\end{equation}
where $R \mapsto \rho(R)$ denotes the spin-$s$ representation of
$\Spin(3)$. It is understood that we are on the hyperplane where the
energy is restricted to $E=E_{0}$.

Let us write down the hermitian operators corresponding to angular
momentum $\hat \J$, energy $\hat H$, momentum $\hat{\p}$ and
centre-of-mass $\hat{\bm{B}}$. They are related to the Lie algebra
generators~\eqref{eq:3-carroll-algebra} via multiplication by $i$,
explicitly $X_{\mathrm{LieAlg}} = i \hat X$. For the representation at
hand they are given by
\begin{align}
  \label{eq:op-p-basis}
\hat \J &= - i \p \times \frac{\pd}{\pd \p} + \hat{\bm{S}}  & \hat{\bm{B}} &= -i E_{0} \frac{\pd}{\pd \p} &  \hat H &= E_{0} & \hat{\bm{P}}&= \p   \, .
\end{align}
Here $\hat{\bm{S}}$ are the infinitesimal generators of the spin-$s$
representation $\rho(R)$. Together with $\hat H$ we can use them to
uniquely label the $\Romanbar{II}(s,E_0)$ representation since
\begin{align}
  \hat H &= E_{0} & \hat{\bm{S}}^{2} &= s (s+1) \, ,
\end{align}
which are multiples of the identity.

For massive carrollions we can also define a position operator
$\hat{\bm{X}}$~\cite{Levy1965}
\begin{align}
  \hat{ \bm{X}} = \frac{1}{E_{0}} \hat{\bm{B}}
\end{align}
which transforms as expected under rotations and spatial translations.
This definition also agrees with the intuition that the centre of mass
of a massive Carroll particle is the energy multiplied by the position
(classically written as $\bm{k} = E_{0} \bm{x}$ as in, e.g., Part~I,
Section 3). When evaluated on the wavefunctions we recover the
canonical commutation relations
\begin{align}
  [\hat X_{i},\hat P_{j}] = - i \delta_{ij} \, .
\end{align}

We could have chosen to diagonalise not the momentum operator
$\hat{\bm{P}}$, but with respect to the centre-of-mass
$\hat{\bm{B}}$. In this case this leads us to ``boost'' or
centre-of-mass wavefunctions on which the symmetries act as
\begin{equation}
  \label{eq:g-rep-massive-boost-again}
  (g \cdot \tilde\psi)(\bk) = e^{i ( E_0 s + \bk \cdot \bv)} \rho(R) \tilde\psi(R^{-1}(\bk - E_0\ba)) \, .
\end{equation}
with 
\begin{align}
  \label{eq:op-b-basis}
\hat \J &= - i \bk \times \frac{\pd}{\pd \bk} + \hat{\bm{S}}  & \hat{\bm{B}} &= \bk &  \hat H &= E_{0} & \hat{\bm{P}}&=  i E_{0}\frac{\pd}{\pd \bk}   \, .
\end{align}
The momentum and boost eigenstates can be shown to be related via
Fourier transforms
\begin{align}
  \label{eq:psi-fourier}
  \tilde \psi(\bk) &= \int d^{3}p \, e^{-\frac{i}{E_{0}} \bk \cdot \p} \psi(\p) &  \psi(\p) &=\frac{1}{(2\pi |E_{0}|)^{3}} \int d^{3}k \, e^{\frac{i}{E_{0}} \bk \cdot \p} \tilde \psi(\bk)
\end{align}
and the inner product is given by
\begin{equation}
  \label{eq:k-massive-inner-product}
  (\tilde\psi_1,\tilde\psi_2) = \int_{\AA^3} d^3k \left<\tilde\psi_1(\bk),\tilde\psi_2(\bk)\right>_{V_s} \, .
\end{equation}
Here $\AA^{3}$ is now the $E=E_{0}$ hyperplane but in $(E,\bk)$ space.
The relation between the momentum and boost eigenstates is analogous
to the relation between momentum and position space eigenstates in
nonrelativistic (galilean) quantum mechanics. This can be traced back
to the commutation relation $[B_{i},P_{j}] = \delta_{ij}H$ which
mirrors the canonical commutation relation between position and
momentum operators with the energy playing the rôle of $\hbar$. This
also implies that the momentum and the centre-of-mass representations
are Fourier transforms (in $\p$-$\bk$ space) of each other.

Another way to see this relation between the momentum and boost basis
is to look at the particle action, as in~Section 3.1 in Part~I, for
instance.  The kinetic term in the canonical lagrangian can be written
as being proportional to $\p \cdot \dot \bk$. The analogous choice in
the path integral quantisation approach is then the choice to
calculate amplitudes either with regard to eigenstates of
$\hat{\bm{P}}$ or $\hat{\bm{B}}$.

\subsubsection{UIRs with $E_0 = 0$}
\label{sec:uirreps-e=0}

There are four classes of UIRs with $E_0 = 0$:
\begin{enumerate}[label=(\alph*)]
\item any finite-dimensional representation $V_s$ of $\Spin(3)$ with all
  other generators acting trivially;
\item the square integrable sections of $\mathscr{O}(-n)$ over the
  2-sphere for any $n\in \ZZ$, with the translations of the Carroll
  group acting trivially;
\item the square-integrable sections of a Hilbert bundle over the
  2-sphere, whose fibres are the square-integrable spinors (with
  respect to either of the two spin structures) on the circle twisted
  by a trivial line bundle $L_w$: $L^2(S^1, \Sigma_{\pm} \otimes L_w)$;
\item and the square-integrable sections of the line bundle over the
  2-sphere associated to the one-dimensional representation $\CC_n
  \otimes \CC_w$ of $(\Spin(2)\ltimes \RR^2)\times \RR$.
\end{enumerate}

We shall now discuss them in some detail and will discuss the possible
field theoretical realisations of some of these representations in
Section~\ref{sec:another-example}.

Representations of class (a) require no further discussion.  We shall
denote them by $\Romanbar{I}(s)$ with the understanding that $2s \in
\NN_0$.

\paragraph{Representations of classes (b) and (d).}

Let us first of all consider representations of class (d), under the
assumption that $w=0$, since as we shall see in
Section~\ref{sec:another-example}, these are the ones which can be
realised as finite-component fields on Carroll spacetime.  We take
$\tau = (0,\p)$ with $\p=(0,0,p)$ and $p>0$.  The stabiliser $K_\tau$
consists of the boosts and $\U(1)$ subgroup of $\SU(2)$ consisting of
diagonal matrices and $\eO_\tau$ is the $2$-sphere of radius $p$ in
$\RR^3$.  If we think of the sphere as the extended complex plane, we
can effectively work in the complex plane.  The round metric on the
unit sphere pulls back to the Fubini--Study metric (up to a factor of 4):
\begin{equation}
  \label{eq:fubini-study}
  g_{FS} = \frac{4 dz d\zbar}{(1 + |z|^2)^2}
\end{equation}
whose associated volume form is
\begin{equation}
  \label{eq:volume-form-sphere}
  \omega =\frac{2i dz \wedge d\zbar}{(1+|z|^2)^2}.
\end{equation}
As a coset representative $\sigma : \CC \to \SU(2)$ we may
take\footnote{Here and in the sequel we think of $z$ as a point in the
complex plane and not as a holomorphic coordinate.  Hence the notation
$\sigma(z)$ or $\psi(z)$ is not meant to denote a holomorphic function, but
simply a smooth function on the complex plane, as can be seen from the
explicit form of $\sigma(z)$.}
\begin{equation}\label{eq:sigma-z}
  \sigma(z) = \frac1{\sqrt{1+|z|^2}}
  \begin{pmatrix}
    z & -1 \\ 1 & \zbar
  \end{pmatrix}.
\end{equation}
Notice that the map $\eO_\tau \to \CC$ is the stereographic projection
from $\tau$.  Hence the complex plane $\CC$ parametrises the orbit
$\eO_\tau$ excised of the actual point $\tau$, which we only recover
in the limit $z \to \infty$.  (This may seem a little strange, but it
is fine.)  The action of $\sigma(z)$ on $\p = (0,0,p)$ results in
$\q(z)= \sigma(z)\cdot \p$.  To work out the expression for $\q(z)$,
let us identify $\RR^3$ with the space of hermitian traceless matrices
in such a way that $\p = (p_1,p_2,p_3)$ is represented by the matrix
\begin{equation}
  \begin{pmatrix}
    p_3 & p_1 + i p_2\\
    p_1 - i p_2 & -p_3
  \end{pmatrix}
\end{equation}
and the action of $\SU(2)$ on such hermitian matrices is via matrix
conjugation.  This is a linear action which preserves the trace (which
is zero) and the determinant (which is $-\|\p\|^2$).  For the chosen
$\tau$, we have $\p = (0,0,p)$, so that the matrix corresponding to $\q(z)$ is given by
\begin{equation}
  \begin{pmatrix}
    \pi_3(z) & \pi_1(z) + i \pi_2(z)\\
    \pi_1(z) - i \pi_2(z) & -\pi_3(z)
  \end{pmatrix} =
  \frac1{1+|z|^2}
  \begin{pmatrix}
    z & -1 \\ 1 & \zbar
  \end{pmatrix}
  \begin{pmatrix}
    p & 0\\
    0 & -p
  \end{pmatrix}
  \begin{pmatrix}
    \zbar & 1 \\ -1 & z
  \end{pmatrix},
\end{equation}
resulting in
\begin{equation}
  \label{eq:q-vector}
    \pi_1(z) = \frac{2p \Re(z)}{1+|z|^2}, \qquad 
    \pi_2(z) = \frac{2p \Im(z)}{1+|z|^2} \qquad\text{and}\qquad
    \pi_3(z) = \frac{(|z|^2-1)p}{1+|z|^2}.
\end{equation}
As expected, it satisfies $\|\q(z) \|^{2}=p^{2}$.

We consider functions $\psi : \CC \to \CC_n$, where $\CC_n$ is a copy
of the complex numbers with $n$ reminding us how $U(1)$ acts.  We
introduce the inner product
\begin{equation}
  \label{eq:inner-product-sphere}
  \left<\psi_1, \psi_2\right> := \int_\CC \frac{2i dz \wedge d\zbar}{(1+|z|^2)^2} \overline{\psi_1(z)}\psi_2(z)
\end{equation}
and we let $\mathscr{H}$ denote the Hilbert space of square-integrable
such functions.  This space carries a UIR of the Carroll group which
we now exhibit.  Let $g = g(R,\bv,\ba,s) \in G$ be given as in
equation~\eqref{eq:standard-param}.  Then writing $g = t k$,
\begin{equation}
  (g \cdot \psi)(z) = \chi_{(0,\q(z))}(t) (k \cdot \psi)(z).
\end{equation}
Let us write $k = R \beta$ with $R \in \SU(2)$ and $\beta$ a boost.   Then
\begin{equation}
  k^{-1} \sigma(z) = \beta^{-1} R^{-1} \sigma(z).
\end{equation}
We will first work out $R^{-1}\sigma(z)$.  Let's take
\begin{equation}\label{eq:R-and-inverse}
  R =
  \begin{pmatrix}
    \eta & \xi \\ - \overline \xi & \overline \eta
  \end{pmatrix} \implies R^{-1} =
  \begin{pmatrix}
    \overline \eta & - \xi \\ \overline{\xi} & \eta
  \end{pmatrix} \qquad\text{with}\qquad |\eta|^2 + |\xi|^2 = 1.
\end{equation}
A short calculation shows that
\begin{equation}
  \label{eq:rotated-coset-rep}
  R^{-1} \sigma(z) = \sigma(w) \lambda(R,z),
\end{equation}
where
\begin{equation}
  \label{eq:w-and-lambda}
  w = \frac{\overline \eta  z - \xi}{\eta + \overline \xi z}
  \qquad\text{and}\qquad \lambda(R,z) = \frac{\eta + \overline \xi
    z}{\left|\eta + \overline \xi z\right|}.
\end{equation}
Notice that $z \mapsto w$ is the fractional linear transformation
associated to $R^{-1}$, as expected.  Therefore,
\begin{equation}
  k^{-1} \sigma(z) = \beta^{-1} \sigma(w) \lambda(R,z) = \sigma(w)
  \underbrace{\sigma(w)^{-1} \beta^{-1} \sigma(w)}_{\beta'(w)} \lambda(R,z),
\end{equation}
where $\beta'(w)$ is another boost.  Since boosts act trivially on the
inducing representation $\CC_n$ and $\lambda \in \U(1)$ acts like
$\lambda^n$, we arrive at the following action of $g = g(R,\bv,\ba,s)$
on $\psi : \CC \to \CC_n$:
\begin{equation}
  \label{eq:g-action-on-massless-field}
  (g \cdot \psi)(z) = e^{i \ba \cdot \q(z)} \left( \tfrac{\eta +
      \overline{\xi} z}{| \eta + \overline{\xi} z|} \right)^{-n}
  \psi\left( \tfrac{\overline{\eta}z -\xi}{\eta + \overline{\xi} z} \right)\, ,
\end{equation}
with $\q(z)$ given by equation~\eqref{eq:q-vector} and $R \in\SU(2)$
is given by equation~\eqref{eq:R-and-inverse}. Notice that time
translations and carrollian boosts act trivially on momenta when the
energy vanishes, so there is no $s$ and $\bv$ on the right-hand side.
We denote these representations by $\Romanbar{III}(n,p)$ with the
understanding that $n\in\ZZ$ and $p>0$.

By applying an automorphism $\varphi: G\to G$, we may obtain other
representations from this one simply by precomposing: $\rho:G \to
\U(\eH)$ changes to $\rho \circ \varphi : G \to \U(\eH)$.  The outer
automorphism $\begin{pmatrix}\alpha & \beta \\ \gamma &
  \delta\end{pmatrix} \in \GL(2,\RR)$ acts on $G$ as in
equation~\eqref{eq:autos-on-G} and hence from
equation~\eqref{eq:g-action-on-massless-field} we read off the
following representation
\begin{equation}
  \label{eq:g-action-on-massless-field-after-autos}
  (g \cdot \psi)(z) = e^{i (\delta \ba + \beta \bv)\cdot \q(z)} \left( \tfrac{\eta +
      \overline{\xi} z}{| \eta + \overline{\xi} z|} \right)^{-n}
  \psi\left( \tfrac{\overline{\eta}z -\xi}{\eta + \overline{\xi} z} \right) \, ,
\end{equation}
By taking $\delta = 0$ (and hence $\beta \neq 0$) we obtain the
representation of class (b) in the list at the start of this section,
since in this case the translations act trivially.  We denote those
representations by $\Romanbar{III}'(n,k)$, with the understanding that
$n\in\ZZ$ and $k>0$.

Similarly, taking $\delta$ and $\beta$ both nonzero, we obtain the representation of
class (d) with $w = \beta$.  We denote these representations by
$\Romanbar{IV}_\pm(n,p,k)$ where the sign is the sign of $w$, and with
the understanding that $n\in\ZZ$ and $p,k>0$.  It may seem a little
odd that whereas twisting by an automorphism results in a
representation with the same underlying vector space, our description
of the representations of class (d) at the start of this section
exhibits such representations in terms of sections of a different
homogeneous line bundle.  The conjectural resolution is that these
representations are unitarily equivalent.

\paragraph{Representations of class (c).}

It remains to discuss representations in class (c).  The description
which follows by adhering to the method of induced representations
seems a little exotic, since the representation is described as being
carried by sections of an infinite-rank Hilbert bundle over the
$2$-sphere.  We will show, however, that there is an equivalent
description of these representations as honest functions on the
round $3$-sphere with values in a one-dimensional representation of
the nilpotent subgroup of the Carroll group generated by boosts and
translations.

To see this it is perhaps convenient to briefly recapitulate how one
might arrive at such a description.  We start by following the
description of the representation as square-integrable sections of a
Hilbert bundle over the $2$-sphere.  The fibre of the Hilbert bundle
in an infinite-dimensional Hilbert space $\eH^\pm_{k,\theta}$ which is
a UIR of $\Spin(2) \ltimes \RR^3$ and is described as follows.  We
pick a unitary character $\chi_{\bk}:\RR^3 \to \U(1)$ of $\RR^3$ given
by $\chi_{\bk}\left(e^{\bv\cdot \bB}\right) = e^{i \bk \cdot \bv}$,
with our chosen $\bk = (k\sin\theta, 0, k \cos\theta)$ with $k>0$ and
$\theta\in(0,\pi)$.  Remember that $\Spin(2)$ is the subgroup of
$\SU(2)$ consisting of diagonal matrices, so they are labelled by a
complex number $\zeta$, say, of unit modulus.  The orbit of the
character $\chi_{\bk}$ under $\Spin(2)$ is a circle of characters
$\chi_{\bk(\zeta)}$ where $\bk(\zeta) = (k_1(\zeta),k_2(\zeta),k
\cos\theta)$, where
\begin{equation}
  k_1(\zeta) + i k_2(\zeta) = \zeta^2 k\sin\theta,
\end{equation}
as was seen in equation~\eqref{eq:weight-two}.
The stabiliser of $\bk$ in $H := \Spin(2) \ltimes \RR^3$ is the subgroup
$H_{\bk}:= \ZZ_2 \times \RR^3$ consisting of elements $e^{\bv \cdot \bB} (\pm
\1)$, with $\1$ the identity matrix.  The UIRs of $H_{\bk}$ are
complex one-dimensional $\CC_\pm \otimes \CC_{\bk}$, where $\CC_\pm$ are
the trivial and sign representations of $\ZZ_2$, respectively, and
$\CC_{\bk}$ is the one-dimensional unitary representation of $\RR^3$
given by with character $\chi_{\bk}$.  Let us choose a coset
representative $\nu : S^1 \to \Spin(2)$, sending $\zeta \mapsto
\nu(\zeta)$, so that $\nu(\zeta) \bk = \bk(\zeta)$.  One such
possibility is
\begin{equation}
  \label{eq:nu-zeta}
  \nu(\zeta) = \begin{pmatrix} \zeta^{1/2} & 0 \\ 0 &  \zeta^{-1/2}\end{pmatrix},
\end{equation}
for some choice of square root function.  Then we may describe $\psi
\in \eH^\pm_{k,\theta}$ as complex-valued functions $\psi(\zeta)$ on
the circle with the inner product
\begin{equation}
  \left<\psi_1,\psi_2\right> = \int_{S^1} \frac{d\zeta}{i\zeta}
  \overline{\psi_1(\zeta)} \psi_2(\zeta).
\end{equation}
The unitary action of $h = e^{\bv \cdot \bB} \begin{pmatrix}\lambda &
  0 \\ 0 & \lambda^{-1} \end{pmatrix}\in \Spin(2) \ltimes \RR^3$ on
$\psi$ is given by (in the case of Ramond spinors, for definiteness)
\begin{equation}
  (h \cdot \psi)(\zeta) = e^{i \bv \cdot \bk(\zeta)} \psi(\lambda^{-2}
  \zeta).
\end{equation}

The UIR of the Carroll group is carried by sections over the Hilbert
bundle over the $2$-sphere $S^2_p \in \RR^3$ where $p= \|\p\| > 0$,
associated to the UIR $\eH^\pm_{k,\theta}$ of the stabiliser of $\p =
(0,0,p)$.  We denote these UIRs by $\Romanbar{V}_\pm(p,k,\theta)$ with
the understanding that $p,k>0$ and $\theta\in(0,\pi)$.

Locally, relative to a stereographic coordinate $z \in \CC$ for the
sphere, they are defined as functions
$\psi : \CC \to \eH^\pm_{k,\theta}$.  This means that
$\psi(z) \in \eH^\pm_{k,\theta}$ so it is itself a function
$\zeta \mapsto \psi(z)(\zeta)$ as described above.  Such a function is
the result of currying a function $\Psi : \CC \times S^1 \to \CC$;
that is, $\psi(z) = \Psi(z,-)$.  In this description, translations and
boosts are simultaneously diagonalised, which is possible for
representations with $E=0$, since in that case they commute.  We
therefore have that
\begin{equation}
  \begin{split}
    \left( e^{\ba \cdot \bP} \cdot \Psi \right)(z,\zeta) &= e^{i \ba \cdot \q(z)}\Psi(z,\zeta)\\
    \left( e^{\bv \cdot \bB} \cdot \Psi \right)(z,\zeta) &= e^{i \bv \cdot \bk(z,\zeta)} \Psi(z,\zeta),
  \end{split}
\end{equation}
where $\q(z) = \sigma(z)\p$ and $\bk(z,\zeta) = \sigma(z)\bk(\zeta)$.
Notice that $\q(z) \cdot \bk(z,\zeta) = \sigma(z)\p \cdot \sigma(z)\bk(\zeta) = \p \cdot
\bk(\zeta) = p k \cos\theta$.  Time translations act trivially, of
course, on massless representations, so it remains to discuss the
action of rotations. Let $R \in \SU(2)$ be given by
equation~\eqref{eq:R-and-inverse}. Then from
equation~\eqref{eq:rotated-coset-rep} we have that
\begin{equation}
  (R \cdot \Psi)(z,\zeta) = \lambda(R,z)^{-1} \Psi(w,\zeta),
\end{equation}
where $\lambda(R,z)$ and $w$ are given in
equation~\eqref{eq:w-and-lambda}.  Using the action of $\Spin(2)$
on the circle $S^1$, we arrive at
\begin{equation}
  (R \cdot \Psi)(z,\zeta) =\Psi\left( \frac{\overline\eta z -
      \xi}{\eta + \overline\xi z}, \left(\frac{\eta + \overline\xi
        z}{|\eta + \overline\xi z|} \right)^2\zeta \right).
\end{equation}
Putting it all together we arrive at the action of $g =
g(R,\bv,\ba,s)$ in equation~\eqref{eq:standard-param} in this
representation
\begin{equation}
  \label{eq:g-rep-class-iii}
  (g \cdot \Psi)(z,\zeta) = e^{i\left( \ba \cdot \q(z)  + \bv \cdot
      \bk(z,\zeta) \right)} \Psi\left( \frac{\overline\eta z - \xi}{\eta
      + \overline\xi z}, \left(\frac{\eta + \overline\xi z}{|\eta +
        \overline\xi z|} \right)^2\zeta \right).
\end{equation}
Finally, this representation is unitary relative to the hermitian
inner product given by
\begin{equation}
  \label{eq:inner-product-class-iii}
  \left<\Psi_1,\Psi_2\right> = \int_\CC \frac{2 dz \wedge
    d\zbar}{(1+|z|^2)^2} \int_{S^1} \frac{d\zeta}{\zeta} \overline{\Psi_1(z,\zeta)}\Psi_2(z,\zeta).
\end{equation}
We will now describe these representations in a simpler fashion.

\subsubsection{A simpler description of the massless UIRs}
\label{sec:simpl-descr-massl}

As shown in Appendix~\ref{sec:hopf-charts-su2}, the domain of
integration $\CC \times S^1$ in
equation~\eqref{eq:inner-product-class-iii} is a Hopf chart on the
$3$-sphere and the measure of integration is nothing but the volume
form of a round metric on the $3$-sphere, given in
equation~\eqref{eq:volume-form-round-3-sphere}.  This suggests very
strongly an equivalent (and perhaps less exotic) characterisation of
these UIRs, which we now describe.

Let $N\subset G$ denote the nilpotent subgroup of the Carroll group
generated by boosts and translations and let $\chi : N \to \U(1)$ be
defined by
\begin{equation}
  \label{eq:nilpotent-char}
  \chi\left( e^{sH + \ba \cdot \bP} e^{\bv \cdot \bB} \right) =
  e^{i\left( \ba \cdot \p + \bv \cdot \bk \right)},
\end{equation}
where $\p = (0,0,p)$ and $\bk = (k \sin\theta, 0, k\cos\theta)$, as
above.  Although $N$ is not abelian, $\chi$ does define a
one-dimensional representation since in any representation of $N$
where $H$ acts trivially, $N$ acts effectively as an abelian group.
Notice that $G/N \cong \SU(2)$, which is diffeomorphic to the
$3$-sphere.  The character $\chi$ of $N$ defines a homogeneous line
bundle $L_\chi := G \times_N \CC_\chi$ on $S^3$, where $\CC_\chi$ is a
copy of $\CC$ where $N$ acts via $\chi$.  Every smooth line bundle
over $S^3$ is trivial\footnote{We may trivialise the bundle on each
  hemisphere, with transition function from the equatorial $2$-sphere
  to the structure group.  Homotopic transition functions define
  isomorphic bundles, but the second homotopy group of any Lie group
  is trivial, so we may extend the trivialisation to the whole
  $S^3$.}, and hence sections of $L_\chi$ are just smooth functions
$S^3 \to \CC_\chi$.  Let $\eH = L^2(S^3,\CC_\chi)$ denote the
square-integrable such functions relative to the measure coming from a
round metric.  We define a coset representative $\sigma: S^3 \to G$ to
be the identification $S^3 \cong \SU(2)\subset G$.  So we can actually
write $\Psi(S)$ with $S \in \SU(2)$ and if we write
$g = g(R,\ba,\bv,s)$ as in equation~\eqref{eq:g-param}, its action on
such functions is given, as described in
Appendix~\ref{sec:meth-induc-repr} (see, e.g.,
equation~\eqref{eq:G-action}), by
\begin{equation}
  \label{eq:G-action-3-sphere}
  (g \cdot \Psi)(S) = e^{i\left( \ba \cdot S\p + \bv \cdot S\bk
    \right)} \Psi(R^{-1}S),
\end{equation}
which is manifestly unitary relative to the inner product
\begin{equation}
  \label{eq:inner-product-3-sphere}
  \left<\Psi_1,\Psi_2\right> = \int_{S^3} d\mu(S)
  \overline{\Psi_1(S)}\Psi_2(S),
\end{equation}
where $d\mu$ is a bi-invariant Haar measure on $\SU(2)$.  Restricting
to one of the Hopf charts $\CC \times S^1$ as described in
Appendix~\ref{sec:hopf-charts-su2}, we indeed recover the action of
$G$ given by equation~\eqref{eq:g-rep-class-iii} and the inner product
in equation~\eqref{eq:inner-product-class-iii}.  There is one small
detail: the representation just constructed is not actually
irreducible.  Why?  Because of the action of the centre of $\SU(2)$.
Given any function $\Psi: \SU(2) \to \CC$ we can decompose it into a
sum $\Psi_+ + \Psi_-$ of such functions where
$\Psi_\pm(-S) = \pm \Psi_{\pm}(S)$; namely,
\begin{equation}
  \Psi(S) = \underbrace{\tfrac12 (\Psi(S) + \Psi(-S))}_{=:\Psi_+(S)} +
  \underbrace{\tfrac12 (\Psi(S) - \Psi(-S))}_{=:\Psi_-(S)}.
\end{equation}
This decomposes $L^2(S^3,\CC_\chi)$ into the direct sum of two
orthogonal\footnote{This follows from the invariance of the Haar
  measure under multiplication by $-\1$:  $d\mu(-S)=d\mu(S)$.}
subspaces:
\begin{equation}
  L^2(S^3,\CC_\chi) =  L_+^2(S^3,\CC_\chi) \oplus L_-^2(S^3,\CC_\chi),
\end{equation}
where $\Psi \in L_\pm^2(S^3,\CC_\chi)$ if and only if $\Psi(-S) = \pm
\Psi(S)$.  This sign, of course, is the same sign distinguishing the
two spin structures on the circle.  The UIRs have underlying Hilbert
spaces $L_\pm^2(S^3,\CC_\chi)$ and they agree of course with the UIRs
denoted $\Romanbar{V}_\pm(p,k,\theta)$ above.

One could be forgiven for asking why we did not describe other massless
representations of $G$ in terms of functions on the $3$-sphere.  In 
fact, as we now explain, in a sense we have.  Let $\p,\bk \in
\RR^3$ be arbitrary and consider the unitary character of $N$ defined
by equation~\eqref{eq:nilpotent-char}.  What is the stabiliser $H_\chi
\subset \SU(2)$ of $\chi$?  It clearly depends on $\p$ and $\bk$.  The
stabiliser always includes the centre $\ZZ_2$, since the action of
$\SU(2)$ on $\RR^3$ is the adjoint representation and the centre lies
in its kernel, but it may be larger.  In all cases, Mackey's method
would instruct us to induce a representation of $G$ as
square-integrable sections of a homogeneous vector bundle over
$\SU(2)/H_\chi = G/(H_\chi \ltimes N)$ associated to a UIR of $H_\chi
\ltimes N$, obtained from a complex UIR of $H_\chi$ by having $N$ act
via the character $\chi$.  Let us now go through the different
possibilities.

\begin{itemize}
\item If $\p = \bk = \bzero$, then the character $\chi \equiv 1$ and
  $H_\chi = \SU(2)$. Thus we induce a representation of $G$ from a UIR
  of $\SU(2)$ (since $N$ acts trivially) which is carried by
  sections of the corresponding homogeneous vector bundle over
  $\SU(2)/\SU(2)$, which is a point, over which a vector bundle is
  just a vector space.  In other words, the representation is simply
  an UIR of $\SU(2)$ with $N$ acting trivially.  These are the Carroll
  UIRs of class $\Romanbar{I}(s)$.
  
\item If $\p,\bk$ are collinear (but not both zero), $\chi$ is
  stabilised by the $\Spin(2)$ subgroup of $\SU(2)$ which fixes the
  direction singled out by $\p,\bk$.  Thus we induce a UIR of $G$ from
  a UIR of $\Spin(2)\ltimes N$ which is carried by sections of the
  corresponding homogeneous vector bundle over $\SU(2)/\Spin(2)$, which
  is the $2$-sphere.  Since $N$ acts via $\chi$ and $\Spin(2)$
  stabilises $\chi$, the inducing representation is a representation
  of $\Spin(2) \times N$, hence a tensor product of a UIR of $\Spin(2)$,
  which is $\CC_n$ for some $n \in \ZZ$, and the UIR of $N$ defined by
  $\chi$.  The fibration $\SU(2) \to S^2$ is the Hopf fibration
  discussed in Appendix~\ref{sec:hopf-charts-su2}.  If $\p=\bzero$
  these are the Carroll UIRs of class $\Romanbar{III}'(n,k)$, whereas
  if $\p\neq \bzero$ they are the Carroll UIRs of class
  $\Romanbar{III}(n,p)$ if $k=0$ or of class
  $\Romanbar{IV}_\pm(n,p,k)$ where $\bk = \pm \tfrac{k}{p}\p$.

\item Finally if $\p$ and $\bk$ are not collinear, then they span a
  plane and $H_\chi = \ZZ_2$.  The UIRs are square-integrable sections
  of a homogeneous vector bundle over $\SU(2)/\ZZ_2 \cong \SO(3)$
  associated to either the trivial or the sign representation of
  $\ZZ_2$, or we could just remain on $\SU(2)$ as described earlier in
  this section and project to functions which are either odd or even
  under the action of the centre.  These are of course the Carroll
  UIRs of class $\Romanbar{V}_\pm(p,k,\theta)$.  The fact that
  $H_\chi$ is not connected is responsible for the sign, which labels
  two inequivalent quantisations of the same coadjoint orbit.
\end{itemize}

Since, as explained in Appendix~\ref{sec:meth-induc-repr}, sections of
homogeneous vector bundles over $G/H$ lift to $H$-equivariant
functions on $G$ (the so-called Mackey functions), all the massless
UIRs of the Carroll group can indeed be described in terms of functions on
$\SU(2)$, except that the functions are equivariant under the action
of $\SU(2)$, $\Spin(2)$ or $\ZZ_2$ with values in UIRs of $\SU(2)$,
$\Spin(2)$ or $\ZZ_2$, respectively.  That is precisely the
description of the UIRs detailed above.

\subsubsection{Relation with coadjoint orbits}
\label{sec:relat-with-coadj}

Following Rawnsley \cite{MR387499}, these induced representations are
obtained by geometric quantisation of the coadjoint orbits of the
Carroll group, whose classification was recalled in
Section~\ref{sec:review-coadjoint-orbits}.  This suggests a sort of
correspondence between coadjoint orbits of the Carroll group and the
induced representations we have determined, which is described in
Table~\ref{tab:uirreps-orbits}.

Any such correspondence between coadjoint orbits and UIRs needs to be
qualified.  Firstly, not every orbit is quantisable: indeed it is
clear from the table that not all orbits in classes $\#2,4,5,6,7_\pm$
are quantisable, since the angular momentum is quantised.  On the
other hand, the orbits of type $8$ admit two different quantisations:
corresponding to the two different spin structures on the circle.
This is due to the fact that the stabiliser of the character is
disconnected.  Both of these phenomena are well established, as
explained, for example, in the Introduction to \cite{MR1486137}.

Short of actually geometrically quantising the coadjoint orbits, the
correspondence in the table must remain largely conjectural, but we
are fairly confident in its validity, with the above caveats.

\begin{table}[h]
  \centering
    \caption{Coadjoint orbits and unitary irreducible representations
      of the Carroll group}
    \label{tab:uirreps-orbits}
  \begin{adjustbox}{max width=\textwidth}
    \begin{tabular}{l*{6}{>{$}c<{$}}}
      \toprule
                               & \# & \alpha \in \g^* & \eO_\tau & K_\tau & \text{inducing representation of $K_\tau$} & \text{UIR of $G$}\\\midrule \rowcolor{blue!7}
 $\Romanbar{II}(s=0,E_0)$      &   1& (\bzero,\bzero,\bzero,E_0\neq0) & \AA^3_{E=E_0} & \Spin(3) & \CC & L^2(\AA^3)\\
 $\Romanbar{II}(s\neq 0,E_0)$  &   2& (\bj\neq\bzero,\bzero,\bzero,E_0\neq0) & \AA^3_{E=E_0} & \Spin(3) & V_{s\neq 0} & L^2(\AA^3,V_s)\\ \rowcolor{blue!7}
 $\Romanbar{I}(s=0)$           &   3& (\bzero,\bzero,\bzero,0) & \{(\bzero,0)\} & K & \CC & \CC\\
 $\Romanbar{I}(s\neq 0)$       &   4& (\bj\neq\bzero,\bzero,\bzero,0) & \{(\bzero,0)\} & K & V_{s\neq 0} & V_s \\ \rowcolor{blue!7}
 $\Romanbar{III}'(n,k)$        &   5& (\bj,\bk\neq\bzero,\bzero,0)_{\bj\times\bk=\bzero} & \{(\bzero,0)\} & K & L^2(S^2,\mathscr{O}(-n)) & L^2(S^2,\mathscr{O}(-n))\\
 $\Romanbar{III}(n,p)$         &   6& (\bj,\bzero,\p\neq\bzero,0)_{\bj\times \p = \bzero} & S^2_{p} & \Spin(2) \ltimes \RR^3 & \CC_n & L^2(S^2,\mathscr{O}(-n))\\ \rowcolor{blue!7}
 $\Romanbar{IV}_\pm(n,p,k)$        &   7& (\bj,\bk\neq\bzero,\p\neq\bzero,0)_{\bk \times \p = \bj \times \bk = \bzero} & S^2_{p} & \Spin(2) \ltimes \RR^3 & \CC_n \otimes \CC_{w\neq0} & L^2(S^2,\mathscr{O}(-n))\\
 $\Romanbar{V}_\pm(p,k,\theta)$ &  8& (\bzero,\bk,\p,0)_{\bk\times\p \neq \bzero} & S^2_p & \Spin(2) \ltimes \RR^3 & \mathscr{H}^\pm_{k,\theta}:=L^2(S^1_{k\sin\theta}, \Sigma_\pm \otimes L_{k\cos\theta}) & L^2_\pm(S^3,\CC_{p,k,\theta})\\
      \bottomrule
    \end{tabular}
  \end{adjustbox}
  \vspace{1em}
  \caption*{The table lists a representative $\alpha$ of each class of
    coadjoint orbit $\eO_\alpha$, the base $\eO_\tau$ of the fibration
    which describes $\eO_\alpha$ and the little groups
    $K_\tau \subset K$ from which we induce the unitary irreducible
    representations of the Carroll group.  In each row we also list
    the vector space of the inducing representation of $K_\tau$ as
    well as that of the induced representation. The notation
    $L^2(X,V)$ means either $L^2$ functions $X\to V$, when $V$ is a
    vector space, or $L^2$ sections of a vector bundle $V$ over $X$.
    The bundles $\Sigma_\pm$ are the Ramond ($+$) and Neveu--Schwarz
    ($-$) spinor bundles over the circle and $L_{k\cos\theta}$ is a
    trivial line bundle over the circle associated to a unitary
    character of $\RR$, thought of as a one-dimensional abelian group.
    As explained in the bulk of the paper, the resulting description
    of this UIR in terms of square-integrable sections of a
    homogeneous Hilbert bundle over the $2$-sphere can be simplified
    to square-integrable functions on the $3$-sphere with values in a
    one-dimensional complex unitary representation of the subgroup of
    $G$ generated by boosts and translations labelled by
    $(p,k,\theta)$ which are either odd or even under the action of
    the centre $\ZZ_2$.}
\end{table}

\section{Carrollian fields}
\label{sec:carrollian-fields}

In this section we describe some of the UIRs of the Carroll group in
terms of fields in Carroll spacetime.  We will do one example with $E
\neq 0$ and one with $E=0$, which roughly correspond to electric and
magnetic Carroll fields, respectively.  Before doing so, we will
briefly recap the method, which we explain in more detail in
Appendix~\ref{sec:meth-induc-repr}.

Carroll spacetime is the homogeneous space $G/K$ and fields on Carroll
spacetime which transform in some representation of the Carroll group
are sections of homogeneous vector bundles associated to
representations of $K$.  In the previous section we exhibited the
UIRs of the Carroll group as sections of homogeneous vector bundles
over the orbits $\eO_\tau$ in momentum space associated to a UIR $W$
of $K_\tau$.  To view such representations as fields in Carroll
spacetime requires us firstly to make a choice of representation $V$
of $K$ which, when restricted to $K_\tau$, contains a
subrepresentation isomorphic to $W$.  This first step is known as
``covariantisation'' in the Physics literature.  We then apply a
group-theoretical Fourier transform to go from sections of homogeneous
vector bundles over $G/(K_\tau \ltimes T)$ to sections of homogeneous
vector bundles over $G/K$.

We may contrast this with the case of massive representations of the
Poincaré group.  In both cases, the little group $K_\tau \cong \SU(2)$
and $W$ is one of the complex spin-$s$ representations.
Covariantisation differs: in the Poincaré case we need to embed $W$ in
a finite-dimensional (and hence non-unitary) representation of Lorentz
group, whereas in the Carroll case we need to embed $W$ in a
finite-dimensional representation of the homogeneous Carroll group,
which is isomorphic to the three-dimensional euclidean group.  The
main difference, which will have very visible repercussions is that in
the Carroll case, the representation $W$ can be made itself covariant
by declaring the boosts, which form an ideal, to act trivially;
whereas in the Lorentz case this is not possible for any positive
spin.

\subsection{Brief recap of the method of induced representations (continued)}
\label{sec:brief-recap-method-contd}

We retake the algorithm started in
Section~\ref{sec:brief-recap-method} with the procedure of
covariantising the induced representations in terms of fields on
Carroll spacetime $M = G/K$.

\begin{enumerate}[label=(\arabic*),start=5]
\item Pick a representation $V$ of $K$ whose restriction to $K_\tau$
  contains a subrepresentation isomorphic to $W$.  This representation
  need not be unitary.  We typically take it to be irreducible, so as
  to minimise the number of extra degrees of freedom we are
  introducing.  We extend $V$ to a representation of $G$ by declaring
  $t \in T$ to act via the character $\chi_\tau$, just in the same way
  we extended $W$ to a representation of $H = K_\tau \ltimes T$.

\item Let $\zeta: M \to T$ be the coset representative for Carroll
  spacetime sending $x = (t,\x) \mapsto \zeta(x) = \exp(t H + \x
  \cdot \bP)$ and define $\phi: M \to V$ by
  \begin{equation}\label{eq:fourier-trans}
    \phi(x) = \int_{\eO_\tau} d\mu(p)
    \chi_{\sigma(p)\cdot\tau}(\zeta(x)^{-1}) \sigma(p) \cdot \psi(p),
  \end{equation}
  where $\psi : \eO_\tau \to V$.
\item Let $g \in G$ and define $k(g^{-1},x) \in K$ by
  \begin{equation}\label{eq:g-on-zeta}
    g^{-1}\zeta(x) = \zeta(g^{-1}\cdot x) k(g^{-1},x).
  \end{equation}
  Then the action of $g \in G$ on the carrollian field $\phi$ is given
  by
  \begin{equation}\label{eq:g-action-on-phi}
    (g \cdot \phi)(x) = k(g^{-1},x) \cdot \phi(g^{-1}\cdot x).
  \end{equation}

\item This representation of $G$ need neither be unitary nor
  irreducible. To remedy this we must impose field equations which say
  that $\phi$ lies in the kernel of a (pseudo-)differential operator
  which is obtained via the Fourier transform \eqref{eq:fourier-trans}
  from the projector from $V$ to $W$.
\end{enumerate}

Again, these steps are somewhat over-simplified and the more detailed
story is contained in Appendix~\ref{sec:meth-induc-repr}.  In
particular, the ``functions'' $\phi: M \to
V$ are actually sections of a homogeneous vector bundle $E_V = G
\times_K V$ over $M$. These sections are obtained via a
group-theoretical version of the Fourier transform at the level of
the Mackey functions.  Indeed, equation~\eqref{eq:fourier-trans},
which is equation~\eqref{eq:fourier-transform}, is obtained by
evaluating the Fourier-transformed Mackey function $\widehat F$
defined by equation~\eqref{eq:pre-Fourier-transform} at the coset
representative.  Similarly, the expression \eqref{eq:g-action-on-phi},
which agrees with equation \eqref{eq:G-action-fields-final} is
obtained from the more natural transformation
law~\eqref{eq:G-action-fields} of the Fourier-transformed Mackey
function $\widehat F$.

\subsection{Representations with $E\neq 0$ in terms of carrollian fields}
\label{sec:an-explicit-example}

Here we illustrate the method by working out an explicit example of
unitary irreducible representation of (the universal cover of) the
Carroll group with $E_0 \neq 0$ as fields on Carroll spacetime.  The
group element $g(R,\bv,\ba,s)$ is given by equation
\eqref{eq:standard-param}, where we assume that $G$ stands for the
universal cover of the Carroll group and $K$ for the universal cover
of the euclidean group.  In particular, $R$ in
equation~\eqref{eq:standard-param} belongs to $\Spin(3) \cong
\SU(2)$.

As discussed above, this representation is carried by the Hilbert
space of functions $\AA^3 \to V_s$ where $\AA^3 \subset \t^*$ is the
affine plane with $E = E_0$ and $V_s$ is the complex spin-$s$
representation of $\Spin(3)$, which are square-integrable relative to
the inner product defined by integrating the pointwise invariant
hermitian inner product on $V_s$ over $\AA^3$ relative to the
euclidean measure, as in equation~\eqref{eq:massive-inner-product}.
The action of the Carroll group $G$ on this representation was given
above in equation \eqref{eq:g-rep-massive}.

We are ultimately interested in fields defined on Carroll spacetime,
so we may now covariantise this representation by first of all
embedding the representation $V_s$ of $K_\tau$ into a representation
of $K$. One particularly economical choice is to declare that the
boosts act trivially. As described in the introduction this might
seem unconventional from the point of view of Lorentz invariant
theories, since this is impossible for the Lorentz group unless
$s=0$.  Indeed, the commutator of two boosts is a rotation and hence
if the boosts were to act trivially, so would the rotations.  As
explained in Appendix~\ref{sec:meth-induc-repr} and briefly recapped
above, we then need to Fourier transform to arrive at sections of a
homogeneous vector bundle over Carroll spacetime.

Let $\zeta : G/K \to T$ be a coset representative
sending\footnote{Now $t$ is the time coordinate and no longer a
  generic element of the translation subgroup $T$.} $x =
(t,x^a) \mapsto \exp(t H + x^a P_a) \in T$.  Recall that for $\p \in
\AA^3$, $\sigma(\p) = \exp(-\tfrac1{E_0} \p \cdot \bB)$ so that
\begin{equation}
  \sigma(\p)^{-1}\zeta(t, \x) \sigma(\p)= \zeta(t + E_0^{-1} \x \cdot \p, \x),
\end{equation}
so that
\begin{equation}
  \chi_\tau(\sigma(\p)^{-1}\zeta(t, \x)^{-1} \sigma(\p)) = e^{-i E_0 t
    - i \p \cdot \x}
\end{equation}
and hence $\phi(t,\x)$, which we remind the reader takes values in the
spin-$s$ representation of $\SU(2)$, is given by
\begin{equation}
  \label{eq:proper-fourier}
\phi(t,\x) = e^{-i E_0 t}\int_{\AA^3} d^3p e^{-i \p \cdot \x}\psi(\p) \, ,
\end{equation}
which is, up to the factor $e^{-i E_0 t}$ and normalisation, the
spatial Fourier transform of $\psi(\p)$.  Observe that these fields obey
\begin{equation}
  \label{eq:wave_eq_massive}
  \frac{\d\phi}{\d t} = - i E_0 \phi,
\end{equation}
suggesting that they are ``electric'' Carroll fields.

The action of the Carroll group on such fields is easily worked out.
Let $g = g(R,\bv,\ba,s) = g(0,\bzero,\ba,s) \exp(\bv \cdot \bB) R$, with
$R \in \Spin(3)$.  Then
\begin{equation}
  g^{-1}\zeta(x)= R^{-1} e^{-\bv\cdot \bB} t^{-1} \zeta(x) = R^{-1} e^{-\bv\cdot \bB} \zeta(x')
\end{equation}
where if $x = (t,\x)$ then
\begin{equation}
  x' = (t-s, \x - \ba).
\end{equation}
Continuing with the calculation,
\begin{equation}
  R^{-1} e^{-\bv\cdot \bB} \zeta(x') = \zeta(x'') R^{-1} e^{-\bv\cdot \bB},
\end{equation}
where
\begin{equation}
  x'' = (t - s - \bv \cdot (\x - \ba), R^{-1}(\x-\ba)),
\end{equation}
resulting in
\begin{equation}
  \label{eq:carroll-fields-trafos}
  (g\cdot \phi)(t, \x) = \rho(R) \phi(t-s-\bv\cdot (\x - \ba),R^{-1}(\x - \ba)),
\end{equation}
with $\rho$ the spin-$s$ representation of $\SU(2)$.

Notice that under the different kinds of Carroll transformations, the
field $\phi$ behaves as follows:
\begin{itemize}
\item under rotations with $R \in \Spin(3)$,
  \begin{equation}
    \label{eq:rotations}
    (R \cdot \phi)(t, \x) = \rho(R)\phi(t, R^{-1} \x);
  \end{equation}
\item under boosts,
  \begin{equation}
    \label{eq:boosts}
    (e^{\bv\cdot \bB}\cdot \phi)(t, \x) = \phi(t - \bv \cdot \x, \x);
  \end{equation}
\item and under translations,
  \begin{equation}
    \label{eq:translations}
    (e^{s H + \ba\cdot \bP}\cdot \phi)(t, \x) = \phi(t - s, \x - \ba).
  \end{equation}
\end{itemize}
It is interesting to note that the boosts have no action on the field
$\phi$, but only on coordinates, which is quite distinct to Poincaré
where Lorentz boosts also transform the fields.  This is easy to
explain: since the commutator of two Lorentz boosts is a rotation, we
cannot covariantise the inducing representation in the Poincaré case
by simply declaring the boosts to act trivially.

We may calculate the value of the casimirs $H$ and $W^2$ on such
fields.  Let $g(\lambda) = (R(\lambda), \bv(\lambda), \ba(\lambda),
s(\lambda))$ denote a curve through the identity on $G$; that is,
$R(0)=\mathbb{1}$, $\bv(0)=\ba(0) = \bzero$ and $s(0)=0$.  Let us now
act with $g(\lambda)$ on a field $\phi$ according to
equation~\eqref{eq:carroll-fields-trafos} to obtain a curve in the
space of fields.  Its derivative
$\left.\frac{d}{d\lambda}\right|_{\lambda = 0}$ gives
\begin{equation}
  \rho(R'(0)) \phi + \frac{\d\phi}{\d t} (-s'(0) - \bv'(0)\cdot \x) -
  (R'(0)\x) \cdot \boldsymbol{\d}\phi - \ba'(0)\cdot \boldsymbol{\d}\phi,
\end{equation}
where we also use the notation $\rho$ for the representation of
$\so(3)$ on $V_s$.  We can then read off the action of the generators
of $\g$ on $\phi$:
\begin{equation}\label{eq:g-as-vector-fields}
  \begin{split}
    H \phi &= - \frac{\d \phi}{\d t}\\
    P_i \phi &= - \frac{\d \phi}{\d x^i}\\
    B_i \phi &= - x_i \frac{\d \phi}{\d t}\\
    J_i \phi &= \rho(J_i)\phi + \epsilon_{ijk} x_k \frac{\d \phi}{\d x^j}.
  \end{split}
\end{equation}
From this it follows that $H\phi = i E_0 \phi$ and that $W_i := H J_i + \epsilon_{ijk} P_j B_k$ acts simply as
\begin{equation}
  \label{eq:g-action-on-fields}
  \begin{split}
    W_i \phi &= \underbrace{\rho(J_i) \frac{\d\phi}{\d t} + \epsilon_{ijk}x_k \frac{\d^2\phi}{\d t \d x^j}}_{HJ_i \phi} + \underbrace{\left( - \epsilon_{ijk} x_k \frac{\d^2\phi}{\d t \d x^j} \right) }_{\epsilon_{ijk}P_j B_k \phi}\\
    &= \rho(J_i) \frac{\d\phi}{\d t},
\end{split}
\end{equation}
and hence
\begin{equation}
  W^2 \phi = \rho(J^2) \frac{\d^2\phi}{\d t^2} = E_0^2 s(s+1) \phi,
\end{equation}
where we have used that $\rho(J^2)$ acts by scalar multiplication by $-s(s+1) $ on $V_s$.

\subsection{A class of $E=0$ representations in terms of carrollian fields}
\label{sec:another-example}

Let us consider the other class of representations of the Carroll
group which can be carried by (finite-component) fields on Carroll
spacetime: namely, the ones carried by the square-integrable sections
of the line bundle $\mathscr{O}(-n)$ over $\CP^1$, equivalently the
vector bundle over the sphere associated to the one-dimensional
representation $\CC_n$ of $\U(1) \subset \SU(2)$, with boosts acting
trivially.

Let us first see that relaxing this condition by inducing from $\CC_n
\otimes \CC_w$ with $w \neq 0$ results in infinite-component fields.
Indeed, as discussed in Appendix~\ref{sec:meth-induc-repr} and
recapped briefly in Section~\ref{sec:brief-recap-method-contd}, the
first step is to embed the representation $\CC_n \otimes \CC_w$ of
$(\U(1)\ltimes \RR^2)\times \RR$ into a representation of $K \cong
\SU(2) \ltimes \RR^3$.  It is not difficult to see, however, that any
such representation would have to be infinite-dimensional unless $w =
0$.  Indeed, suppose that there is a vector $\psi$ in that
representation with $e^{\bv \cdot \bB} \psi = e^{i\chi(\bv)} \psi$, with
$\chi$ the character $\chi(v_1,v_2,v_3) = w v_3$.  Let $R \in
\Spin(3)$ and consider the vector $R \psi$.  Then
\begin{equation}
  e^{\bv \cdot \bB} R \psi = R R^{-1} e^{\bv \cdot \bB} R \psi = R
  e^{R^{-1}\bv \cdot \bB} \psi = e^{i \chi(R^{-1}\bv)}
  R\psi = e^{i (R \chi)(\bv)} R\psi,
\end{equation}
so that the representations with characters $R \chi$ in the
$\SO(3)$-orbit of $\chi$ also appear.  These are labelled by the
sphere of radius $|w|$ in the dual of $\RR^3$, so that they are
uncountable unless $w = 0$.

Therefore let us take $w=0$ from now on and let us first of all embed
the representation $\CC_n$ of $\U(1)$ inside an irreducible
finite-dimensional representation $V$ of $\SU(2)$.  Every such $V$ is
isomorphic to $V_s$ for some spin $s$ and $\CC_n$ embeds in $V_s$
provided that $|n|\leq 2s$.  The smallest irreducible 
representation containing $\CC_n$ is therefore $V_s$ with $s = \left|
  \frac{n}2 \right|$, where it appears as either the subspace of
highest (if $n>0$) or lowest (if $n<0$) weight vectors in $V_s$.  Let
$V = V_{\left|\frac{n}2\right|}$ from now on; although of course other
representations are possible.\footnote{This ought to be familiar from
  relativistic fields, where, for instance, the representation
  describing a massive scalar field can be embedded as either a
  Lorentz scalar subject to the Klein--Gordon equation or as a massive
  vector field subject to the equation $\d_\mu \d \cdot V = m^2
  V_\mu$, which is the complementary equation to the Proca equation on
  the massive vector field.  In this case the only degree of freedom
  is precisely the divergence $\d \cdot V$ which is a scalar field
  obeying the Klein--Gordon equation.}

Now define $\zeta : G/K \to T$ by $\zeta(x) =
\exp(t H + \x \cdot \bP)$ as a coset representative for Carroll
spacetime $M$ and we obtain a field $\phi$ on $M$ taking values in $V$
given by
\begin{equation}\label{fourier-transform-massless}
  \phi(t,\x) =\int_\CC \frac{2i dz \wedge d\zbar}{(1+|z|^2)^2} e^{-i
    \x \cdot \q(z)} \rho(\sigma(z)) \psi(z),
\end{equation}
with $\rho$ the spin-$s$ representation of $\SU(2)$.  Notice that such
fields do not depend on the time coordinates in Carroll spacetime:
$\frac{\d \phi}{\d t} = 0$, so we may simply write $\phi(\x)$.

The action of the Carroll group on $\phi$ is easy to work out.
Let $g = \exp(s H + \ba \cdot
P) e^{\bv\cdot \bB} R \in G$, where $R \in
\SU(2)$.  For this choice of $g$,
\begin{equation}
  g^{-1} \zeta(x) = R^{-1} e^{-\bv\cdot \bB} \zeta(x')
  \quad\text{where}\quad x'=(t - s, \x - \ba ).
\end{equation}
Continuing with the calculation,
\begin{equation}
  R^{-1} e^{-\bv\cdot \bB} \zeta(x') = \zeta(x'') R^{-1} e^{-\bv \cdot \bB}
  \quad\text{where}\quad x''=\left(t - s - \bv \cdot (\x-\ba),
  R^{-1}(\x - \ba)\right).
\end{equation}
Since $\phi$ does not depend on time ($t$), we arrive at
\begin{equation}
  \label{eq:carroll-trafos}
  (g \cdot \phi)(\x) = \rho(R) \phi(R^{-1}(\x - \ba)),
\end{equation}
showing that both the boosts and the time translations act trivially.
Both casimirs $H$ and $W^2$ act trivially on such fields.

Since the fields are defined on the mass shell where $\|\p\|^2 = p^2$
is a constant, the field $\phi$ satisfies the Helmholtz equation
\begin{equation}
  \label{eq:helmholtz}
  (\bigtriangleup +p^2) \phi(\x) = 0,
\end{equation}
with $\bigtriangleup$ the laplacian in $\RR^3$.  This is easily
derived by simply inserting $0$ in the form $\|\q(z)\|^2 - p^2$ in the
integral~\eqref{fourier-transform-massless} and then noticing that
$\|\q(z)\|^2$ is (minus) the laplacian acting on $e^{-i\x \cdot
  \q(z)}$.

The irreducible representation is carried by those fields $\phi$ for
which the integral in equation~\eqref{fourier-transform-massless}
exists and for which $\psi$ is either a lowest or highest weight
vector in $V$.  For definiteness, let us assume that $\psi$ is a
highest weight vector so that it lies in the kernel of $\rho(J_+)$,
with $J_+ = \begin{pmatrix} 0 & 1 \\ 0 & 0\end{pmatrix}$.
As explained at the end of Appendix~\ref{sec:meth-induc-repr}, we may
derive an equation for such $\phi$ as follows.  We notice that since
$\rho(J_+)\psi(z)=0$, we have that
\begin{equation}
  \int_\CC \frac{2i dz \wedge d\zbar}{(1+|z|^2)^2} e^{-i
    \x \cdot \q(z)} \rho(\sigma(z)) \rho(J_+) \psi(z) = 0.
\end{equation}
Let us define $J_+(z) := \sigma(z) J_+ \sigma(z)^{-1}$, so that the
above equation becomes
\begin{equation}
  \int_\CC \frac{2i dz \wedge d\zbar}{(1+|z|^2)^2} e^{-i
    \x \cdot \q(z)} \rho(J_+(z)) \rho(\sigma(z)) \psi(z) = 0.
\end{equation}
We work out $J_+(z)$ explicitly to be
\begin{equation}
  \begin{split}
      J_+(z) = \sigma(z) J_+ \sigma(z)^{-1} &= \frac1{1+|z|^2}
  \begin{pmatrix}
    z & -1 \\ 1 & \zbar
  \end{pmatrix}
  \begin{pmatrix}
    0 & 1 \\ 0 & 0
  \end{pmatrix}
  \begin{pmatrix}
    \zbar & 1 \\ -1 & z
  \end{pmatrix}\\
  &= \frac1{1+|z|^2}
  \begin{pmatrix}
    -z & z^2 \\ -1 & z
  \end{pmatrix}.
  \end{split}
\end{equation}

To recognise the (pseudo-)differential operator defined by
$\rho(J_+(z))$, let us now calculate the spatial derivatives of
$\phi(\x)$ by differentiating under the integral sign:
\begin{equation}
   \frac{\d}{\d x^j} \phi(\x) =  \int_\CC \frac{2dz \wedge
      d\zbar}{(1+|z|^2)^2} \pi_j(z) e^{-i
    \x \cdot \q(z)} \rho(\sigma(z)) \psi(z),
\end{equation}
so that, using the components of $\q(z)$ in equation~\eqref{eq:q-vector},
\begin{equation}
  \begin{split}
    \frac{\d}{\d x^1} \phi(\x) &=  \int_\CC \frac{4p dz \wedge d\zbar}{(1+|z|^2)^3} \Re(z) e^{-i \x \cdot \q(z)} \rho(\sigma(z)) \psi(z)\\
    \frac{\d}{\d x^2} \phi(\x) &=  \int_\CC \frac{4p dz \wedge  d\zbar}{(1+|z|^2)^3} \Im(z) e^{-i \x \cdot \q(z)} \rho(\sigma(z)) \psi(z)\\
    \frac{\d}{\d x^3} \phi(\x) &=  \int_\CC \frac{2p dz \wedge  d\zbar}{(1+|z|^2)^3} (|z|^2-1) e^{-i \x \cdot \q(z)} \rho(\sigma(z)) \psi(z).
  \end{split}
\end{equation}
This sets up the following dictionary:
\begin{equation}
  \label{eq:dictionary}
  \begin{split}
    \frac{z}{1+|z|^2} &\squigto \frac{i}{2p} (\d_1 + i \d_2)\\
    \frac{1}{1+|z|^2} &\squigto -\frac{i}{2p} (\d_3 + i p)\\
    \frac{z^2}{1+|z|^2} &\squigto -\frac{i}{2p} \frac{(\d_1 + i \d_2)}{\d_3 + i p}.
  \end{split}
\end{equation}
It is then a relatively simple matter to take the $\rho(J_+(z))$ in the
integrand and write it as a (pseudo-)differential operator acting on
the integral, obtaining the somewhat formal expression
\begin{equation}
  \label{eq:non-local}
  \rho\left(
    \begin{pmatrix}
      \d_1 + i \d_2 & \frac{(\d_1 + i \d_2)^2}{\d_3 + i p}\\
      -(\d_3 + i p) & - (\d_1 + i \d_2)
    \end{pmatrix}
\right) \phi(\x)= 0,
\end{equation}
where we have omitted an inconsequential overall factor of $1/(2ip)$
in the left-hand side.

A similar calculation starting with $J_- = \begin{pmatrix} 0 & 0 \\ 1
  & 0 \end{pmatrix}$, whose kernel consists of the lowest weight
vectors, results in
\begin{equation}
  \label{eq:non-local-lwv}
  \rho\left(
    \begin{pmatrix}
      -(\d_1 - i \d_2) & -(\d_3 + i p) \\
      \frac{(\d_1 - i \d_2)^2}{\d_3 + i p} & \d_1 - i \d_2
    \end{pmatrix}
\right) \phi(\x)= 0,
\end{equation}
which is the relevant equation in the case of negative helicity.

These equations look non-local due to the presence of the resolvent
$(\d_3 + i p)^{-1}$, but we can try to make sense of them.  Let us
discuss some explicit cases.

\subsubsection{Helicity $0$}
\label{sec:helicity-0}

This case needs no discussion, since there $\rho$ is the trivial
representation and the only condition on the scalar field is the
Helmholtz equation~\eqref{eq:helmholtz}.

\subsubsection{Helicity $\tfrac12$}
\label{sec:helicity-tfrac12}

In this case, $\rho$ is the defining representation of $\SU(2)$, so
the identity map.  Here $\phi(\x)$ is a $2$-component field and once
we take into account the Helmholtz equation~\eqref{eq:helmholtz}, the
solutions to equation~\eqref{eq:non-local} are also the solutions to
the Dirac-like equation
\begin{equation}
  \label{eq:dirac-like}
  (\slashed{\d} + i p) \phi = 0,
\end{equation}
where $\slashed{\d} = \gamma^i \d_i$ with $\gamma^i$ the representation
of $\Cl(0,3)$ given by
\begin{equation}
  \gamma^1=
  \begin{pmatrix}
    0 & -1 \\ -1 & 0
  \end{pmatrix},\qquad
  \gamma^2=
  \begin{pmatrix}
    0 & -i \\ i & 0
  \end{pmatrix} \qquad\text{and}\qquad
  \gamma^3=
  \begin{pmatrix}
    1 & 0 \\ 0 & -1
  \end{pmatrix}.  
\end{equation}

\subsubsection{Helicity $1$}
\label{sec:helicity-1}

The representation $\rho$ now is the adjoint representation of
$\SU(2)$.  Let $R\in \SU(2)$ be given by
equation~\eqref{eq:R-and-inverse}.  Then the adjoint representation of
$R$ is given by
\begin{equation}
  \label{eq:adjoint-rep}
  \rho(R) =
  \begin{pmatrix}
    \Re(\eta^2 - \xi^2) & - \Im(\eta^2 + \xi^2) & -2\Re (\eta\xi)\\
    \Im(\eta^2 - \xi^2) & \Re (\eta^2 + \xi^2) & -2 \Im (\eta \xi)\\
    2 \Re(\eta\overline\xi) & -2 \Im (\eta \overline\xi) & |\eta|^2 - |\xi|^2
  \end{pmatrix},
\end{equation}
which can be checked to be a matrix in $\SO(3)$.  The matrix $J_+(z)$
belongs to $\fsl(2,\CC)$, the complexification of $\su(2)$.  The way
we calculate $\rho(J_+(z))$ is as follows.  We consider $X \in \su(2)$
and exponentiate $\exp(t X) \in \SU(2)$ and differentiate $\rho(\exp(t
X))$ at $t=0$.  This gives us a map (also denoted) $\rho: \su(2) \to
\so(3)$, which we extend complex-linearly to $\rho: \fsl(2,\CC) \to
\so(3,\CC)$, which can then be evaluated at $J_+(z)$.

Let $X \in \su(2)$ be given by
\begin{equation}
  X =
  \begin{pmatrix}
    i \gamma & \alpha + i \beta\\
    -\alpha + i \beta & -i \gamma
  \end{pmatrix}
\end{equation}
and let $\|X\| := \sqrt{\alpha^2 + \beta^2 + \gamma^2}$.  Then $\exp(t
X) \in SU(2)$ is given by a matrix of the form in
equation~\eqref{eq:R-and-inverse} with
\begin{equation}
  \begin{split}
    \eta(t) &= \cos\left( t \|X\| \right) + i \gamma \frac{\sin\left( t \|X\| \right)}{\|X\|}\\
    \xi(t) &= (\alpha + i \beta) \frac{\sin\left( t \|X\| \right)}{\|X\|}.
  \end{split}
\end{equation}
We insert this into equation~\eqref{eq:adjoint-rep} and differentiate
with respect to $t$ and evaluate at $t=0$ to obtain
\begin{equation}
  \rho(X) =
  \begin{pmatrix}
    0 & - 2 \gamma & -2\alpha\\
    2\gamma & 0 & -2\beta\\
    2\alpha & 2 \beta & 0
  \end{pmatrix},
\end{equation}
which we extend to $\rho : \fsl(2,\CC) \to \so(3,\CC)$ simply by
allowing $\alpha,\beta,\gamma \in \CC$.  For $X = J_+(z)$, we have
\begin{equation}
  2\alpha = \frac{1 + z^2}{1 + |z|^2}, \qquad 2 \beta = \frac{i
    (1-z^2)}{1+|z|^2} \qquad\text{and}\qquad \gamma = \frac{i z}{1+|z|^2},
\end{equation}
so that
\begin{equation}
  \rho(J_+(z)) = \frac1{1+|z|^2}
  \begin{pmatrix}
    0 & -2 i z & - (1 + z^2)\\
    2 i z & 0 & i (z^2 - 1)\\
    1 + z^2 & i (1-z^2) & 0
  \end{pmatrix}.
\end{equation}
Using the dictionary in equation~\eqref{eq:dictionary}, we may write
the pseudo-differential equation for helicity-$1$ fields as
\begin{equation}
  \label{eq:helicity-one-pseudo-diff}
  \begin{pmatrix}
    0 & \d_1 + i \d_2   & \frac{i}{2} \left( \d_3 + i p + \frac{(\d_1 + i \d_2)^2}{\d_3 + i p} \right)\\
    -(\d_1 + i\d_2) & 0 & -\frac12 \left( \d_3 + i p - \frac{(\d_1 + i \d_2)^2}{\d_3 + i p} \right)\\
    -\frac{i}{2} \left( \d_3 + i p + \frac{(\d_1 + i \d_2)^2}{\d_3 + i p}
    \right) & \frac12 \left( \d_3 + i p - \frac{(\d_1 + i
        \d_2)^2}{\d_3 + i p} \right) & 0
  \end{pmatrix} \phi(\x) = 0,
\end{equation}
where $\phi(\x)$ is now a $3$-component field subject to the Helmholtz
equation~\eqref{eq:helmholtz}.  Let us write $\phi =
(\phi_1,\phi_2,\phi_3)$.  Then the solutions of the
equation~\eqref{eq:helicity-one-pseudo-diff} agree with the solutions
of the following differential equations:
\begin{equation}
  \label{eq:helicity-one-local}
  \begin{split}
    (\d_1 + i \d_2)(\phi_1 - i \phi_2) &= - (\d_3 + i p) \phi_3\\
    (\d_1 + i \d_2)\phi_3 &= (\d_3 + i p)(\phi_1 + i \phi_2).
  \end{split}
\end{equation}
Under the assumption that $\phi_i$ are real fields, these
equations imply the Helmholtz equations $\left( \d_1^2 +
  \d_2^2 + \d_3^2  \right) \phi_i = -p^2 \phi_i$ for $i=1,2,3$, and in
fact, breaking the equations up into real and imaginary parts, we find
that they are equivalent to a much simpler equation:
\begin{equation}
  p \phi_i = \epsilon_{ijk} \d_j \phi_k,
\end{equation}
or, equivalently, $d\phi = p \star \phi$, thinking of $\phi \in
\Omega^1(\RR^3)$.  We recognise this equation as the field equation of
euclidean topologically massive Maxwell theory
\cite{Deser:1981wh,Deser:1982vy}, with $\phi$ playing the rôle of the
Hodge dual of the Maxwell field-strength.

\subsubsection{Remarks}
\label{sec:remarks}

The massless UIRs with helicity are such that time translations and
boosts act trivially.  Hence they are actually UIRs of the
three-dimensional euclidean group and, presumably, any euclidean
three-dimensional field theory should serve as a starting point to
constructing massless carrollian field theories with vanishing
centre-of-mass charge.  Since the centre of mass vanishes for these
theories, the symmetries and consequently the particles and theories
are basically represented by aristotelian symmetries where the boosts
play no rôle (see, e.g.,~\cite{Figueroa-OFarrill:2018ilb}).

In massless relativistic theories, the field equations typically
contain solutions with both signs of the helicity.  This is not the
case here.  Indeed, performing a similar calculation for helicity
$-\tfrac12$ results in the ``opposite''  Dirac equation:
\begin{equation}
  \label{eq:dirac-like-negative}
  (\slashed{\d} - i p) \phi = 0.
\end{equation}
Clearly only the only field obeying this and
equation~\eqref{eq:dirac-like} simultaneously is $\phi =0$.

A final remark is that the above equations for helicity $\tfrac12$ and
helicity $1$ are first-order partial differential equations which
imply the Helmholtz equation, which is a second-order partial
differential equation.  This is typically one of the (mathematical)
signatures of supersymmetry and it would be interesting to explore
this further.

\section{Fractonic particles and fields}
\label{sec:fractonic-fields}

The correspondence between Carroll and fracton particles established
in Part~I persists upon quantisation.  As recalled in Part~I, the
dipole algebra corresponding to our notion of fracton is isomorphic to
a trivial one-dimensional central extension of the Carroll algebra.
The additional generator is the time translation generator of the
aristotelian spacetime underlying the fracton dynamics.  The time
translation generator of the Carroll spacetime is the electric charge
generator of the dipole algebra, whereas the Carroll boost generators
are the dipole generators.  There are, of course, no boosts in an
aristotelian spacetime.  A UIR of the dipole group is the tensor
product of a UIR of the Carroll group and a one-dimensional UIR of the
one-dimensional Lie group generated by the additional central
generator.  Any one-dimensional UIR is characterised by a character,
which in the absence of any criterion which would imply the
quantisation of energy of the quantum fracton, is simply given by a
real number.  On the other hand, if we do demand (as might seem
reasonable) that the electric charge be quantised, then this would
restrict the Carroll UIRs with a fractonic interpretation to those
where $E_0$ is quantised in units of an elementary fracton charge.

\subsection{Unitary irreducible representations of the dipole group}
\label{sec:unit-irred-repr-3}

Let $\widetilde G = G \times \RR$ denote the dipole group.  The
aristotelian spacetime $A$ underlying the fractonic theory is a
homogeneous space of $\widetilde G$ with stabiliser $\widetilde K = K
\times \RR$, where $\widetilde K$ is the group generated by rotations,
boosts and time translations (in the Carroll language) or by
rotations, dipole and electric charge generators (in the fracton
language).  In other words, we have a $\widetilde G$-equivariant
diffeomorphism
\begin{equation}
  A \cong \widetilde G/\widetilde K  \cong\left( G/(K\times \RR) \right) \times \RR,
\end{equation}
where the copy of $\RR$ in the ``denominator'' is the Carroll time
translation subgroup, whereas that in the ``numerator'' is the fracton
time translation subgroup.  In this section we will let $t$ denote the
aristotelian time coordinate.  We will introduce $H_F$ as the
aristotelian time translation generator and we shall relabel the
Carroll generators $H,\bB$ to $Q,\bD$, with brackets
$[D_i, P_j] = \delta_{ij} Q$.  We will also introduce the notation
$\widetilde T = T \times \RR$ and $\widetilde\t = \t \oplus \RR H_F$.
A typical element $\widetilde\tau \in \widetilde\t^*$ is denoted now
$(q,\p,E)$ corresponding to the electric charge, the spatial momentum
and the energy, respectively: in other words,
$\left<\widetilde\tau,Q\right>=q$,
$\left<\widetilde\tau,\bP\right>=\p$ and
$\left<\widetilde\tau,H_F\right>=E$.

UIRs of the dipole group have a constant (fracton) energy.  Letting
$E$ now denote the fracton energy, we have the following UIRs of
the dipole group, where $\CC_E$ denotes the one-dimensional
representation of the aristotelian time translation subgroup
associated with the character $\chi(e^{s H_F}) = e^{i sE}$:
\begin{description}
\item[$\Romanbar{I}(s,E)$] $\cong \Romanbar{I}(s) \otimes \CC_E$, with $2 s \in \NN_0$ and $E \in \RR$;
\item[$\Romanbar{II}(s,q,E)$] $\cong \Romanbar{II}(s,q)\otimes \CC_E$, with $2 s \in \NN_0$, $q \in \RR$ (or $\ZZ$ if quantised) and $E \in \RR$;
\item[$\Romanbar{III}(n,p,E)$] $\cong \Romanbar{III}(n,p)\otimes \CC_E$, with $n \in \ZZ$, $p>0$ and $E \in \RR$;
\item[$\Romanbar{III}'(n,d,E)$] $\cong \Romanbar{III}'(n,k)\otimes \CC_E$, with $n \in \ZZ$, $d>0$ and $E \in \RR$; 
\item[$\Romanbar{IV}_\pm(n,p,d,E)$] $\cong \Romanbar{IV}_\pm(n,p,k)\otimes \CC_E$, with $n \in \ZZ$, $p>0$, $d>0$ and $E \in \RR$;
\item[$\Romanbar{V}_\pm(p,d,\theta,E)$] $\cong \Romanbar{V}_\pm(p,d,\theta) \otimes \CC_E$, with $p,d>0$, $\theta \in (0,\pi)$ and $E \in \RR$.
\end{description}
These representations have the same underlying Hilbert space as the
corresponding representations of the Carroll group and the group
element $(g,s)\in G \times \RR$ acts as
\begin{equation}
  (g,s)\cdot \psi = e^{iEs} g\cdot \psi.
\end{equation}

Let $\widetilde \tau = (q,\p,E) \in \widetilde \t^*$ and let
$\chi_{\widetilde\tau}$ denote the unitary character defined by
\begin{equation}
  \chi_{\widetilde\tau}\left(e^{\theta Q + \ba\cdot \bP + s H_F}\right) =
  e^{i(\theta q + \ba\cdot\p + s E)}.
\end{equation}

The first step in deriving the expressions for the aristotelian fields
is to extend the $K_\tau$-representation $W$ first to a representation
$W \otimes \CC_E$ of $K_\tau \times \RR$, and then to a representation
$V \otimes \CC_E$ of $K \times \RR$.  As we did earlier with the
Carroll group, we may extend $V \otimes \CC_E$ to a representation of
the whole dipole group via the character $\chi_{\widetilde\tau}$
above.  These unitary representations are described as sections of
homogeneous vector bundles over $\eO_{\widetilde\tau} \cong \eO_\tau$.
Let $\zeta : A \to \widetilde T$ send the aristotelian coordinates
$(t,\x) \mapsto \exp(\x \cdot \bP + t H_F)$.

\subsection{Charged aristotelian fields}
\label{sec:charg-arist-fields}

Let us consider the case of UIRs with nonzero electric
charge, so of class $\Romanbar{II}(s,q,E)$ with $q \neq 0$.   They
correspond to $\widetilde\tau = (q,\bzero,E)$.  The coset
representative $\sigma: \eO_{\widetilde\tau} \to K$ is given as before
by $\sigma(\p) = \exp(-\tfrac1q \p \cdot \bD)$.  Then a short
calculation gives
\begin{equation}
  \zeta(t,\x) \sigma(\p)  =\sigma(\p) \zeta(t,\x) \exp\left( \tfrac1q
    \p\cdot\x Q \right),
\end{equation}
which results in the aristotelian field
\begin{equation}
  \phi(t,\x) = e^{-iE t} \int_{\AA^3} d^3p e^{-i\p\cdot\x} \psi(\p) \,,
\end{equation}
taking values in the spin-$s$ representation of $\SU(2)$.  To work out
the action of $\widetilde G$ on such fields, we first consider the
general group element
\begin{equation}
  g=g(R,\bom,\ba,\theta,s) = e^{s H_F + \theta Q + \ba \cdot \bP}
  e^{\bom \cdot \bD} R,
\end{equation}
with $R \in \SU(2)$ and calculate
\begin{equation}
  g^{-1} \zeta(t,\x) = \zeta(t-s, R^{-1}(\x - \ba))
  e^{-\theta - \bom \cdot (\x - \ba)Q} R^{-1} e^{-\bom\cdot \bD},
\end{equation}
from where we deduce (as in Appendix~\ref{sec:meth-induc-repr}) that
\begin{equation}
  (g \cdot \phi)(t,\x) = e^{i q(\theta + \bom\cdot (\x - \ba))} \rho(R)
  \phi(t-s, R^{-1}(\x - \ba)) \, ,
\end{equation}
with $\rho$ the spin-$s$ representation of $\SU(2)$. Let us emphasise
that, as expected, there is no action of the dipole transformations on
the coordinates, cf.~\eqref{eq:proper-fourier}, and they indeed
transform as expected under charge and dipole transformations
\begin{equation}
  \label{eq:charge-dip-rot}
  \left(e^{\theta Q + \bom \cdot \bD} \cdot \phi \right)(t,\x) = e^{i q(\theta + \bom\cdot \x)} \phi(t, \x ) \, .
\end{equation}

\subsection{Neutral aristotelian fields}
\label{sec:neutr-arist-fields}

These provide field theoretical realisations of UIRs of class
$\Romanbar{III}(n,p,E)$.  Here $\widetilde \tau = (0,\p, E)$ and
$\eO_{\widetilde \tau}$ is the $2$-sphere of radius $\|\p\|$.  We
think of the sphere as the extended complex plane as we did when
discussing carrollian fields, with coset representative $\sigma(z)$
given by \eqref{eq:sigma-z} and
$\zeta(t,\x) = \exp(\x \cdot \bP + t H_F)$ as above.  The expression
for the neutral field is now
\begin{equation}
  \phi(t,\x) = e^{-i E t} \int_\CC \frac{2i dz \wedge d\zbar}{(1 +
    |z|^2)^2} e^{-i \x \cdot \q(z)} \rho(\sigma(z)) \psi(z) \, ,
\end{equation}
where $\q(z)$ is given by equation~\eqref{eq:q-vector}.  To work out
the $\widetilde G$-action, we consider $g = g(R,\bom,\ba,\theta,s)$ as
above and (via equation~\eqref{eq:pre-Fourier-transform}) work out that
\begin{equation}
  (g \cdot \phi)(t,\x) = \rho(R) \phi(t-s, R^{-1}(\x - \ba)) \, ,
\end{equation}
since $q = 0$ now.  Again the irreducible representations are those
where $\phi$ is either a highest (if $n>0$) or lowest (if $n<0$)
weight vector of the spin-$|\frac{n}2|$ representation $\rho$ of
$\SU(2)$.  The discussion in Section~\ref{sec:another-example} applies
mutatis mutandis.  The field equations in that section for the
cases of $s=0,\frac12,1$ are to be supplemented by the condition
$\frac{\d\phi}{\d t} = -i E \phi$.

\section{Quantum field theory}
\label{sec:carr-fract-quant}

The unitary irreducible representations of the Carroll and dipole
groups can be used as a starting point for the development of
carrollian\footnote{We are aware of a forthcoming work which also
  discusses Carroll quantum field theories~\cite{deBoer:2023fnj}.} and
fractonic quantum field theories, via the process known as second
quantisation. Multiparticle states formed from the quantum states of
the unitary irreducible representations presented in
Section~\ref{sec:summary-1} will then span the corresponding Fock
spaces.

The following discussion centers around non-interacting quantum field
theories, with a particular focus on scalar representations
corresponding to massive carrollion/fractonic monopoles
$\Romanbar{II}(s=0)$ and massless carrollions/aristotelions
$\Romanbar{III}(n=0,p)$.

\subsection{Massive carrollions/fractonic monopole}
\label{sec:mass-carr-monop}

While most of what we are saying can be generalised to generic spin we
will in the following restrict to spin $s=0$. This means the we focus
on massive carrollions and fractonic monopoles, corresponding to UIRs of
class $\Romanbar{II}(s=0,E_0)$ and $\Romanbar{II}(s=0,q,E)$,
respectively.

The quantum field theory of an interacting real massive scalar field
with action
\begin{equation}
  I=\int dtd^{3}x\left(\frac{1}{2}\dot{\phi}^{2}-\frac{1}{2}E_{0}^{2}\phi^{2}-V(\phi)\right),
\end{equation}
was studied by Klauder in the early 70s~\cite{Klauder:1970cs,
  Klauder:1971zz} (see also the references therein and Chapter 10
in~\cite{Klauder:2000ud} for an useful overview). Due to the absence
of gradient terms $(\pd\phi)^{2}$ and the resulting independence of
the evolution of the dynamics at each point in space, he called this
scalar field theory ``ultralocal.'' In recent times this theory has
reemerged as ``electric'' Carroll scalar field
theory~\cite{Henneaux:2021yzg,deBoer:2021jej}. The space of solutions
of the free theory, where $V(\phi)=0$, coincides with that of the free
field equation for the massive carrollion
in~\eqref{eq:wave_eq_massive} when we consider simultaneously
solutions with positive and negative energies. Therefore, in the
context of second quantisation, the Fock space of this model will be
spanned by multiparticle states formed from the vectors of the unitary
irreducible representation $\Romanbar{II}(s=0,\pm E_0)$. In
particular, when considering configurations involving both positive
and negative energies, according to~\eqref{eq:proper-fourier} the wave
function takes the form
\begin{equation}
  \phi\left(t,\boldsymbol{x}\right)
  =\int_{\mathbb{A}^{3}}d^{3}p\left(e^{-i\left(E_{0}t+\p \cdot \x\right)}\psi\left(\boldsymbol{p}\right)
    +e^{i\left(E_{0}t+\p \cdot \x\right)}\psi^{\dagger}\left(\boldsymbol{p}\right)\right).\label{eq:expansion_scalar_field}
\end{equation}
The operators $\psi^{\dagger}\left(\boldsymbol{p}\right)$ and
$\psi\left(\boldsymbol{p}\right)$ are interpreted as creation and
annihilation operators. Thus, if the vacuum state is denoted by
$\ket{0}$, then one-particle states belonging to the
representation $\Romanbar{II}(s=0,E_0)$ are given by
\begin{equation}
  \label{eq:1-particle-state-massive} 
  \ket{\p} =\psi^{\dagger}\left(\p\right)\ket{0} \,
\end{equation}
or $\ket{E_{0},\p}$ if we wish to make the specific energy explicit.
 
This theory can be derived from an ultrarelativistic limit
$(c \rightarrow 0)$ of a real Klein-Gordon field. In this limit, all
the frequencies of the relativistic scalar field
$\omega=\sqrt{E_0^{2}+ c^{2}\boldsymbol{p}^{2}}$ collapse to a fixed
value $\omega=E_0$. Geometrically, this is the limit in which a
hyperboloid in momentum space tends to the plane, pictured in
Figure~\ref{fig:mom_lim} (the full limit is given in Part I, Figure
2). Therefore, every one-particle
state~\eqref{eq:1-particle-state-massive} possesses exactly the same
energy $E_{0}$, irrespective of its momentum. Consequently, there
exist an infinite degeneracy of states with identical energy, which
follows from the fact that the Hamiltonian is a central element of the
Carroll algebra.

The main points discussed in the carrollian context are also valid for
fractons, albeit with a different physical interpretation. For
example, the free part of the action that describes the complex scalar
field model of fractons introduced by Pretko~\cite{Pretko:2018jbi} is
given by
\begin{equation}
  I=\int dtd^{3}x\left[\dot{\phi}^{*}\dot{\phi}-E^{2}\phi^{*}\phi\right],\label{eq:I_scalar_fracton}
\end{equation}
This theory is described by wave functions that belong to the unitary
irreducible representation $\Romanbar{II}\left(s=0,q,E\right)$,
which corresponds to the charged aristotelian fields discussed in
Section~\ref{sec:charg-arist-fields}.

Similar to the Carroll case, there is also an infinite degeneracy of
eigenstates having the same energy. In the context of fractons, the
degeneracy also extends to the vacuum because the representations of
class $\Romanbar{II}$ allow for a vanishing value of the energy. For
example we are free to create a large number of monopoles with zero
energy and arbitrary momentum. The degeneracy in the energy makes the
study of statistical mechanics of fractons with dipole symmetries a
non-trivial task. A possibility explored earlier in the literature
(see footnote~\ref{fn:1} and~\cite{Jensen:2022iww} and references
therein) is to describe these models on a finite lattice, that acts as
a UV regulator for the momentum. However, the continuum limit of
quantities such as the partition function in the canonical ensemble,
or the entropy in the microcanonical ensemble, is not clearly
understood. This can be attributed to the fact that the Gibbs operator
$e^{-\beta \hat H}$ is not trace class.\footnote{For a single massive
  Carroll particle we compute the character for a pure imaginary time
  translations, or equivalently, the canonical partition function.
  Formally this is given by
  $\mathrm{Tr} (e^{-\beta \hat H}) \propto \int d^{3}p e^{-\beta
    E_{0}} \delta^{3}(0)$ where $E_{0}$ is independent of $\p$. We see
  two divergences, the first $\delta^{3}(0)$ is due to the infinite
  volume in space and therefore an infrared divergence. It is also
  present for massive Poincaré particles (see,
  e.g.,~\cite{Oblak:2016eij}) and one way to regulate it is by putting
  the particle in a box. The second divergence is due to $\int d^{3}p$
  the infinite volume integral in momentum space (see
  Figure~\ref{fig:mom_lim}) and it is therefore an ultraviolet
  divergence. This is a carrollian feature due to the fact that the
  energy is independent of the momentum and can potentially be
  regulated by putting the particle on a lattice. Similar remarks
  apply to the monopoles.} One potential approach to address this
problem could be to add additional operators to the partition function
in order to lift the infinite degeneracy, akin to what is usually done
in the representation theory of infinite-dimensional algebras (e.g.,
W-algebras) in order to compute characters. Therefore, the
thermodynamics for carrollian or fracton theories of this type is
subtle.

The previous discussions were mainly centred on free field theories.
However, it is natural to explore the implications of incorporating
interactions. A potential that is invariant under dipole
transformations and is of quartic order in the field $\phi$ is given
by~\cite{Pretko:2018jbi}
\begin{equation}
V=\lambda\left|\phi\d_{i}\d_{j}\phi-\d_{i}\phi\d_{j}\phi\right|^{2}+\lambda'\phi^{*2}\left(\phi\d_{i}\d^{i}\phi-\d_{i}\phi\d^{i}\phi\right) \, ,
\end{equation}
where $\lambda$ and $\lambda'$ are coupling constants. The unitary
irreducible representations of class $\Romanbar{II}$ can be used to
describe asymptotic states for this class of interacting quantum field
theories.

Note that in theories with conserved dipole charges, the notion of a
scattering process between fractonic monopoles is subtle. If one
assumes that the initial and final states are separated by large
enough distances such that the interaction between them is negligible,
then the free monopoles will not move and the scattering process will
never take place. For a scattering process to exhibit non-trivial
behaviour, it is necessary for at least some of the interaction to
influence the asymptotic monopole states. Alternatively, due to their
unrestricted mobility, the scattering between dipoles emerges as a
natural physical process. A composite dipole can arise from the
binding of two monopole states possessing charges of equal magnitude
but opposite signs. Conversely, fundamental dipoles correspond to the
unitary irreducible representations of classes $\Romanbar{III}'$,
$\Romanbar{IV}_\pm$ and $\Romanbar{V}_{\pm}$.

\subsection{Massless carrollions/aristotelions}
\label{sec:massl-carr}

Massless carrollions/aristotelions are built on UIRs of class
$\Romanbar{III}(n=0,p)$ where we again focus on vanishing helicity,
hence $n=0$.

As discussed in Section~\ref{sec:another-example}, the wave function
for scalar massless carrollions/aristotelions is, basically by
construction, time-independent:
\begin{align}
  \label{eq:tzero}
  \pd_{t} \phi = 0,
\end{align}
and satisfies the Helmholtz equation
\begin{equation}
  \label{eq:helmholtz-1}
  (\bigtriangleup +p^2) \phi(\x) = 0.
\end{equation}
If the wave function $\phi(\x)$ is real, it will be promoted to a
hermitian operator in the second quantisation. Following
\eqref{fourier-transform-massless} the general solution to the
Helmholtz equation can be written as
\begin{equation}
  \phi(\x) =\int_\CC \frac{2i dz \wedge d\zbar}{(1+|z|^2)^2} \left(e^{-i
    \x \cdot \q(z)} \psi(z)+e^{i
    \x \cdot \q(z)} \psi^{\dagger}(z)\right).
\end{equation}
The functions $\psi^{\dagger}(z)$ and $\psi(z)$ can be promoted to
creation and annihilation operators, respectively.  Therefore,
one-particle states belonging to the representation of class
$\Romanbar{III}(n=0,p)$ can be obtained by acting on the vacuum
$\ket{0}$ according to
\begin{equation}
\ket{z} =\psi^{\dagger}\left(z\right)\ket{0} \, ,
\end{equation}
where $z$ denotes a point on the $2$-sphere defined by the constraint
$\|\p\|^{2}=p^{2}$. All states associated with massless
carrollions/aristotelions have zero energy hence standard
thermodynamics and statistical mechanics for massless carrollian
quantum field theories is trivial. This observation is in line with
the time-independence of the field $\phi(\x)$ and the absence of a
notion of ergodicity.

It is worth mentioning that whilst this theory shares similarities
with the scalar field theory referred to as the ``magnetic'' Carroll scalar
field theory proposed in \cite{Henneaux:2021yzg, deBoer:2021jej},
the theories are not equivalent.  The field equation of the magnetic
Carroll scalar field is described by the Helmholtz equation
supplemented with a source term
\begin{equation}
  \label{eq:helmholtz-2}
  (\bigtriangleup +p^2) \phi(\x) = \dot{\pi}\left(t,\boldsymbol{x}\right),
\end{equation}
where $\pi\left(t,\boldsymbol{x}\right)$ is the canonical momentum
conjugate to $\phi(\x)$. Thus, does not describe the free theory
corresponding to a UIR of the Carroll group. To characterise a UIR of
the Carroll group, additional conditions corresponding to
$\dot{\pi}=0$ need to be taken into account.
In~\cite{Figueroa-OFarrill:2023vbj} we proposed an action of the form
\begin{align}
  \label{eq:mag-other}
  I[\phi,\pi,u]
  = \int d t d^{3}x 
  \left(
  \pi\dot \phi - u (\bigtriangleup + p^{2})\phi
  \right) .
\end{align}
As the magnetic theory, characterized by eq.~\eqref{eq:helmholtz-2}, does not describe a UIR, it opens the possibility for its statistical mechanics to be non-trivial. We leave this question for future investigations.

\section{Discussion}
\label{sec:discussion}

We classified and related unitary irreducible representations of the
Carroll and dipole groups, i.e., we defined quantum Carroll and fracton
particles, which are summarised in Section~\ref{sec:summary-1} and
Table~\ref{tab:carrvsfrac}. This lifts the correspondence between
elementary Carroll and fracton
particles~\cite{Figueroa-OFarrill:2023vbj} to the quantum world.

The UIRs have distinctive features depending on whether the Carroll
energy/fracton charge vanishes or not. Isolated massive carrollions and
monopoles have quantised spin and a position operator and, when
isolated, do not move. The can be represented either as wave functions
of momenta or in centre-of-mass/dipole-moment space, which are related
via a Fourier transform.

On the other hand for vanishing energy (or fracton charge) we find
massless carrollions or neutral fractons. When the centre-of-mass or
dipole moment vanishes, massless carrollions or aristotelions have
quantised helicity and are intrinsically very similar.

We also constructed field theories in Carroll or aristotelian space
which we supplemented by free field equations projecting to the UIR.
Alternatively, the wave equations for scalar fields could be derived
by applying Dirac quantisation to the constraints that define the
classical particle models\footnote{A different class of classical
  fracton particle models with subsystem symmetries were constructed
  in~\cite{Casalbuoni:2021fel}.} described in Part~I.

Finally, we commented on the Carroll and fracton quantum field
theories for which the particles can be thought of as elementary
excitations. We highlighted the relation of massive Carroll and
charged fracton theories to ultralocal
theories~\cite{Klauder:1970cs,Klauder:1971zz} and emphasised that 
massive Carroll and (charged) fracton theories share similarities,
like puzzling thermodynamic behaviour related to their infinite
degeneracy with regard to energy and charge. We also pointed out that
the magnetic scalar theory has a different number of degrees of
freedom with respect to the free theory and presented an alternative
action~\eqref{eq:mag-other} which does.

There are various interesting points for further exploration:
\begin{description}
\item[Interacting theories.] 

  Even though most of the discussion has concentrated on free
  theories, it is worth exploring the potential application of our
  analysis to interacting theories. Some carrollian/fractonic models
  with non-trivial interactions include self-interacting fracton
  scalar field theories and their coupling to fracton
  electrodynamics~\cite{Pretko:2018jbi,Seiberg:2019vrp,Gorantla:2022eem,Jensen:2022iww},
  Carroll scalar fields and their coupling to Carroll
  electrodynamics~\cite{Klauder:1970cs,Brauner:2020rtz,
    Mehra:2023rmm,Banerjee:2023jpi}, theories of Yang--Mills
  type~\cite{Islam:2023rnc,Islam:2023iju}, Carrollian
  gravity~\cite{Isham:1975ur,
    Teitelboim:1981ua,Henneaux:1979vn,Hartong:2015xda,Bergshoeff:2017btm,Henneaux:2021yzg,Perez:2021abf,Hansen:2021fxi,
    Perez:2022jpr,Figueroa-OFarrill:2022mcy,Campoleoni:2022ebj,Sengupta:2022rbd},
  lower dimensional Chern--Simons
  theories~\cite{Bergshoeff:2016soe,Matulich:2019cdo,Huang:2023zhp} or
  extensions
  thereof~\cite{Ravera:2019ize,Concha:2021jnn,Concha:2022muu},
  Carrollian JT or dilaton
  gravity~\cite{Grumiller:2020elf,Gomis:2020wxp} or their supergravity
  versions~\cite{Ravera:2022buz}, and theories with spacetime
  subsystem symmetries~\cite{Baig:2023yaz,Kasikci:2023tvs}. An extensive list of
  references can be found in the reviews \cite{Nandkishore:2018sel,
    Pretko:2020cko, Bergshoeff:2022eog}. It would be interesting to
  explore whether these theories can be regarded as descriptions of
  interacting particles that belong to the UIRs of the Carroll and/or
  dipole groups, and to investigate the role played by the elementary
  particles in the development of perturbation theory.

  It might be interesting to contrast this with the existing
  techniques used for ultralocal quantum field theories
  (e.g.,~\cite{Klauder:2000ud} Section 10).

\item[Relation to timelike symmetries.] 

  In~\cite{Gorantla:2022eem}, ``timelike'' higher-form global
  symmetries were used to describe fractons. It might be interesting
  to understand the relation between these generalised symmetries and
  the definition of fractons as UIRs.

\item[Lattice field theory.]

  In this work we have focused on continuous symmetries. It might be
  interesting to understand the description of these symmetries on the
  lattice.  See \cite{Gorantla:2022eem}
  and~\cite[Appendix~D]{Jensen:2022iww} for interesting comments.

\item[Generic massless and dipole particles.]

  The generic massless or generic dipole representations have quite
  exotic properties, but let us highlight some of their intricate
  features.

  Similar to the massless continuous- or infinite-spin
  representations of the Poincaré group, they are actually the generic
  case in the massless/neutral sector (e.g.,~\cite{Bekaert:2017khg} provides a
  review of the Poincaré case). Continuous-spin particles share some
  similarities with this case and it would be interesting to further
  contrast their interesting properties. Continuous-spin UIRs
  are often discarded by causality~\cite{Abbott:1976bb,Hirata:1977ss}
  arguments, but it is not clear if this applies to the case at
  hand, after all the theories are not Poincaré
  invariant. Continuous-spin particles too cannot be localised to a 
  point~(see, e.g.,~\cite{Longo:2015tra} and references therein) and
  again maybe this is a feature, rather than a bug, for carrollions
  or dipoles?

\item[Carroll/fractons and flat space holography.]

  Carrollian physics naturally emerges in the study of the asymptotic
  structure of spacetime in the absence of a cosmological
  constant~(see, e.g., \cite{Duval:2014uva, Gibbons:2019zfs,
    Figueroa-OFarrill:2021sxz, Donnay:2022aba, Bagchi:2022emh,
    Bekaert:2022oeh, Saha:2023hsl, Salzer:2023jqv, Nguyen:2023vfz}), due to the
  underlying equivalence between BMS and conformal Carroll
  algebras~\cite{Duval:2014uva}. Although our work does not focus on 
  the conformal extension, some of the structure of flat space
  holography is already dictated by Carroll symmetries
  alone:\footnote{Some of the relevance of Carroll symmetries derives
    from the conformal extension of $2+1$ dimensional Carroll
    symmetries. Even though the Carroll algebra in $2+1$ dimensions 
    allows for nontrivial central extensions, the conformal extension
    does not (basically because it is isomorphic to the Poincaré
    algebra). Therefore most of our results apply to this case.}

  \begin{enumerate}[label=(\roman*)]
  \item As shown, e.g., in~\cite{Satishchandran:2019pyc,Bekaert:2022ipg}
    the asymptotic behaviour of a massless scalar field on a flat
    spacetime at null infinity exhibits two alternative sectors: a
    radiative and a non-radiative one. From the perspective of
    (conformal) carrollian symmetries these two branches can be traced
    back to the imposition of Carroll symmetries and the choice of
    vanishing or non-vanishing energy, as we have also presented in
    this work and in Part~I.
    
    One way to see this is by looking at the two-point functions of
    Carroll field theories which have two branches by only imposing
    Carroll
    symmetries~\cite{deBoer:2021jej,Chen:2021xkw,Bagchi:2022emh}, with
    further refinements once the extended symmetries are taken into
    account.
    
  \item If we focus on the non-radiative sector and we denote the
    leading and subleading terms of the scalar field by $\phi_{0}$ and
    $\phi_{1}$, respectively, then they satisfy the following
    equations:
    \begin{equation}
      \label{eq:flat-exp}
      \dot{\phi}_{0}=0\,,\qquad\left(\Delta_{S^{2}}+h\left(h-1\right)\right)\phi_{0}=-2h\dot{\phi}_{1},
    \end{equation}
    where $h$ is a real number and $\Delta_{S^{2}}$ is the Laplacian
    on the round two-sphere. Note that these equations coincide with
    that of the magnetic Carroll scalar field~\eqref{eq:helmholtz-2},
    with $p^{2}=h\left(h-1\right)$ and $\pi=-2h\phi_{1}$. However,
    equation~\eqref{eq:flat-exp} is defined on the $2$-sphere as a
    consequence of the topology of future (or past) null infinity. It
    could be interesting to explore this relation and the possible
    role of the UIR $\Romanbar{III}(n,p)$ for flat space holography.

  \item Massless carrollions (representations $\Romanbar{III}(n,p)$)
    are zero-energy states naturally defined on the ``celestial
    sphere.'' It might be worth to explore the potential connection
    with soft degrees of freedom.
  \end{enumerate}

\item[Timelike infinity and fractons on hyperbolic space.] We also
  proposed to generalise this correspondence to curved
  space~\cite{Figueroa-OFarrill:2023vbj}. This means to add
    \begin{align}
      \label{eq:curved-gen}
      [P_{a},P_{b}] &= - \Lambda J_{ab} & [P_{a},H] &= \Lambda B_{a}
    \end{align}
    to the Carroll algebra~\eqref{eq:3-carroll-algebra}, leading to
    AdS Carroll~\cite{Figueroa-OFarrill:2018ilb,Morand:2018tke} and
    \begin{align}
      \label{eq:frac-hyp}
      [P_{a},P_{b}] &= - \Lambda J_{ab} & [P_{a},Q] &= \Lambda D_{a}
    \end{align}
    to the dipole algebra~\eqref{eq:fracton-algebra-intro}. This means
    that the underlying space geometry is now not flat but given by
    three-dimensional hyperbolic space or the $3$-sphere, for
    $\Lambda<0$ and $\Lambda >0$, respectively.  The relation to
    timelike infinity is given by the fact that AdS Carroll is the
    homogeneous model for the blow up of timelike infinity of
    asymptotically flat spacetimes~\cite{Figueroa-OFarrill:2021sxz}.

    For the quantum version of the correspondence on curved space it
    is interesting to note that AdS Carroll is a homogeneous space of
    the Poincaré group.  Consequently, the quantum particles of the
    algebras described above are the same as the ones classified by
    Wigner~\cite{Wigner:1939cj}. In this sense AdS Carroll, fractons
    on curved space and flat space (holography) are indeed connected.
    (In this context see
    also~\cite{Gromov:2017vir,Slagle:2018kqf,Bidussi:2021nmp,Jain:2021ibh}.
    It could be interesting to understand if there is any relation to
    the models discussed in~\cite{Yan:2022yix,Yan:2023lmj}).
\end{description}
The tools used in this work are not restricted to the Carroll and
dipole groups and we will discuss other particles with restricted
mobility in a future work~\cite{IP:2023}.

\acknowledgments

We thank Andrew Beckett, Jelle Hartong, Emil Have and Jakob Salzer for useful
discussions and Simon Pekar for a careful reading of an earlier
version of this draft.  The research of AP is partially supported by
Fondecyt grants No 1211226, 1220910 and 1230853. SP is supported by
the Leverhulme Trust Research Project Grant (RPG-2019-218) ``What is
Non-Relativistic Quantum Gravity and is it Holographic?''.

\appendix

\section{The method of induced representations}
\label{sec:meth-induc-repr}

In this appendix we will summarise the salient points of the method of
induced representations to construct unitary irreducible
representations of a group with an abelian normal subgroup and their
description as spacetime fields subject to the free field equations.
This was pioneered by Wigner \cite{Wigner:1939cj} for the case of the
Poincaré group and developed into a mathematical theory by Mackey.
Most of this material is standard.  See, for example, \cite{MR0495836}
or \cite{MR0213463,MR0479129,MR0479084,MR0479085}.

\subsection{Induced representations à la Mackey}
\label{sec:induc-repr-la}

Let $G = K \ltimes T$ be a connected Lie group with $T$ an abelian
normal subgroup which we will assume to be simply connected, so
isomorphic to $\RR^n$ for some $n$.  We shall let
$\g = \fk \ltimes \t$ denote their respective Lie algebras.  Due to
the semidirect product structure, $K$ acts on $T$ and hence on $\t^*$.
Let $\tau \in \t^*$ and let $\eO_\tau = K \cdot \tau$ denote its
$K$-orbit.  Let $K_\tau \subset K$ denote the stabiliser subgroup, so
that $\eO_\tau$ is $K$-equivariantly diffeomorphic to $K/K_\tau$.  We
will use in the sequel an equivalent description of $\eO_\tau$ as
$G/(K_\tau \ltimes T)$; although $G$ does not act effectively, since
$T$ acts trivially on $\eO_\tau$.  Every $\tau \in \t^*$ defines a
character $\chi_\tau$ of $T$ and hence a one-dimensional unitary
representation: if $t = \exp(X) \in T$, for some $X \in \t$, then
$\chi_\tau(t) = e^{i\left<\tau, X\right>}$ with $\left<-,-\right>
:\g^* \times \g \to \RR$ denoting the dual pairing.  We will let $W$
denote a complex unitary irreducible representation of $K_\tau$.  It
is a representation of $K_\tau \ltimes T$ via
\begin{equation}
  \label{eq:KT-on-W}
  (t h) \cdot w = \chi_\tau(t) h\cdot w
\end{equation}
for all $t \in T$, $h \in K_\tau$ and $w \in W$.  In this
representation, $K_\tau$ and $T$ commute.  Hence $t h$ and $h t$ act
in the same way.

We will let $E_W := K \times_{K_\tau} W$ denote the homogeneous vector
bundle over $\eO_\tau$ associated to $W$.  We will let $\Gamma(E_W)$
denote the space of smooth sections of $E_W \to \eO_\tau$.  Such
sections admit a different characterisation in terms of so-called
Mackey functions: smooth functions $f : K \to W$ which are
$K_\tau$-equivariant: that is, for all $k \in K$ and $h \in K_\tau$,
\begin{equation}
  f(k h) = h^{-1} \cdot f(k),
\end{equation}
where $\cdot$ stands for the linear $K_\tau$-action on $W$.
We will let $C^\infty_{K_\tau}(K,W)$ denote the vector space of Mackey
functions.  Functions on $K_\tau$ pull back to $K_\tau$-invariant
functions on $K$ and in this way, $C^\infty_{K_\tau}(K,W)$ becomes a
$C^\infty(\eO_\tau)$-module.

\begin{lemma}\label{lem:sections-are-mackey-functions}
  $\Gamma(K \times_{K_\tau} W)$ and $C^\infty_{K_\tau}(K,W)$ are
  isomorphic as $C^\infty(\eO_\tau)$-modules.
\end{lemma}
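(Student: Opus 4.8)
The plan is to make the standard ``associated bundle $\leftrightarrow$ equivariant function'' dictionary completely explicit and then check that it intertwines the two $C^\infty(\eO_\tau)$-module structures. I would start by fixing notation: $E_W = K\times_{K_\tau} W$ is the quotient of $K\times W$ by the $K_\tau$-action $h\cdot(k,w) = (kh^{-1}, h\cdot w)$; write $[k,w]$ for the class of $(k,w)$, so that $(kh, h^{-1}\cdot w)$ and $(k,w)$ represent the same point for every $h\in K_\tau$, and $\pi([k,w]) = kK_\tau$ is the projection onto $\eO_\tau\cong K/K_\tau$. To a Mackey function $f\in C^\infty_{K_\tau}(K,W)$ I would associate the map $\Phi(f)\colon \eO_\tau\to E_W$ defined by $\Phi(f)(kK_\tau) := [k, f(k)]$. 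This is well defined precisely because of the equivariance $f(kh) = h^{-1}\cdot f(k)$, and it is visibly a section of $\pi$. For smoothness I would use that $K\to K/K_\tau$ is a locally trivial principal $K_\tau$-bundle: on a trivializing open set $U$ with smooth local section $\sigma_U\colon U\to K$ one has $\Phi(f)|_U = [\sigma_U(\cdot), f(\sigma_U(\cdot))]$, a composite of smooth maps, so $\Phi(f)\in\Gamma(E_W)$.

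In the other direction, given $s\in\Gamma(E_W)$, I would note that for each $k\in K$ there is a \emph{unique} $w\in W$ with $s(kK_\tau) = [k,w]$ --- the only $h\in K_\tau$ fixing $(k,w)$ is $h = e$ --- and set $\Psi(s)(k) := w$. Equivariance of $\Psi(s)$ then follows from $s(khK_\tau)=s(kK_\tau)$ together with the relation $(kh, h^{-1}\cdot\Psi(s)(k))\sim(k,\Psi(s)(k))$ and the uniqueness, yielding $\Psi(s)(kh)=h^{-1}\cdot\Psi(s)(k)$; smoothness of $\Psi(s)$ again comes from local triviality, writing $k = \sigma_U(kK_\tau)\,\alpha(k)$ with $\alpha$ smooth valued in $K_\tau$ and using $\Psi(s)(k) = \alpha(k)^{-1}\cdot(\text{local expression of }s)$. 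It is then immediate from the definitions that $\Phi$ and $\Psi$ are mutually inverse and $\CC$-linear.

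Finally I would check compatibility with the module structures: a function $\phi\in C^\infty(\eO_\tau)$ acts on Mackey functions by $(\phi\cdot f)(k)=\phi(kK_\tau)f(k)$, which is again a Mackey function since $\phi(kK_\tau)$ is $K_\tau$-invariant, and it acts on sections by fiberwise scalar multiplication; since scalar multiplication on $W$ commutes with the linear $K_\tau$-action, one gets $\Phi(\phi\cdot f)=\phi\cdot\Phi(f)$, so $\Phi$ is the required isomorphism of $C^\infty(\eO_\tau)$-modules.

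I expect the only non-routine point to be the smoothness of $\Phi$ and $\Psi$, which reduces entirely to the local triviality of the principal bundle $K\to K/K_\tau$ (this remains true even when $K$ acts only locally effectively on $\eO_\tau$, as happens here for the universal cover); everything else is bookkeeping with the defining equivalence relation. A side remark to keep in mind is that when $W$ is infinite-dimensional --- as for the inducing representations of class (c) --- ``smooth'' must be read in the appropriate sense for maps into a Hilbert space, or one works directly with the $L^2$-completions used in the main text; none of this affects the algebraic identification asserted by the lemma.
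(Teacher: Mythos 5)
Your proof is correct and is essentially the paper's argument: both implement the standard dictionary between sections of the associated bundle $K\times_{K_\tau}W$ and $K_\tau$-equivariant functions $K\to W$. The only difference is presentational — the paper phrases the correspondence through a single (possibly only densely defined) coset representative $\sigma:\eO_\tau\to K$ via $\psi(p)=f(\sigma(p))$, whereas you work intrinsically with the classes $[k,w]$ and handle smoothness through local trivialisations of the principal bundle $K\to K/K_\tau$, which is if anything the more careful rendering of the same idea.
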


\begin{proof}[Proof (sketch)]
  Let $\sigma : \eO_\tau \to K$ be a coset representative. This may only
  be partially defined, but we will assume that it is defined in an open
  dense subset of $\eO_\tau$ which is of measure zero relative a
  $K$-invariant measure on $\eO_\tau$, which we will also assume
  exists.\footnote{By considering multiplier representations, we need
    only assume that the measure is quasi-invariant, but in the examples
    we have in mind, we will always have $K$-invariant measures.}

  If $f \in C^\infty_{K_\tau}(K,W)$, we define $\psi \in \Gamma(E_W)$ by
  $\psi(p) = f(\sigma(p))$ for all $p \in \eO_\tau$.  Conversely, if
  $\psi \in \Gamma(E_W)$, we define $f \in C^\infty_{K_\tau}(K,W)$ as
  follows: $f(\sigma(p) h) = h^{-1} \cdot \psi(p)$, where $h \in K_\tau$.
  This results in a $K_\tau$-equivariant function by construction and it
  is not hard to show that it is independent of the choice of coset
  representative.
\end{proof}

The vector bundle $E_W$ can also be described as a
homogeneous vector bundle $G \times_{K_\tau \ltimes T} W$, with the
action of $K_\tau \ltimes T$ on $W$ given by \eqref{eq:KT-on-W} and
its sections can therefore be described equivalently as
($K_\tau \ltimes T$)-equivariant functions $G \to W$.  In other words,
as in Lemma~\ref{lem:sections-are-mackey-functions}, we have an
isomorphism $C^\infty_{K_\tau}(K,W) \cong C^\infty_{K_\tau \ltimes
  T}(G,W)$ of $C^\infty(\eO_\tau)$-modules, where we lift
$C^\infty(\eO_\tau)$ to $G$ as the ($K_\tau \ltimes T$)-invariant
functions.  Under this isomorphism, a given $f \in
C^\infty_{K_\tau}(K,W)$ is sent to $F : G \to W$, defined by
\begin{equation}
  \label{eq:FGW}
  F(t k) := \chi_{k \cdot \tau}(t^{-1}) f(k)
\end{equation}
for all $t \in T$ and $k \in K$.  The function $F$ is $(K_\tau \ltimes
T)$-equivariant by construction.

The virtue of the Mackey functions $C^\infty_{K_\tau \ltimes T}(G,W)$
is that they admit a natural action of $G$ which is very easy to
describe.  For $F \in C^\infty_{K_\tau \ltimes T}(G,W)$ and all $g, g'
\in G$ we have
\begin{equation}
  \label{eq:G-action}
  (g \cdot F)(g') = F(g^{-1} g').
\end{equation}
Let $g = t h$ with $t \in T$ and $h \in K$ and choose $g'=
k \in K$.  Then,
\begin{align*}
  (t h \cdot F)(k) &= F(h^{-1} t^{-1} k)\\
                   &= F(h^{-1} k k^{-1} t^{-1} k)\\
                   &= \chi_\tau (k^{-1} t k) F(h^{-1} k)\\
                   &= \chi_{k\cdot \tau} (t) F(h^{-1}k).
\end{align*}
Therefore we conclude that $g = t h$ acts on a Mackey function $f : K \to
W$ as
\begin{equation}
  \label{eq:indrep-group-action}
  (g \cdot f)(k) = \chi_{k\cdot \tau}(t)\, f(h^{-1}k)
\end{equation}
where $k \in K$.

\begin{lemma}
  The transformed function $g\cdot f$ is $K_\tau$-equivariant.
\end{lemma}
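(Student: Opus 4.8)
The plan is to verify the defining equivariance condition $(g\cdot f)(kh') = h'^{-1}\cdot(g\cdot f)(k)$ for all $k\in K$ and all $h'\in K_\tau$, working directly from the explicit formula \eqref{eq:indrep-group-action}. Writing $g = th$ with $t\in T$ and $h\in K$, I would simply substitute $kh'$ for $k$ on the right-hand side of \eqref{eq:indrep-group-action}, obtaining
\begin{equation*}
  (g\cdot f)(kh') = \chi_{(kh')\cdot\tau}(t)\, f(h^{-1}kh'),
\end{equation*}
and then treat the two factors separately.

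For the character factor, the key point is that $h'\in K_\tau$ stabilises $\tau$, so $(kh')\cdot\tau = k\cdot(h'\cdot\tau) = k\cdot\tau$, whence $\chi_{(kh')\cdot\tau} = \chi_{k\cdot\tau}$. For the second factor, I invoke the $K_\tau$-equivariance of the original Mackey function $f$, namely $f(h^{-1}kh') = h'^{-1}\cdot f(h^{-1}k)$. Combining these, and using that multiplication by the scalar $\chi_{k\cdot\tau}(t)\in\U(1)$ commutes with the linear $K_\tau$-action on $W$, gives
\begin{equation*}
  (g\cdot f)(kh') = \chi_{k\cdot\tau}(t)\, h'^{-1}\cdot f(h^{-1}k) = h'^{-1}\cdot\big(\chi_{k\cdot\tau}(t)\, f(h^{-1}k)\big) = h'^{-1}\cdot (g\cdot f)(k),
\end{equation*}
which is exactly the asserted equivariance.

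Honestly, there is no real obstacle here: the statement is a one-line consequence of the fact that $K_\tau$ fixes $\tau$ (so the twisting character is unaffected by right translation by $K_\tau$) together with the equivariance already built into $f$ and the linearity of the $W$-action. The only thing worth spelling out for the reader is the interchange of the scalar $\chi_{k\cdot\tau}(t)$ with the $K_\tau$-action, which is immediate since the action on $W$ is complex-linear. I would present the computation as a short displayed chain of equalities as above and leave it at that.
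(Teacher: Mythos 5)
Your proof is correct and follows exactly the same route as the paper's: substitute $kh'$ into formula~\eqref{eq:indrep-group-action}, use $h'\cdot\tau=\tau$ to fix the character, use the $K_\tau$-equivariance of $f$ for the second factor, and commute the scalar past the linear action. Nothing to add.
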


\begin{proof}
  This follows from the fact that left- and right-multiplications
  commute and $h\cdot \tau = \tau$ for $h \in K_\tau$, so that
  \begin{align*}
    (g \cdot f)(k h') &= \chi_{k h' \cdot \tau}(t) f(h^{-1}k h')\\
                      &= \chi_{k \cdot\tau}(t) (h')^{-1} \cdot f(h^{-1}k)\\
                      &= (h')^{-1}\cdot \left( \chi_{k\cdot\tau}(t) f(h^{-1}k) \right)\\
                      &= (h')^{-1}\cdot (g\cdot f)(k).
  \end{align*}
\end{proof}

Therefore we get a representation of $G$ on $C^\infty_{K_\tau}(K,W)$.

\begin{proposition}
  \label{eq:unitarity-of-induced-rep}
  The representation of $G$ on $C^\infty_{K_\tau}(K,W)$ just described
  is unitary relative to the hermitian inner product
  \begin{equation}
    (f_1,f_2) = \int_{\eO_\tau} d\mu(k\cdot\tau) \left<f_1,f_2\right>_{W}(k\cdot \tau),
  \end{equation}
  where $d\mu$ is a $K$-invariant measure on $\eO_\tau$ and
  $\left<f_1,f_2\right>_{W}$ is function on $\eO_\tau$ which pulls
  back, via the orbit map $K \to \eO_\tau$ sending
  $k \mapsto k \cdot \tau$, to the function
  $k \mapsto \left<f_1(k),f_2(k)\right>_{W}$.
\end{proposition}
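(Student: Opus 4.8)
The plan is to verify that the $G$-action defined by equation~\eqref{eq:indrep-group-action} preserves the inner product, i.e.\ that $(g\cdot f_1, g\cdot f_2) = (f_1,f_2)$ for all $g\in G$. Since any $g$ factorises as $g = th$ with $t\in T$, $h\in K$, and since the representation is multiplicative, it suffices to check unitarity separately for $t\in T$ and for $h\in K$. First I would handle $t\in T$: here $(t\cdot f)(k) = \chi_{k\cdot\tau}(t) f(k)$, so the pointwise inner product is
\begin{equation}
  \left<(t\cdot f_1)(k),(t\cdot f_2)(k)\right>_W = \overline{\chi_{k\cdot\tau}(t)}\,\chi_{k\cdot\tau}(t)\left<f_1(k),f_2(k)\right>_W = \left<f_1(k),f_2(k)\right>_W,
\end{equation}
because $\chi_{k\cdot\tau}(t)$ is a unitary character, hence of modulus one. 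Integrating over $\eO_\tau$ against $d\mu$ gives the claim for translations, with no change of variables needed.

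Next I would handle $h\in K$: here $(h\cdot f)(k) = f(h^{-1}k)$, so
\begin{equation}
  (h\cdot f_1, h\cdot f_2) = \int_{\eO_\tau} d\mu(k\cdot\tau)\, \left<f_1(h^{-1}k),f_2(h^{-1}k)\right>_W(k\cdot\tau).
\end{equation}
The integrand, as a function on $\eO_\tau$, is the pullback under $k\mapsto k\cdot\tau$ of the function $p\mapsto \left<f_1(h^{-1}k),f_2(h^{-1}k)\right>_W$ evaluated at $p = k\cdot\tau$; this equals the function $\left<f_1,f_2\right>_W$ precomposed with the diffeomorphism $p\mapsto h^{-1}\cdot p$ of $\eO_\tau$. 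Changing variables $p' = h^{-1}\cdot p$ and using $K$-invariance of $d\mu$, the integral becomes $\int_{\eO_\tau} d\mu(p')\, \left<f_1,f_2\right>_W(p') = (f_1,f_2)$. The one point requiring a little care is that $\left<f_1(h^{-1}k),f_2(h^{-1}k)\right>_W$ is genuinely a well-defined function on $\eO_\tau$ (and not just on $K$): this follows because $W$ carries a $K_\tau$-invariant hermitian inner product, so the pointwise pairing descends along the $K_\tau$-coset projection, exactly as in the definition of $\left<f_1,f_2\right>_W$ in the statement.

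The main obstacle, such as it is, is bookkeeping rather than substance: one must be careful that the invariant measure $d\mu$ exists on $\eO_\tau$ (assumed, and checked case-by-case in Section~\ref{sec:invariant-measures}) and that all the identifications between Mackey functions on $K$ and sections of $E_W$ are compatible with the inner products — but this is precisely Lemma~\ref{lem:sections-are-mackey-functions} together with the $K_\tau$-invariance of $\left<-,-\right>_W$. I would also remark that the argument shows unitarity on the dense subspace of smooth compactly supported (or $L^2$) Mackey functions, and the representation then extends by continuity to the Hilbert space completion, on which it is unitary; irreducibility, when $W$ is irreducible, is the content of Mackey's theory and is invoked rather than proved here.
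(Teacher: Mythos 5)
Your proposal is correct and follows essentially the same route as the paper's proof: the modulus-one character disposes of the $T$-part, the change of variables together with $K$-invariance of $d\mu$ disposes of the $K$-part, and the well-definedness of the integrand on $\eO_\tau$ rests on the $K_\tau$-invariance of $\left<-,-\right>_W$ combined with the equivariance of the Mackey functions. The only cosmetic difference is that you check $t\in T$ and $h\in K$ separately and invoke multiplicativity, whereas the paper runs the single computation for $g=th$ directly; the content is identical.
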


\begin{proof}
  By assumption, $W$ is a unitary representation of $K_\tau$ with
  hermitian inner product $\left<-,-\right>_{W}$ and hence for all
  $h \in K_\tau$ and $k \in K$,
  \begin{align*}
    \left<f_1(kh),f_2(kh)\right>_{W} &=  \left<h^{-1}\cdot f_1(k), h^{-1} \cdot f_2(k)\right>_{W} &\tag{since $f_1,f_2$ are equivariant}\\
                                     &=  \left<f_1(k), f_2(k)\right>_{W}, &\tag{since $\left<-,-\right>_{W}$ is $K_\tau$-invariant}\\
  \end{align*}
  hence the function is the pull-back of a unique function on $\eO_\tau$
  and it is that function that we integrate against the invariant
  measure.
  
  Unitarity of the $G$-representation now follows because with $g = th$
  \begin{align*}
    (g \cdot f_1, g\cdot f_2)  &= \int_{\eO_\tau} d\mu(k \cdot \tau) \left<g \cdot f_1, g \cdot f_2 \right>_{W}(k \cdot \tau)\\
                               &= \int_{\eO_\tau} d\mu(k \cdot \tau) \left< \chi_{k \cdot \tau}(t) f_1(h^{-1} k), \chi_{k \cdot \tau}(t) f_2(h^{-1} k) \right>_{W}\\
                               &= \int_{\eO_\tau} d\mu(k \cdot \tau) \left< f_1(h^{-1} k), f_2(h^{-1} k) \right>_{W} &\tag{since $\left<-,-\right>_{W}$ is hermitian}\\
                               &= \int_{\eO_\tau} d\mu(k \cdot \tau) \left< f_1, f_2\right>_{W}(h^{-1}k\cdot  \tau)\\
                               &= \int_{\eO_\tau} d\mu(hk'\cdot\tau)) \left< f_1, f_2\right>_{W}(k'\cdot\tau) &\tag{changing variables to $k'= h^{-1}k$}\\
                               &= \int_{\eO_\tau} d\mu(k'\cdot\tau) \left< f_1, f_2\right>_{W}(k'\cdot\tau) &\tag{invariance of the measure}\\
                               &= (f_1, f_2).
  \end{align*}
\end{proof}

We can transport this unitary representation on
$C^\infty_{K_\tau}(K,W)$ to a unitary representation of $G$ on
sections of $E_W$.  Explicitly, if $\psi \in \Gamma(E_W)$ and $g \in
G$,
\begin{equation}
  \label{eq:G-action-on-sections}
  (g \cdot \psi)(p) := (g \cdot F)(\sigma(p)) = F(g^{-1}\sigma(p)).
\end{equation}
Using the product
\begin{equation}
  g^{-1}\sigma(p) = \sigma(g^{-1}\cdot p) h(g^{-1},p),
\end{equation}
where $h(g^{-1},p) \in K_\tau \ltimes T$, we can rewrite
equation~\eqref{eq:G-action-on-sections} as
\begin{equation}
  \label{eq:G-action-sections-final}
  (g\cdot \psi)(p) = h(g^{-1},p)^{-1} \cdot \psi(g^{-1}\cdot p).
\end{equation}
The function $h: G \times \eO_\tau \to K_\tau \ltimes T$ satisfies
some cocycle properties which guarantee that the above is indeed a
representation of $G$.

It is a fundamental result in this subject that if $W$ is both unitary
and irreducible as a representation of $K_\tau$, then so is (the
Hilbert space completion of the square-integrable sections in)
$\Gamma(K \times_{K_\tau} W)$ as a representation of $G$, which we
denote by $L^2(\eO_\tau, K \times_{K_\tau} W)$.  We proved
unitarity in Proposition~\ref{eq:unitarity-of-induced-rep}.
Irreducibility follows from a straightforward application of the SNAG
theorem, as we now briefly sketch.

\begin{proposition}
  \label{prop:irreducibility-of-induced-rep}
  Let $\eH = L^2(\eO_\tau,K\times_{K_\tau} W)$ and let $U : G \to
  \U(\eH)$ be the unitary induced representation constructed above.
  Then if $W$ is an irreducible representation of $K_\tau$, $\eH$ is
  an irreducible representation of $G$.
\end{proposition}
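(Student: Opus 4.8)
The plan is to follow the classical route via the SNAG (Stone--Naimark--Ambrose--Godement) spectral theorem together with Schur's lemma, which is the standard way to deduce irreducibility of a Mackey-induced representation from irreducibility of the inducing representation. I would work throughout in the Mackey-function picture $\eH \cong L^2_{K_\tau}(K,W)$ of Lemma~\ref{lem:sections-are-mackey-functions}, where by \eqref{eq:indrep-group-action} the abelian subgroup $T$ acts diagonally, $(U(t)f)(k) = \chi_{k\cdot\tau}(t)\,f(k)$. The entire argument then reduces to showing that the commutant $U(G)'$ inside the bounded operators on $\eH$ consists only of scalar multiples of $\id$.

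First I would apply the spectral theorem to the restriction $U|_T$: since $T \cong \RR^n$ is abelian, $U|_T$ is the integral of a unique projection-valued measure $P$ on the Pontryagin dual $\widehat T \cong \t^*$. From the explicit diagonal action above, $P$ is concentrated on the orbit $\eO_\tau \subset \t^*$ and is precisely multiplication by the indicator functions of Borel subsets of $\eO_\tau$; equivalently, the von Neumann algebra generated by $U(T)$ is the diagonal algebra $L^\infty(\eO_\tau,d\mu)$ acting on $\eH$. Consequently any bounded $A$ commuting with $U(G)$ commutes with this diagonal algebra, so by the standard theory of decomposable operators on a direct integral $A = \int^\oplus_{\eO_\tau} A_p\, d\mu(p)$ for a measurable field $p \mapsto A_p$ of bounded operators on the fibres, which after transporting by the coset representative $\sigma$ we regard as $A_p \in \End(W)$.

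Next I would use commutation of $A$ with $U(k)$ for $k \in K$. By \eqref{eq:indrep-group-action} (with $t=e$), $U(k)$ covers the $K$-action $p \mapsto k\cdot p$ on $\eO_\tau$ and acts on fibres through the $K_\tau$-valued cocycle $h(k^{-1},p)$ of \eqref{eq:g-action-on-coset-rep}. Writing out $A\,U(k) = U(k)\,A$ on the direct integral, the base part is automatic (the measure $d\mu$ is $K$-invariant) and the fibre part says that, for $\mu$-almost every $p$, the operator $A_p$ intertwines the $K_\tau$-representation on $W$, i.e.\ $A_p \in \Hom_{K_\tau}(W,W)$. Since $W$ is unitary and irreducible, Schur's lemma forces $A_p = \lambda(p)\,\id_W$ for a measurable scalar $\lambda$, and transitivity of $K$ on $\eO_\tau$ together with $K$-equivariance forces $\lambda$ to be $\mu$-a.e.\ constant. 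Hence $A$ is scalar, so $U(G)' = \CC\,\id$ and $\eH$ is irreducible.

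I expect the main obstacle to be the bookkeeping in the last two steps: identifying $A$ with a genuine measurable field of fibre operators (the measure-zero locus where $\sigma$ is undefined must be handled, exactly as in the sketch of Lemma~\ref{lem:sections-are-mackey-functions}), and checking that commutation with $U(K)$ really translates into the fibrewise intertwining condition via the cocycle $h$. One should also note that in some of our cases (e.g.\ the little group $\Spin(2)\ltimes\RR^3$) $W$ is itself infinite-dimensional, so Schur's lemma is used in the form valid for unitary irreducible representations rather than the finite-dimensional version; this is harmless but worth stating. A heavier but cleaner alternative would be to invoke Mackey's imprimitivity theorem directly, presenting $U$ as induced from the natural system of imprimitivity based on $\eO_\tau$ and reading off irreducibility from that of $W$.
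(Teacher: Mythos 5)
Your proposal is correct and follows essentially the same route as the paper's proof: reduce irreducibility to triviality of the commutant, use the SNAG theorem on the restriction to the abelian subgroup $T$ to see that any commuting operator acts fibrewise over $\eO_\tau$, then use commutation with the $K$-action (transitivity on the orbit) together with Schur's lemma on the irreducible inducing representation $W$ to conclude the operator is a scalar. The only cosmetic difference is the order of the last two steps — the paper first shows $A(p)=A(\tau)$ via the coset representatives and then applies Schur once at $\tau$, whereas you apply Schur fibrewise and then use transitivity to make the scalar constant — but these are interchangeable.
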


\begin{proof}[Proof (sketch)]
  Let $\eH'\subset \eH$ be a $G$-invariant subspace of $\eH$.  Then the
  orthogonal projection onto $\eH'$ is a continuous operator on $\eH$
  which commutes with the action of $G$.  We are done if we show that
  any such operator is necessarily a multiple of the identity, so that
  $\eH'$ is not then a proper subspace.

  Let $A$ be a continuous operator on $\eH$ which commutes with the
  action of $G$.  In particular it commutes with the action of the
  translation subgroup $T \subset G$.  By the SNAG theorem (see, e.g.,
  \cite[Section~6.2]{MR0495836}), $A$ acts pointwise
  $(A\cdot \psi)(p) = A(p) \cdot \psi(p)$ for all
  $p = \sigma(p) \cdot \tau \in \eO_\tau$ and since $A$ commutes with
  the action of $\sigma(p)$, it follows that
  \begin{align*}
    A(p) \cdot \psi(p) &= (A \cdot \psi)(p)\\
                       &= \left( U(\sigma(p)^{-1}) \cdot A \cdot \psi \right)(\tau)\\
                       &= (A \cdot U(\sigma(p)^{-1}) \cdot \psi)(\tau)\\
                       &= A(\tau) \cdot (U(\sigma(p)^{-1}) \cdot \psi)(\tau)\\
                       &= A(\tau) \cdot \psi(p),
  \end{align*}
  so that $A(p) = A(\tau)$ for all $p \in \eO_\tau$.  But $A(\tau) \in
  \End W$ commutes with the action of $K_\tau$ and since by
  hypothesis $W$ is a complex irreducible representation of $K_\tau$,
  Schur's Lemma guarantees that $A(\tau)$ is a multiple of the
  identity, and hence so is $A$.
\end{proof}

\subsection{Induced representations as free field theories }
\label{sec:induc-repr-as}

Mackey theory exhibits UIRs of $G$ as (square-integrable) sections of
bundles over $\eO_\tau$, which in the case of kinematical groups such
as the Poincaré, Galilei or Carroll groups, is an orbit in momentum
space.  It is however often desirable to exhibit the representation as
fields on the kinematical spacetime: Minkowski, Galilei or Carroll,
say.  Such a spacetime is a homogeneous space of $G$ which is
$G$-equivariantly diffeomorphic to $G/K$.  Such fields are therefore
sections of homogeneous vector bundles over $G/K$ associated to
representations of $K$.  Since we only have $W$, which is a
representation of $K_\tau$, this entails a \emph{choice}: namely, that
of a representation $V$ of $K$ which, when restricted to $K_\tau$,
contains a subrepresentation isomorphic to $W$.  The representation
$V$ need not be unitary, of course.

Given $f \in C^\infty_{K_\tau}(K,V)$ we get $F \in C^\infty_{K_\tau
  \ltimes T}(G,V)$ as before, by having $T$ act via the character
$\chi_\tau$.  We now define $\widehat F : G \to V$ by
\begin{equation}
  \label{eq:pre-Fourier-transform}
  \widehat F(g) := \int_{\eO_\tau} d\mu(p) \sigma(p) \cdot F(g \sigma(p))
\end{equation}
where $d\mu$ is a $K$-invariant measure on $\eO_\tau$ and we are
assuming that the coset representative $\sigma$ is defined in the
complement of a set of measure zero.  We now show that $\widehat F$ is
$K$-equivariant.

\begin{proposition}
 $\widehat F \in C^\infty_K(G,V)$.
\end{proposition}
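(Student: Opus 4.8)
The plan is to verify directly that $\widehat F$ obeys the defining equivariance property of $C^\infty_K(G,V)$, namely $\widehat F(gk)=k^{-1}\cdot\widehat F(g)$ for all $g\in G$ and $k\in K$; smoothness of $\widehat F$ is then the routine matter of differentiating under the integral sign, granted the integrability assumptions on the sections we work with, so the genuine content is the equivariance.

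First I would insert $k$ into the argument and re-express $k\,\sigma(p)$ through the coset representative at the translated point. Since $\sigma$ represents the cosets of $K/K_\tau\cong\eO_\tau$, there is a unique element $h(k,p):=\sigma(k\cdot p)^{-1}k\,\sigma(p)\in K_\tau$ with $k\,\sigma(p)=\sigma(k\cdot p)\,h(k,p)$. Using that $F$ is $(K_\tau\ltimes T)$-equivariant --- in particular $F(g'h)=h^{-1}\cdot F(g')$ for $h\in K_\tau$ --- this gives $F\bigl(gk\,\sigma(p)\bigr)=h(k,p)^{-1}\cdot F\bigl(g\,\sigma(k\cdot p)\bigr)$. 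The key algebraic identity is then
\[
  \sigma(p)\,h(k,p)^{-1}=\sigma(p)\bigl(\sigma(k\cdot p)^{-1}k\,\sigma(p)\bigr)^{-1}=k^{-1}\sigma(k\cdot p),
\]
which --- since all of $\sigma(p)$, $h(k,p)$, $k$ act on $V$ through the fixed $K$-representation --- yields $\sigma(p)\cdot F(gk\,\sigma(p))=k^{-1}\cdot\bigl(\sigma(k\cdot p)\cdot F(g\,\sigma(k\cdot p))\bigr)$. Pulling the linear operator $k^{-1}$ out of the integral, I arrive at
\[
  \widehat F(gk)=k^{-1}\cdot\int_{\eO_\tau}d\mu(p)\;\sigma(k\cdot p)\cdot F\bigl(g\,\sigma(k\cdot p)\bigr).
\]

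Finally I would change variables to $q=k\cdot p$ and invoke the $K$-invariance of $\mu$ to replace $d\mu(p)$ by $d\mu(q)$, recognising the right-hand side as $k^{-1}\cdot\widehat F(g)$, which is exactly the claim. The one place needing care --- and the main obstacle --- is the measure-zero locus on which $\sigma$ fails to be defined: since $\mu$ is $K$-invariant, left translation by $k$ carries this null set to a null set, so the identities above hold for $\mu$-almost every $p$, which is enough for the integral manipulations. I expect this bookkeeping, rather than any single step of the computation, to be the subtle point; everything else follows directly from the definition of $\widehat F$, from Mackey-function equivariance of $F$, and from invariance of the measure.
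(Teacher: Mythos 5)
Your proof is correct and follows essentially the same route as the paper: decompose $k\sigma(p)=\sigma(k\cdot p)h(k,p)$ with $h(k,p)\in K_\tau$, use the $K_\tau$-equivariance of $F$, observe that $\sigma(p)h(k,p)^{-1}=k^{-1}\sigma(k\cdot p)$, and conclude by the change of variables $p\mapsto k\cdot p$ and the $K$-invariance of the measure. Your extra remark about the null set where $\sigma$ is undefined is a sensible (if standard) piece of bookkeeping that the paper simply absorbs into its standing assumptions.
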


\begin{proof}
  For all $k \in K$ and $g \in G$,
  \begin{equation}
    \widehat F(gk) = \int_{\eO_\tau} d\mu(p) \sigma(p) F(gk\sigma(p)).
  \end{equation}
  We now have
  \begin{equation}
    k\sigma(p) = \sigma(k\cdot p) h(k,p)
  \end{equation}
  for some $h(k,p) \in K_\tau$.  Since $F$ is in particular
  $K_\tau$-equivariant,
  \begin{equation}
    F(gk\sigma(p)) = F(g\sigma(k\cdot p)h(k,p)) = h(k,p)^{-1} \cdot F(g\sigma(k\cdot p)).
  \end{equation}
  But now notice that $\sigma(p) h(k,p)^{-1} = k^{-1} \sigma(k\cdot p)$,
  so that
  \begin{equation}
    \widehat F(gk) = \int_{\eO_\tau} d\mu(p) k^{-1}\sigma(k\cdot p) F(g\sigma(k\cdot p)).
  \end{equation}
  Let $k \cdot p = p'$.  By the invariance of the measure, $d\mu(p) =
  d\mu(p')$ and hence
  \begin{equation}
    \widehat F(gk) = \int_{\eO_\tau} d\mu(p') k^{-1}\sigma(p')
    F(g\sigma(p')) = k^{-1} \cdot \widehat F(g).
  \end{equation}
\end{proof}

It follows that $\widehat F$ defines a section of the homogeneous
vector bundle $G \times_K V$ over $G/K$, which we can describe
explicitly as follows.   Let $\zeta : G/K \to T$  be a coset
representative.  Again this may only be locally defined, but we assume
it is defined on the complement of a set of measure zero on $G/K$
relative to a $G$-invariant measure.  We define a section $\phi$ of $G
\times_K V$ by
\begin{equation}
  \phi(x) := \widehat F(\zeta(x)) = \int_{\eO_\tau} d\mu(p) \sigma(p)
  \cdot F(\zeta(x)\sigma(p)).
\end{equation}
Since $T$ is a normal subgroup, we can write $\zeta(x)\sigma(p) =
\sigma(p) \zeta(x')$, where
\begin{equation}
  \zeta(x') = \sigma^{-1}(p) \zeta(x) \sigma(p),
\end{equation}
and moreover
\begin{equation}
  \chi_\tau(\zeta(x')^{-1}) =\chi_{\sigma(p)\cdot \tau} (\zeta(x)^{-1}),
\end{equation}
so that
\begin{equation}
  \label{eq:fourier-transform}
  \phi(x) =\int_{\eO_\tau} d\mu(p) \chi_{\sigma(p)\cdot \tau} (\zeta(x)^{-1}) \sigma(p) \cdot \psi(p),
\end{equation}
where we have used that $F(\sigma(p)) = \psi(p)$.  If $\zeta(x) =
\exp(X)$, then
\begin{equation}
  \chi_{\sigma(p)\cdot \tau}(\zeta(x)^{-1}) = e^{-i \left<
      \sigma(p)\cdot\tau, X\right>}
\end{equation}
so that
\begin{equation}
  \phi(x) = \int_{\eO_\tau} d\mu(p) e^{-i \left<\sigma(p)\cdot\tau, X\right>} \sigma(p) \cdot \psi(p)
\end{equation}
is seen to be a group-theoretical generalisation of the Fourier
transform: it relates a section $\psi$ of $K \times_{K_\tau} W$ over
$\eO_\tau$ to a section $\phi$ of $G \times_K V$ over $G/K$.  It bears
reminding that we have made a choice of representation $V$.  Other
choices (such as coset representatives) are immaterial.

Finally, the $G$-action on $\phi$ is given by
\begin{equation}
  \label{eq:G-action-fields}
  (g \cdot \phi)(x) := \widehat F(g^{-1}\zeta(x)).
\end{equation}
We may expand this using
\begin{equation}
  g^{-1} \zeta(x) = \zeta(g^{-1}\cdot x) k(g^{-1},x)
\end{equation}
where $k : G \times G/K \to K$ is thus defined.  Then
\begin{equation}
  \widehat F(g^{-1}\zeta(x)) = \widehat F(\zeta(g^{-1}\cdot x)
 k(g^{-1},x)) = k(g^{-1},x)^{-1} \cdot \widehat F(\zeta(g^{-1} \cdot
 x)),
\end{equation}
or, finally,
\begin{equation}
  \label{eq:G-action-fields-final}
  (g \cdot \phi)(x) = k(g^{-1},x)^{-1} \cdot \phi(g^{-1}\cdot x).
\end{equation}

In those cases where $V$ properly contains $W$, the representation of
$G$ on sections of $G \times_K V$ is not irreducible.  To restore
irreducibility we need to somehow project our fields to $W$.  This is
an algebraic operation on the fibre of $K \times_{K_\tau} V$ at the
identity coset in $\eO_\tau$.  We can then extend it to a
point-dependent projector to the sub-bundle $K \times_{K_\tau} W
\subset K \times_{K_\tau} V$ and via the generalised Fourier transform
they become (pseudo-)differential operators which ought to be
interpreted as free field equations for sections of $G\times_K V$.

To see how this goes about, let us assume that $W = \ker \Phi$ for
some $\Phi \in \End V$.  (This does not mean that $\Phi$ is
$K_\tau$-equivariant, by the way.)  Let $F \in C^\infty_{K_\tau
  \ltimes T}(G,V)$ actually take values in $W$, so that it is in the
image of the natural embedding $C^\infty_{K_\tau \ltimes T}(G,W)
\subset C^\infty_{K_\tau \ltimes T}(G,V)$.  Then for all $g \in G$,
$\Phi F(g)=0$ and hence integrating,
\begin{equation}
  \int_{\eO_\tau} d\mu(p) \sigma(p) \Phi F(g\sigma(p)) = 0.
\end{equation}
In particular this is true for $g = \zeta(x)$.  We can rewrite this as
\begin{equation}
  0 = \int_{\eO_\tau} d\mu(p) \underbrace{\sigma(p) \Phi \sigma(p)^{-1}}_{=:\Phi_p} \sigma(p) F(\zeta(x)\sigma(p)).
\end{equation}
As above, we have that $\zeta(x)\sigma(p) = \sigma(p)\zeta(x')$ and hence
\begin{align*}
  0 &= \int_{\eO_\tau} d\mu(p) \Phi_p \sigma(p) F(\sigma(p)\zeta(x')\\
    &= \int_{\eO_\tau} d\mu(p) \chi_\tau(\zeta(x')^{-1}) \Phi_p \sigma(p) F(\sigma(p)) & \tag{using the equivariance of $F$}\\
    &=\int_{\eO_\tau} d\mu(p) \chi_{\sigma(p)\cdot \tau}(\zeta(x))^{-1} \Phi_p \sigma(p) \psi(p).
\end{align*}
The above integral defines the action of a pseudo-differential
operator $\widehat\Phi_x$ on the field $\phi(x)$ given by
\begin{align*}
  \widehat\Phi_x \phi(x) &= \widehat \Phi_x  \int_{\eO_\tau} d\mu(p) \chi_{\sigma(p)\cdot \tau} (\zeta(x)^{-1}) \sigma(p) \cdot \psi(p) &\tag{by equation~\eqref{eq:fourier-transform}}\\
  &= \int_{\eO_\tau} d\mu(p) \chi_{\sigma(p)\cdot \tau}(\zeta(x))^{-1} \Phi_p \sigma(p) \psi(p).
\end{align*}
In some cases, depending on the form of the Fourier-like kernel
$\chi_{\sigma(p)\cdot   \tau}(\zeta(x))^{-1}$ and the form of
$\Phi_p$, $\widehat\Phi_x$ is an honest differential operator; but in
any case, we obtain a field equation (which may be non-local):
$\widehat\Phi_x \phi(x) = 0$.

\section{Hopf charts on $\SU(2)$}
\label{sec:hopf-charts-su2}

In this appendix we record some useful charts on $\SU(2)$ adapted to
the Hopf fibration $\SU(2)\to S^2$ which play a rôle in our
description of UIRs of the Carroll group of class $\Romanbar{V}$.

The Lie group $\SU(2)$ is diffeomorphic to the $3$-sphere and this is
made transparent by embedding the $3$-sphere in $\CC^2$ as the unit
sphere: $S^3 = \left\{(z_1,z_2) \in \CC^2 ~\middle |~ |z_1|^2 + |z_2|^2 =
  1\right\}$ and then identifying $(z_1,z_2) \in S^3 \subset \CC^2$
with the special unitary matrix $\begin{pmatrix} z_1 & z_2 \\ -\zbar_2
  & \zbar_1\end{pmatrix}$.  Let us define the following open subsets
of $S^3$:
\begin{equation}
  \begin{split}
    U_1 &= \left\{ (z_1,z_2) \in S^3   ~\middle |~ z_1 \neq 0\right\}\\
    U_2 &= \left\{ (z_1,z_2) \in S^3   ~\middle |~ z_2 \neq 0\right\}.
  \end{split}
\end{equation}
Clearly, $S^3 = U_1 \cup U_2$.  We define surjective maps
$\pi_1 : U_1 \to \CC$ and $\pi_2 : U_2 \to \CC$ by $\pi_1(z_1,z_2) =
z_2/z_1$ and $\pi_2(z_1,z_2) = z_1/z_2$, which are nothing but the
restriction of the Hopf fibration $\SU(2) \to S^2$ to each of $U_1$
and $U_2$ composed with stereographic projection from (the complement
of a point in) $S^2$ to $\CC$.

For $z \in \CC$, the fibre $\pi_1^{-1}(z)$ consists of those
$(z_1,z_2)$ such that $z_2/z_1 = z$ and $|z_1|^2 + |z_2|^2 = 1$; that
is,
\begin{equation}
  \pi_1^{-1}(z) = \left\{ \left( \frac{\zeta}{\sqrt{1 + |z|^2}},
      \frac{z \zeta}{\sqrt{1 + |z|^2}} \right) ~ \middle |~  |\zeta| =
  1\right\},
\end{equation}
which is diffeomorphic to a circle.  Similarly,
\begin{equation}
  \pi_2^{-1}(z) = \left\{ \left( \frac{z \zeta}{\sqrt{1 + |z|^2}},
      \frac{\zeta}{\sqrt{1 + |z|^2}} \right) ~ \middle |~  |\zeta| =
  1\right\}.
\end{equation}
This allows us to establish charts\footnote{We use the word loosely,
  since $\CC \times S^1$ is not diffeomorphic to an open subset of
  $\RR^3$.} $\varphi_1 : U_1 \to \CC \times S^1$ and $\varphi_2 : U_2
\to \CC \times S^1$ by
\begin{equation}
  \varphi_1(z_1,z_2) = \left( \frac{z_2}{z_1}, \frac{z_1}{|z_1|}
  \right)\qquad\text{and}\qquad
  \varphi_2(z_1,z_2) = \left( \frac{z_1}{z_2}, \frac{z_2}{|z_2|}
  \right),
\end{equation}
whose inverses give parametrisations of $U_1$ and $U_2$ in terms of
$\CC \times S^1$:
\begin{equation}
  \varphi_1^{-1}(z,\zeta) = \left(\frac{\zeta}{\sqrt{1+|z|^2}},
    \frac{z \zeta}{\sqrt{1+|z|^2}}  \right)
  \qquad\text{and}\qquad
  \varphi_2^{-1}(z,\zeta) = \left(\frac{z \zeta}{\sqrt{1+|z|^2}},
    \frac{\zeta}{\sqrt{1+|z|^2}}  \right).
\end{equation}
On the overlap $U_1 \cap U_2$, we have that the transition functions
are given by
\begin{equation}
  \varphi_1 \circ \varphi_2^{-1} : (z,\zeta) \mapsto \left( \frac1z, \frac{z}{|z|}\zeta \right)
\end{equation}
and a formally identical expression for $\varphi_2 \circ \varphi_1^{-1}$.

Let $g_1: \CC \times S^1 \to \SU(2)$ and $g_2 : \CC \times S^1 \to
\SU(2)$ be the compositions of the parametrisations with the
identification between $S^3 \subset \CC^2$ and $\SU(2)$.  Explicitly,
\begin{equation}
  g_1(z,\zeta) = \frac1{\sqrt{1 + |z|^2}}
  \begin{pmatrix}
    \zeta & z \zeta \\-\zbar \zeta^{-1} & \zeta^{-1}
  \end{pmatrix}
\end{equation}
and
\begin{equation}
  g_2(z,\zeta) = \frac{1}{\sqrt{1 + |z|^2}}
  \begin{pmatrix}
    z \zeta & \zeta \\ - \zeta^{-1} & \zeta^{-1} z
  \end{pmatrix}.
\end{equation}
We may use these maps to pull-back to $\CC \times S^1$ the
left-invariant (say) Maurer--Cartan one-form on $\SU(2)$.  Let us work
with $g_2$ for definiteness:
\begin{equation}
  g_2^{-1} dg_2 = \frac{1}{1+|z|^2}
  \begin{pmatrix}
    \tfrac12 (\zbar d z - z d\zbar) + (|z|^2-1) \frac{d\zeta}{\zeta} &  -d\zbar + 2 \zbar \frac{d\zeta}{\zeta} \\
     dz + 2 z \frac{d\zeta}{\zeta} & -\tfrac12 (\zbar d z - z d\zbar)- (|z|^2-1) \frac{d\zeta}{\zeta}
   \end{pmatrix}
\end{equation}
The round metric on $S^3$ agrees with the natural bi-invariant metric
on $\SU(2)$ (both defined up to scale), whose volume form is given by
\begin{equation}
  \label{eq:volume-form-round-3-sphere}
  \mathrm{dvol} = -\tfrac13 \Tr \left(g_2^{-1}dg_2\right)^3 = \frac{2 dz \wedge
  d\zbar}{(1+|z|^2)^2} \frac{d\zeta}{\zeta},
\end{equation}
which agrees with the invariant measure in the inner product of the
Carroll UIRs of class $\Romanbar{V}$.

\providecommand{\href}[2]{#2}\begingroup\raggedright\endgroup


\end{document}